\definecolor{light-gray}{gray}{0.70}
\definecolor{dark-gray}{gray}{0.40}
\definecolor{very-light-gray}{gray}{0.90}
\pgfplotsset{width=7cm,compat=1.18}\usepgfplotslibrary{fillbetween}
\newtheorem{theorem}{Theorem}[section]
\newtheorem{proposition}[theorem]{Proposition}
\newtheorem{remark}[theorem]{Remark}
\newtheorem{lemma}[theorem]{Lemma}
\newtheorem{definition}[theorem]{Definition}
\theoremstyle{definition}
\newtheorem{example}{Example}[section]
\newcommand{\bbI}{\mathbb{I}}
\newcommand{\ri}{\mathrm{i}}
\newcommand{\cS}{\mathscr{S}}
\newcommand{\cD}{\mathcal{D}}
\newcommand{\cC}{\mathcal{C}}
\newcommand{\cR}{\mathcal{R}}
\newcommand{\C}{\mathbb{C}}
\newcommand{\R}{\mathbb{R}}
\newcommand{\N}{\mathbb{N}}
\newcommand{\Z}{\mathbb{Z}}
\newcommand\ba {{ \mathbf a}}
\newcommand\bk{{\mathbf k}}
\newcommand\bK{{\mathbf K}}
\newcommand\bR{{\mathbf R}}
\newcommand\bx{{\mathbf x}}
\newcommand\by{{\mathbf y}}
\newcommand\bnull{{\mathbf 0}}
\def\cA{{\mathcal A}}
\def\cC{{\mathcal C}}
\def\cD{{\mathcal D}}
\def\cF{{\mathcal F}}
\def\cH{{\mathcal H}}
\def\cM{{\mathcal M}}
\def\cN{{\mathcal N}}
\def\cR{{\mathcal R}}
\def\cS{{\mathcal S}}
\def\sC{{\mathscr{C}}}
\def\rd{{\mathrm{d}}}
\def\re{{\mathrm{e}}}
\def\ri{{\mathrm{i}}}
\newcommand\1{{\ensuremath {\mathds 1} }} 
\def\ssH{{\mathsf{H}}}
\def\ssh{{\mathsf{h}}}
\def\sst{{\mathsf{t}}}
\def\bra{\langle}
\def\ket{\rangle}
\def\bulk{{\rm{bulk}}}
\def\edge{{\rm{edge}}}
\def\ess{{\rm{ess}}}
\def\cut{{\rm{cut}}}
\def\Sf{{\mathrm{Sf}}}
\def\Ker{{\mathrm{Ker}}}
\def\op{{\rm{op}}}
\def\Ran{\mathrm{Ran}}
\def\rank{\mathrm{rank}}
\def\Lat{\mathbb{L}}
\def\RLat{\mathbb{L}^*}
\def\dist{\operatorname{dist}}
\newcommand{\norm}[1]{\left\| #1 \right\|}
\newcommand{\myfootnote}[1]{
	\renewcommand{\thefootnote}{}
	\footnotetext{\scriptsize#1}
	\renewcommand{\thefootnote}{\arabic{footnote}}
}
\title[Edge states for periodic tight--binding operators with soft walls]{Edge states for tight--binding operators with soft walls}
\author{Camilo Gómez Araya \qquad David Gontier \qquad Hanne Van Den Bosch}
\date{\today}
\begin{document}

\myfootnote{Camilo Gómez Araya: CEREMADE, Université Paris-Dauphine, PSL University,75016 Paris, France;\\
email: \href{camilo.gomez-araya@dauphine.eu}{camilo.gomez-araya@dauphine.eu}}
\myfootnote{David Gontier: CEREMADE, Université Paris-Dauphine, PSL University,75016 Paris, France \& ENS/PSL University, DMA, F-75005, Paris, France;\\
email: \href{gontier@ceremade.dauphine.fr}{gontier@ceremade.dauphine.fr}}
\myfootnote{Hanne Van Den Bosch: Departamento de Ingenier\'{\i}a Matem\'atica \& Center for Mathematical Modeling, Facultad de Ciencias F\'{\i}sicas y Matem\'aticas, Universidad de Chile and CNRS IRL 2807, Beauchef 851, Piso 5, Santiago, Chile;\\
email: \href{hvdbosch@dim.uchile.cl}{hvdbosch@dim.uchile.cl}}

\begin{abstract}
		We study one-- and two--dimensional periodic tight-binding models in the presence of a potential that grows to infinity in one direction, hence preventing the particles to escape in this direction (the soft wall). We prove that a spectral flow appears in these edge models, as the wall is shifted with respect to the lattice. We identity this flow with the number of Bloch bands. This provides a lower bound for the number of edge states appearing in such models. For the two-dimensional case, we compute the spectral flow for edges that have any rational orientation with respect to the lattice. The results are illustrated by applying them to the one-dimensional SSH chain and the Wallace model for graphene. 
		
		\bigskip
		\noindent \sl \copyright~2025 by the authors. This paper may be reproduced, in its entirety, for non-commercial purposes.
\end{abstract}


\maketitle    

\tableofcontents


\section{Introduction}

The goal of this paper is to study \emph{edge states} in terminated periodic structures described by tight--binding (TB) Hamiltonians. Such TB operators are extensively studied in condensed matter, as they provide simple models which correctly reproduce the physics of more complex ones (represented {\em e.g.} by Schrödinger operators acting on the continuum). This research is motivated by models from solid state physics such as the one-dimensional Su-Schrieffer-Heeger (SSH) chain \cite{SuSchHee-79} and the two-dimensional Wallace model for graphene \cite{Wal-47}. Indeed, it is known for these models that edge states may appear when these periodic systems are restricted to a halfspace. For the Wallace model for instance, the presence of these edge states depends strongly on the direction of the cut. Notable examples are the {\em zigzag cut}, where a flat band of edge modes appear, and the {\em armchair cut}, where no edge states appear, see~\cite{AkhBee-08, NetGuiPer-09, DelUllMon-11, FefLeeWei-16, FefFliWei-22, FefFliWei-24}. 
Even in the simple one--dimensional SSH model,
an edge state appears or disappears depending on which of the nonequivalent \emph{bonds} is cut.
In these one-dimensional systems, the topological versus \emph{trivial} nature of experimentally observed edge states is still a topic of hot debate (see e.g. \cite{PanDas-20} for theoretical considerations and \cite{ConJarLeg-22} for recent experimental results) ever since the first reports of Majorana fermions in superconducting wires \cite{DasRonMos-12, MouZuoFro-12}.

\medskip

In the present article, we give a framework to study the appearance of edge modes in general TB models arising in a wide variety of physical situations, such as graphene sheets, superconducting wires or optical lattices. However, instead of working with {\em hard truncations} (Dirichlet boundary conditions), as it is usually done in numerical simulations, we are interested in edge modes that arise from the lattice termination by a \emph{soft wall}, that is, a continuous potential barrier that impedes the propagation in one halfspace. The main reason is that one cannot translate properly a {\em hard cut} in TB models. To illustrate the problem, imagine a TB problem on the line $\Z$, with the presence of a hard wall whose position is parametrized by $t \in \R$, with the convention that one \emph{deletes} all sites on the left of $t$. Then, when $t$ varies in an interval of the form $(n-1, n)$, nothing changes, but when $t$ crosses the site $n$, the corresponding Hilbert space suddenly switches from $\ell^2(\{n, n+1, \cdots\})$ to $\ell^2(\{n+1, n+2,  \cdots\})$. The corresponding operators are not continuous in $t$, which makes this case difficult to study. Note that, for some experimental realizations of periodic systems, confinement by a smooth potential instead of a sharp lattice termination is easily achieved. Numerical results on the effect of soft confinement in different configurations can be found for instance in \cite{BucCocHof-12, GalLeeBar-17, GebIrsHof-20}.

We prove the appearance of a spectral flow in the gaps of the essential spectrum as the soft wall is shifted, and relate this flow with a number of Bloch bands (see Theorem~\ref{th:main_general_1d}). In particular, our result implies the presence of edge states for \emph{many} values of $t$. Surprisingly, this spectral flow is independent of the shape of the wall. However, in the case where the soft wall varies slowly with respect to the lattice constant, we can prove that edge states always appear, for {\em all} values of $t$, regardless of the topological properties of the material (see Theorem~\ref{th:main_fix_t0} below). 

\medskip

When studying the analogue of this problem for models set on the continuum, that is for Schrödinger operators with periodic potentials in $\R$ or $\R^2$, one can use the spectral flow to study the effect of hard truncations, modelled by a Dirichlet boundary condition along a line parametrized by a translation parameter $t$ (the cut). These models have been studied in~\cite{HemKoh-11, HemKoh-11a, HemKoh-12, HemKohSta-15, Dro-21, Gon-20, Gon-21, Gon-23}, where it is proved that a spectral flow appears as the cut is translated. In addition, this spectral flow can be computed explicitly, and is related to a number of Bloch bands of the initial bulk operator. The results of the present paper are the discrete analogue of these, although the techniques of proof are rather different. 

\medskip

In the two--dimensional case, we also prove that numerous edge modes must appear when one cuts such materials with a commensurate angle having large numerators/denominators. Actually, following the lines of~\cite{Gon-21}, one could prove that {\em all bulk gaps are filled with edge spectrum} in the incommensurate case (although we do not provide a full proof here, as it is similar to the one in~\cite{Gon-21}).

\medskip

Let us give a short description of our main result in the one--dimensional setting. We consider a general TB bulk operator $H$ acting on $\ell^2(\Z, \C^N)$, of the convolution form
$$
   \forall \Psi \in \ell^2(\Z, \C^N), \quad \forall n \in \Z, \qquad (H \Psi)_n = \sum_{m \in \Z} h(m) \Psi_{n-m} =: (h*\Psi)_n
$$
where for each $n \in \Z$, $h(n)$ is an  $N \times N$ matrix such that $h^*(n) = h(-n)$. We assume throughout that $\sum_n \norm{h_n}_{\rm op} < \infty$, which guarantees the boundedness of the operator by Young's inequality (see Lemma~\ref{lem:bulk_self_adjoint} below). Strictly speaking, the model is set on $a_0 \Z$ with $a_0 > 0$ the length of the unit cell. In what follows, we take $a_0 = 1$.  By taking a Fourier transform, the spectrum of $H$ is purely essential, of the form
\[
    \sigma_\bulk := \sigma(H) = \bigcup_{k \in \R} \sigma(H_k), \quad \text{with} \quad H_k := \sum_{m \in \Z} h(m) \re^{- \ri k m} .
\]
The map $k \mapsto H_k$ is $2 \pi$--periodic and analytic, and takes values in $\cS_N$, the set of Hermitian $N \times N$ matrices. Let $\lambda_{1,k} \le \cdots \le \lambda_{N,k}$ be the eigenvalues of $H_k$ ranked in increasing order. By standard perturbation theory~\cite{Kat-95}, for all $1 \le j \le N$, the map $k \mapsto \lambda_{j,k}$ is continuous and $2 \pi$--periodic. The $j$-th {\em Bloch band} is the image of this map, namely the interval $\bigcup_k \{ \lambda_{j,k} \}$. The connected components of $\R \setminus \sigma_\bulk$ are the  \emph{gaps} in the essential spectrum. For an energy $E \in \R \setminus \sigma_\bulk$, we denote by $\cN(E)$ the number of Bloch bands below $E$. It is the unique integer such that 
\[
\forall k \in [-\pi, \pi], \qquad \lambda_{\cN(E),k} < E  < \lambda_{\cN(E)+1, k},
\]
with the convention that $\lambda_{0, k} = - \infty$ and $\lambda_{N+1, k}= + \infty$. Occasionally, when comparing several bulk operators, we will also write $\cN_H(E)$.

\medskip

We now perturb the bulk Hamiltonian by adding the wall potential. We fix a Lipschitz continuous function $w : \R \to \cS_N$ satisfying
\begin{equation}\label{eq:wall_def_intro}
     \forall v \in \C^N\setminus\{0\}, \qquad \lim_{x \to -\infty} \langle v,w(x) v \rangle = +\infty  \quad \text{and} \quad     \lim_{x \to +\infty} \langle v, w(x) v \rangle = 0.
\end{equation}
The first limit states that the lowest eigenvalue of $w(x)$ diverges to $+\infty$ as $x \to - \infty$, and the second that all entries of the matrix $w$ converge to $0$ as $x \to + \infty$. 
From $w$, we define the wall operator $W(t)$ acting on $\ell^2(\Z, \C^N)$ as the multiplication operator
\begin{equation} \label{eq:def:W}
    \forall \Psi \in \ell^2(\Z, \C^N), \qquad (W(t) \Psi)_n := w(n-t) \Psi_n.
\end{equation}
The parameter $t \in \R$ indicates a shift in the position of the wall.
Our goal is to study the family of {\em edge} operators $t \mapsto H^\sharp(t)$ defined by
\[
    H^\sharp(t) := H + W(t), \quad \text{specifically} \quad 
    \left( H^\sharp(t) \Psi \right)_n = (h*\Psi)_n + w(n-t) \Psi_n.
\]
It is not difficult to see that the family $t \mapsto H^\sharp (t)$ is translation equivariant, in the sense that $H^\sharp(t + 1) = \tau_1^* H^\sharp(t) \tau_1$, where $\tau_1$ is the usual translation operator $(\tau_1 \Psi)_n := \Psi_{n-1}$. We prove in Proposition~\ref{prop:basics_intro} below that for all $t \in \R$, $H^\sharp(t)$ is self-adjoint with a domain independent of $t$, and that the essential spectrum of $H^\sharp(t)$ is also independent of $t$. However, some extra eigenvalues may appear in the essential gaps as $t$ moves. For $E \in \R \setminus \sigma_\bulk$, we denote by 
\[
    \Sf(H^\sharp(t), E, [0, 1])
\]
the {\bf spectral flow} of $t \mapsto H^\sharp(t)$ at energy $E$ as $t$ increases from $0$ to $1$. It counts the net number of eigenvalues of $t \mapsto H^\sharp(t)$ crossing the energy $E$ downwards (see Appendix~\ref{sec:appendix:SF} for a precise definition, following~\cite{AtiPatSin-76, Phi-96, DolSchWat-23}). Let us state our main result.

\begin{theorem} \label{th:main_general_1d}
    Let $w : \R \mapsto \cS_N$ be a Lipschitz function satisfying~\eqref{eq:wall_def_intro}.
   For all $E \in \R \setminus \sigma_\bulk$, the energy $E$ is in the essential gaps of all operators $H^\sharp(t)$, and
        \[
        \Sf( H^\sharp(t) , E, [0, 1]) = -\cN(E),
        \]
        where we recall that $\cN(E)$ is the number of Bloch bands below $E$ for the bulk--operator.
\end{theorem}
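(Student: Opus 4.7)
My plan has three steps: reduce by homotopy to a convenient wall, exploit monotonicity to orient the spectral flow, and then count crossings via finite-volume comparison.

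\textbf{Step 1 (Homotopy reduction).} Given any $w$ satisfying~\eqref{eq:wall_def_intro}, I linearly interpolate to a reference wall $w_{\rm ref}(x) := f(x)\, I_N$, where $f \in C^1(\R, \R_+)$ is strictly decreasing with $f(-\infty)=+\infty$ and $f(+\infty)=0$. Each interpolant $w_s := (1-s) w + s\, w_{\rm ref}$ satisfies~\eqref{eq:wall_def_intro}, so Proposition~\ref{prop:basics_intro} ensures $E$ stays in the essential gap of $H^\sharp_{s,t} := H + W_s(t)$ throughout $(s,t)\in[0,1]^2$. The null-homotopic loop along $\partial([0,1]^2)$ then contributes zero net spectral flow; the quasiperiodicity $H^\sharp_{s,1} = \tau_1^*\, H^\sharp_{s,0}\, \tau_1$ makes the spectral flows along the two vertical edges equal and hence cancel in the oriented boundary sum. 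Therefore $\Sf(H^\sharp(t), E, [0,1]) = \Sf(H^\sharp_{\rm ref}(t), E, [0,1])$, and I may work with the reference wall.

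\textbf{Step 2 (Monotonicity).} For the reference wall, $W_{\rm ref}(t)$ is operator-monotone increasing in $t$: each diagonal entry $f(n-t)$ grows as $t$ grows. By the min-max principle, each eigenvalue branch of $H^\sharp_{\rm ref}(t)$ is non-decreasing in $t$. Consequently every crossing of $E$ on $[0,1]$ is upward, and
\[
\Sf(H^\sharp_{\rm ref}(t), E, [0,1]) \;=\; -\,\#\{\text{upward $E$-crossings on }[0,1]\}.
\]

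\textbf{Step 3 (Counting).} To evaluate this count, I pass to Dirichlet-truncated operators $H^\sharp_M(t)$ on $\ell^2(\{-M,\ldots,M\}, \C^N)$. By monotonicity, for $M$ large the number of upward $E$-crossings of the truncation on $[0,1]$ equals that of $H^\sharp_{\rm ref}$, and coincides with $\#\{E_j(H^\sharp_M(0)) < E\} - \#\{E_j(H^\sharp_M(1)) < E\}$. The identity $H^\sharp_M(1) = \tau_1^*\, H^\sharp_{\{-M-1,\ldots,M-1\}}(0)\, \tau_1$ lets me compare two boxes of equal size differing only by a unit translation: one adds a site at $n=-M-1$ where the wall is essentially infinite (contributing no eigenvalue below $E$) and removes a site at $n=M$ where the wall is essentially zero. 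The bulk Weyl law contributes $\cN(E)$ eigenvalues per free site below $E$, while the two free-boundary regions agree up to translation and cancel, giving a net change of $-\cN(E)$, as claimed.

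\textbf{Main obstacle.} The delicate step is the third one: one must justify that the $O(1)$ boundary errors in the Weyl-type asymptotics $\#\{E_j(H^\sharp_M(0)) < E\} = (2M+1)\,\cN(E) + O(1)$ are stable under a unit shift of the box. A cleaner alternative avoiding finite-volume bookkeeping is to recast the count as a relative Fredholm index $\mathrm{ind}\bigl(P_{<E}(H^\sharp(0)), P_{<E}(H^\sharp(1))\bigr)$ of spectral projectors that differ by a translation; by quasiperiodicity, this reduces to a Toeplitz-type index of the bulk projector $P_{<E}(H)$ compressed to a half-line, which is classically known to equal $\cN(E)$.
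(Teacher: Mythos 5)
Your overall shape (normalize the wall, use monotonicity, count the net change between $t=0$ and $t=1$ by a finite-volume comparison) is reasonable, and Step 2 is correct. But there are genuine gaps. First, a fixable one in Step 1: the linear interpolation $w_s=(1-s)w+s\,w_{\rm ref}$ between two \emph{different} unbounded walls is not, in general, norm-resolvent continuous in $s$, because $w-w_{\rm ref}$ need not be relatively bounded by $w_s$ (e.g.\ $w$ growing logarithmically and $w_{\rm ref}$ linearly at $-\infty$); the homotopy-invariance lemmas require joint norm-resolvent continuity. The paper avoids this by modifying the wall only on a half-line where it is bounded, so that the difference of the two walls is a \emph{compact} operator, and then invoking stability under compact perturbations (Lemma~\ref{lem:stability_SF_compact_periodic}). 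You could repair your step by choosing $w_{\rm ref}$ to coincide with $w$ far to the left.

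The serious gap is Step 3, which is precisely where the paper does all of its work (Section~\ref{sec:spectral_flows_Jacobi}). (i) Your Dirichlet truncation at the right endpoint $n=M$ reintroduces exactly the hard-cut ambiguity that motivates the whole paper: as Remark~\ref{rem:hardcut_SSH} shows, a hard cut can create or destroy gap eigenvalues depending on where it falls, so the assertion that the two boxes' "free-boundary regions agree up to translation and cancel" is not automatic and is the crux of the matter. (ii) A Weyl asymptotic of the form $(2M+1)\cN(E)+O(1)$ cannot determine an $O(1)$ answer; you need an \emph{exact} count, which is why the paper replaces the half-line problem by a dislocated model on a torus (Lemma~\ref{lem:Sf_finite_dislocated}), where the difference $\ell\,\cN(E)-(\ell-1)\cN(E)$ is exact with no boundary terms at all. (iii) The claim that for large $M$ the truncated family has the same crossings as the infinite one requires convergence of the gap spectral projections uniformly in $t$; this is nontrivial and occupies the paper's Lemmas~\ref{lem:estimates}--\ref{lem:projections} (approximate resolvent identities with cutoffs $\chi_\ell$). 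Finally, your proposed "cleaner alternative" — identifying the flow with a Toeplitz-type index of the compressed bulk Fermi projection equal to $\cN(E)$ — is essentially a restatement of the theorem rather than a proof of it. You have correctly located the main obstacle, but the argument you give does not overcome it.
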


In particular, the spectral flow is independent of the precise expression for the soft wall. A nonzero spectral flow implies that, at least for some values of $t \in \R$, the set $\sigma_\edge(t) := \sigma(H^\sharp(t)) \setminus \sigma_\bulk$, sometimes called the {\em edge spectrum}, is non empty. It only consists of eigenvalues. The corresponding eigenvectors are called {\em edge} states, and describe localized modes which are blocked by the wall on the left, and which cannot propagate in the medium on the right.

\medskip

Let us briefly explain the idea of the proof with the help of the graphical representation in Figure~\ref{fig:ideas}. First, we show that the spectral flow for any soft wall is equal to the spectral flow for a simplified \emph{steep wall} model that essentially behaves as a step potential with value $\Sigma \gg E$ (Figure \ref{fig:ideas}.1). Next, we show that the spectral flow decouples in a left and right contribution by summing a compact perturbation that erases some bonds (Figure \ref{fig:ideas}.2). Here, the right-hand side of the picture does not contribute to the spectral flow at energy $E$. 
A similar procedure can be performed on the \textit{dislocated model} shown in Figure \ref{fig:ideas}.3. In this model, as $t$ increases by one, one site decouples from the otherwise periodic chain. A suitable compact perturbation yields a cut dislocated model shown in Figure \ref{fig:ideas}.4. 
As before, the left-hand-side does not contribute to the spectral flow while the right-hand-sides coincides with that of the steep wall model. 
This allows us to show that the spectral flow for any soft wall model equals the spectral flow of the related dislocated model. 
Then, the spectral flow for the dislocated model can be computed exactly, by interpreting it as the limit of finite chains with periodic boundary conditions.

\begin{figure}
\includegraphics[]{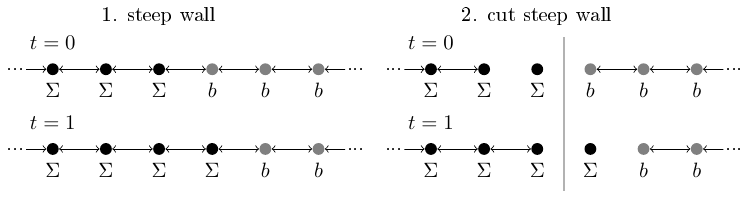}

\includegraphics[]{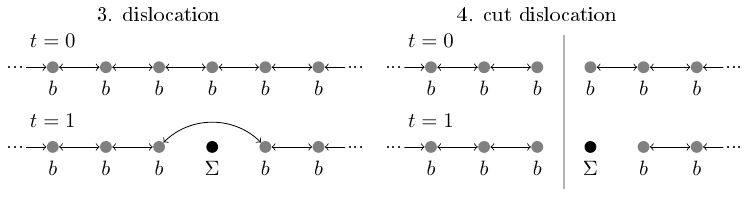}
\caption{Schematic illustration of the models in the proof. In the first figure, as $t$ varies from $0$ to $1$, the wall is pushed one cell to the right. We prove that all spectral flows coincide for these four scenarios, and explicitly compute the third one.} \label{fig:ideas}
\end{figure}

\begin{remark} \label{rem:HilbertHotel}
    The result is the discrete analogue to the one in~\cite{Gon-23}, up to the minus sign. This is due to the fact that in the present article, the wall is moving to the right when $t$ increases, which is the direction opposite to the one in~\cite{Gon-23}. In~\cite{Gon-21, Gon-23} (see also~\cite{Gon-24}), one of us gave an interpretation of this result, that we called the {\bf Grand Hilbert Hotel}, that might be useful. It is similar to the {\em charge pumping} denomination of Thouless~\cite{Tho-83}. Basically, we may think of each Bloch band as a hotel floor with an infinite number of rooms, each room occupied by one guest (or fermion).
 As $t$ goes from $0$ to $1$, the wall is pushed on the right, deleting the leftmost rooms. A particle from the lowest band moves up to the second band, giving a spectral flow of $-1$ in the gap above the first band. Two particles cross the gap above the second band: one from the deleted leftmost room, and one to accommodate the new arrival from the first band, and so on.
\end{remark}

In practice, one is often interested in the spectrum of $H^\sharp(t = t_0)$, that is for a single value of $t$. We have the following (see Section~\ref{sec:proof_main_fix_t0} for the proof).

\begin{theorem} \label{th:main_fix_t0}
    Let $w : \R \mapsto \cS_N$ be a soft wall which is $\nu$--Lipschitz, for some $\nu > 0$. If $E$ belongs to an essential gap of $H$, then for all $t_0 \in \R$ the operator $H^\sharp(t_0)$ has at least $\cN(E)$ eigenvalues in each interval of the form $(\lambda, \lambda+\nu]$ in this gap.
\end{theorem}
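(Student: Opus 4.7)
The plan is to combine Theorem~\ref{th:main_general_1d}, applied to the shifted unit interval $[t_0 - 1, t_0]$, with the $\nu$-Lipschitz dependence of $H^\sharp(t)$ on $t$ that follows from the corresponding hypothesis on $w$. The underlying intuition is a ``speed limit'' argument: every eigenvalue trajectory forced by the spectral flow to cross the lower endpoint $\lambda$ of our window upward cannot, by Lipschitz continuity, travel more than $\nu$ in one unit of time, hence must end up at a value in $(\lambda, \lambda + \nu]$ at $t_0$.

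I would begin by observing that the $1$-quasi-periodicity of the family $H^\sharp(t)$ allows us to translate Theorem~\ref{th:main_general_1d} to any unit interval, so in particular $\Sf(H^\sharp(\cdot), \lambda, [t_0 - 1, t_0]) = -\cN(E)$. Unpacking the definition of spectral flow as net downward crossings, this means that the net number of upward crossings of the level $\lambda$ by the eigenvalue trajectories on $[t_0 - 1, t_0]$ is exactly $\cN(E)$. Since $w$ is $\nu$-Lipschitz, one has $\|H^\sharp(t) - H^\sharp(t')\|_{\op} \le \nu|t-t'|$, and standard perturbation theory ({\em e.g.} Kato's selection theorem) organizes the eigenvalues in the gap into $\nu$-Lipschitz continuous curves $\mu_j(t)$, possibly with birth or death at the gap edges $\lambda_\pm$.

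I would then focus on the \emph{ascending trajectories}, namely curves satisfying $\mu_j(t_1) \le \lambda$ for some $t_1 \in [t_0 - 1, t_0]$ (either $t_1 = t_0 - 1$ with $\mu_j(t_0 - 1) < \lambda$, or $t_1$ is the birth time of the curve at the lower edge $\lambda_-$) and $\mu_j(t_0) > \lambda$. For each such trajectory, the $\nu$-Lipschitz bound yields
\[
\mu_j(t_0) \le \mu_j(t_1) + \nu(t_0 - t_1) \le \lambda + \nu,
\]
so $\mu_j(t_0) \in (\lambda, \lambda + \nu]$. Since descending trajectories only lower the net count, the ascending trajectories must number at least $\cN(E)$, each producing a distinct eigenvalue of $H^\sharp(t_0)$ in the target window (counted with multiplicity).

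The main obstacle is excluding ``phantom'' ascending trajectories that would contribute to the spectral flow but not to the eigenvalue count at $t_0$, namely curves that cross $\lambda$ upward and then dissolve into the essential spectrum at the upper edge $\lambda_+$ before $t_0$. I would rule these out with the same speed-limit argument: such a curve would need to traverse the distance $\lambda_+ - \lambda$, which strictly exceeds $\nu$ because $(\lambda, \lambda + \nu] \subset$ gap forces $\lambda + \nu < \lambda_+$. By the $\nu$-Lipschitz bound this takes time strictly greater than one, which is impossible within $[t_0 - 1, t_0]$. Hence all $\cN(E)$ ascending trajectories survive to $t_0$ and produce eigenvalues of $H^\sharp(t_0)$ in $(\lambda, \lambda + \nu]$, establishing the desired bound.
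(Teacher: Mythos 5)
Your argument is correct and is essentially the paper's own proof: both rest on the spectral flow $-\cN(E)$ over a unit $t$-interval (valid on $[t_0-1,t_0]$ by quasi-periodicity) combined with the $\nu$-Lipschitz continuity of the eigenvalue branches inherited from $\|H^\sharp(t)-H^\sharp(t')\|_{\op}\le\nu|t-t'|$. Your explicit exclusion of trajectories absorbed at the upper gap edge is a point the paper handles only implicitly through its labelling convention, but it does not change the substance of the argument.
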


\begin{remark} \label{rem:Lipschitz}
 Recall that we took the length scale of the chain to be $a_0 = 1$, so $\lambda$ and $\nu$ have the same units. To restore a length scale and work instead on $a_0 \Z$, it is convenient to define the operator $W$ in~\eqref{eq:def:W} by $(W(t) \Psi)_R := w(R - ta_0) \Psi_R$ for $R \in a_0 \Z$, so that $t \mapsto H(t)$ is translation equivariant with period $1$. If $w(\cdot)$ is $\nu$-Lipschitz, then $W(\cdot)$ and $H(\cdot)$ are $\nu a_0$--Lipschitz, and there are $\cN(E)$ eigenvalues in each interval of the form $(\lambda, \lambda + \nu a_0]$.
\end{remark}

This theorem states that the density of eigenvalues in the gap containing $E$ is at least $\cN(E)/\nu$. If $\nu$ is small, the potential $W$ increases slowly to $+\infty$ on the left. It somehow acts as a local energy shift. It is therefore not so surprising that many eigenvalues appear in all gaps in this situation. On the other hand, if $\nu$ is very large, (imagine a situation where one approximates a hard cut by a sequence of soft wall models with $\nu \to \infty$) then the spectral flow does not give information on the spectrum of $H^\sharp(t_0)$: it may or may not contain extra eigenvalues. Figure~\ref{fig:SSH} below illustrates this situation for the SSH model.

\medskip

Before we go on, let us recall that there are several ways to derive TB models from the ones set on the continuum. One approach, called the {\em strong binding limit}, is given in~\cite{FefLeeWei-17, FefWei-20, ShaWei-22}, and consists in studying the limit of Schrödinger operators acting on $L^2(\R^d)$ with a periodic potential, when the \emph{wells} of the potential approach a delta-function centered on each atom. In this limit (similar to the semi-classical one), the wave-function $\Psi$ is replaced by a discrete function giving the amplitude of $\Psi$ associated to each atom. Another way to derive TB models is through the use of Wannier functions. In this case, one finds a countable set of functions $(w_i)$ which are well-localized around the atoms $(\bx_i)$, and which span the lowest part of the spectrum of the Hamiltonian. A wave--function $\Psi(x)$ is then approximated by the sum $\sum c_i w_i(x)$. Thanks to the localization of $w_i$, the value $c_i w_i(\bx_i)$ can be interpreted as the value $\Psi(\bx_i)$. The hopping transition between atoms $i$ and $j$ are then $t_{ij} = \bra w_i, H w_j \ket$. We refer to~\cite{SlaKos-54, GorBowHer-97} for more details.

\medskip

The paper is structured as follows. In Section~\ref{sec:1d_model}, we prove some basic properties of one-dimensional Hamiltonians with and without soft walls. We explain how to reduce the problem to the somewhat easier case of periodic {\em Jacobi} operators and prove Theorem~\ref{th:main_general_1d} assuming it holds for Jacobi operators. As an illustration of the general results we discuss the SSH chain in Section~\ref{sec:SSH} and present numerical simulations of its spectrum. In Section~\ref{sec:proof_main_fix_t0}, we give the proof of Theorem~\ref{th:main_fix_t0}, again assuming the version of Theorem~\ref{th:main_general_1d} for Jacobi operators.

The computation of the spectral flow for periodic Jacobi operators takes up the entire Section~\ref{sec:spectral_flows_Jacobi}. This is the main technical part of the paper.

In Section~\ref{sec:2d}, we explain how to extend our result in the two--dimensional setting. The main observation is that a 2d model, even cut by a (commensurate) soft wall, is still periodic in the direction along the wall. After a Fourier transform in this direction, we are left with a family of one-dimensional models, indexed by the momentum in the direction along the wall. We prove in Theorem~\ref{th:lot_of_edge_states_if_incommensurate} that the greater the incommensurability of the cut, the more abundant the edge states. In Section~\ref{sec:exemple_wallace}, we specify to the Wallace model for graphene to illustrate the concepts and give numerical results on the edge states.

Finally, we provide in Appendix~\ref{sec:appendix:SF} a precise definition of the Spectral Flow adapted to our setting, and establish some of its properties that are used in the body of the paper.

\section{Spectral flows in one-dimensional models}
\label{sec:1d_model}

In this section, we prove our results in the one--dimensional setting.

\subsection{Generalities for bulk and edge Hamiltonians}

\subsubsection{Bulk periodic Hamiltonians}
\label{ssec:periodic_bands}

We work in this section in the Hilbert space $\ell^2(\Z, \C^N)$, with operators that are periodic, i.e., commute with the shift operator $\tau_1$ defined by $(\tau_1 \Psi)_n := \Psi_{n-1}$. Such operators appear in TB models for a periodic chain of atoms, and $N$ represents the number of particles in one unit cell. The corresponding Hamiltonian can be written as a discrete convolution in the form
\begin{equation} \label{eq:convolutionForm}
    (H \Psi)_n = \sum_{m \in \Z} h(m) \Psi_{n-m} = (h*\Psi)_n.
\end{equation}

\begin{lemma} \label{lem:bulk_self_adjoint}
      If $h \in \ell^1(\Z, \C^{N\times N})$ and satisfies the symmetry condition $h(n)^* = h(-n)$, then $H$ is a bounded self-adjoint operator on $\ell^2(\Z, \C^N)$.
\end{lemma}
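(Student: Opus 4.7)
The plan is to prove boundedness via Young's inequality applied to operator-norm majorants, and to obtain self-adjointness either by direct symmetrization of the sesquilinear form or, more cleanly, by conjugating with the discrete Fourier transform (which is anyway needed later in the paper).

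For boundedness, I would set $a(m) := \|h(m)\|_{\op}$, which lies in $\ell^1(\Z, \R_+)$ by hypothesis, and $b(n) := \|\Psi_n\|_{\C^N}$, which lies in $\ell^2(\Z, \R_+)$. The triangle inequality and the operator-norm bound give the pointwise estimate $\|(h*\Psi)_n\|_{\C^N} \le \sum_{m} \|h(m)\|_{\op}\, \|\Psi_{n-m}\|_{\C^N} = (a*b)(n)$, and the scalar Young inequality $\|a*b\|_{\ell^2} \le \|a\|_{\ell^1}\|b\|_{\ell^2}$ then yields
\[
    \|H\Psi\|_{\ell^2(\Z,\C^N)} \;\le\; \Bigl(\sum_{m\in\Z} \|h(m)\|_{\op}\Bigr) \|\Psi\|_{\ell^2(\Z,\C^N)},
\]
so $H$ is bounded on $\ell^2(\Z,\C^N)$ with $\|H\|_{\op} \le \|h\|_{\ell^1}$. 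In particular $H$ is everywhere defined, and it suffices to prove symmetry.

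For self-adjointness I would take the Fourier route: the unitary $\cF : \ell^2(\Z,\C^N) \to L^2([-\pi,\pi], \C^N)$ conjugates $H$ into multiplication by $H_k = \sum_{m\in\Z} h(m)\re^{-\ri k m}$. The series converges uniformly on $[-\pi,\pi]$ by the Weierstrass $M$-test (using $\sum_m \|h(m)\|_{\op}<\infty$), so $k \mapsto H_k$ is continuous into $\C^{N\times N}$; the symmetry hypothesis $h(m)^* = h(-m)$ yields, after relabeling $m \mapsto -m$, that $H_k^* = H_k$, so $H_k \in \cS_N$ for every $k$. Multiplication by a bounded self-adjoint matrix-valued function is automatically a bounded self-adjoint operator on $L^2([-\pi,\pi],\C^N)$, and self-adjointness is preserved under unitary conjugation. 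As a sanity check, one can alternatively verify $\langle \Phi, H\Psi\rangle = \langle H\Phi, \Psi\rangle$ directly on $\ell^2$ by Fubini (legal thanks to the $\ell^1$-$\ell^2$ absolute convergence just established) and the index change $m \mapsto -m$. There is essentially no obstacle here; the only care needed is the uniform $\ell^1$ control, which is exactly the standing assumption and which will in any case be reused throughout the paper to manipulate the symbol $H_k$.
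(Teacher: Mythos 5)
Your proof is correct. The boundedness argument is exactly the paper's: majorize $\|(h*\Psi)_n\|$ by the scalar convolution of $\|h(\cdot)\|_{\op}\in\ell^1$ with $\|\Psi_\cdot\|\in\ell^2$ and apply discrete Young. For self-adjointness the paper is more economical: it simply observes that the condition $h(n)^*=h(-n)$ makes $H$ symmetric (your ``sanity check'' via Fubini and the index change $m\mapsto -m$ is precisely this step made explicit), and then invokes the fact that a bounded, everywhere-defined symmetric operator is self-adjoint. Your main route instead conjugates by the Fourier transform and checks that the symbol $H_k$ is Hermitian; this is equally valid and has the small advantage of simultaneously justifying the continuity and Hermiticity of $k\mapsto H_k$, which the paper needs immediately afterwards anyway, but it is strictly more machinery than the lemma requires. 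No gaps in either branch of your argument.
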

\begin{proof}
The symmetry condition $h(n)^* = h(-n)$ ensures that $H$ is a symmetric operator. In addition, the fact that $h$ is summable implies that $H$ is a bounded operator. Indeed, by the discrete Young inequality for convolutions, we have
\[
    \forall \Psi \in \ell^2(\Z, \C^N), \quad
        \|H \Psi \|_{\ell^2} = \| h*\psi \|_{\ell^2} \le \| h \|_{\ell^1} \| \psi \|_{\ell^2},
        \quad \text{with} \quad
    \| h \|_{\ell^1} := \sum_{n=-\infty}^\infty  \| h(n) \|_{\op}.
\]
So $H$ is a bounded symmetric operator, hence self-adjoint.
\end{proof}

\begin{remark} \label{rem:larger_blocks}
   Throughout this section, we will use the term \emph{periodic} to refer to 1-periodic models. If a Hamiltonian is periodic with period $K$ for some integer $K>1$, we can identify $\ell^2(\Z, \C^N)$ with $\ell^2(K \Z, \C^{K\,N})$ hence with $\ell^2(\Z, \C^{K\,N})$, and obtain a 1-periodic model with larger \emph{blocks}. This effectively changes the length--scale $a_0$ into $\widetilde{a}_0 := K a_0$. See Subsection~\ref{sec:SSH} for an example of this.
\end{remark}

We define the unitary Fourier transform by $\cF : \ell^2(\Z, \C^N) \to L^2([-\pi, \pi], \C^N)$ by
\begin{equation} \label{eq:def_fourier_1d}
        \left( \cF[\Psi] \right)(k) := \frac{1}{\sqrt{2 \pi}} \sum_{m \in \Z} \Psi_m \re^{ -\ri k m}, \quad \text{so that} \quad
    \Psi_n =\frac{1}{\sqrt{2 \pi}} \int_{-\pi}^\pi \cF[\Psi](k) \re^{ \ri k n} \rd k,
\end{equation}
and note that for any $u \in L^2([-\pi, \pi],\C^N)$,
\begin{equation} \label{eq:def:Hk}
\left(\cF H \cF^* u \right) (k) = H_k u(k ), \quad \text{with} \qquad
    H_k := \sum_{m \in \Z} h(m) \re^{- \ri k m}.
\end{equation}

Since $h(n) = h^*(-n)$, the matrix $H_k$ is Hermitian for all $k \in \R$. Let $\lambda_{1, k} \le \cdots \le \lambda_{N, k}$ be the eigenvalues of $H_k$ ranked in increasing order. Since the map $k \mapsto H_k$ is $2 \pi$--periodic and analytic, the maps $k \mapsto \lambda_{j,k}$ are $2 \pi$--periodic and continuous (we loose analyticity when eigenvalues intersect). In what follows, we set $\Omega^* := (- \pi, \pi]$. The spectral theorem then shows that
\begin{equation} \label{eq:spectrum_Hbulk}
    \sigma_\bulk := \sigma(H) = \bigcup_{k \in \Omega^*} \sigma \left( H_k \right) = 
    \bigcup_{j=1}^N \bigcup_{k \in \Omega^*}  \{ \lambda_{j,k} \}.
\end{equation}
The spectrum of $H$ consists of $N$ \emph{Bloch bands}, which are the images of $k \mapsto \lambda_{j,k}$. Each Bloch band is a closed interval by continuity and periodicity of this map. The intervals in the complement of this spectrum are called \emph{gaps}, or {\em bulk gaps}.


\subsubsection{Soft wall operators}
\label{sec:softWall}
Our goal is to study such periodic Hamiltonians, in the presence of a soft wall. Let us specify the class of walls that we use in this article.
\begin{definition}
    An operator $W$ acting on $\ell^2(\Z, \C^N)$ is a soft wall operator if it is of the form
    \[
        \forall n \in \Z, \qquad (W \Psi)_n = w(n) \Psi_n,
    \]
    where $w : \R \mapsto \cS_N$ is $\nu$--Lipschitz for some $\nu > 0$, and satisfies~\eqref{eq:wall_def_intro}, {\em i.e.} has the limits
   \[
     \forall v \in \C^N\setminus\{0\}, \qquad \lim_{x \to -\infty} \langle v,w(x) v \rangle = +\infty  \quad \text{and} \quad     \lim_{x \to +\infty} \langle v, w(x) v \rangle = 0.
    \]
\end{definition}
The first limit implies that for all $\Sigma \in \R$, there is $x_\Sigma \in \R$ so that 
\[
    \forall x \le x_\Sigma, \qquad w(x) \ge \Sigma.
\]
Since $w$ is Lipschitz, it is continuous. Also, since it goes from $+\infty$  to $0$, it is bounded from below. The shifted, or translated, soft wall operator is 
\[
    (W(t) \Psi)_n := w(n - t) \Psi_n.
\]
It is an unbounded self-adjoint operator, with domain
\[
    \cD (W(t)) := \left\{ \Psi \in \ell^2(\Z, \C^N), \quad W(t) \Psi \in \ell^2(\Z, \C^N) \right\}.
\]
Note that $\norm{W(t) -W(0)}_{\op} = \sup_{n \in \Z} \norm{w(n-t) - w(n)}_{\op} \le \nu |t|$, so this domain is independent of $t$ and we will denote it simply by $\cD(W)$. Let $\mu_{1, x} \le \cdots \le \mu_{N, x}$ be the eigenvalues of $w(x)$ for all $x \in \R$. Since $w$ is $\nu$--Lipschitz, all the curves $x \mapsto \mu_{1, x}$ are $\nu$--Lipschitz as well. In addition, all these curves go from $+\infty$ to $0$ on $\R$. As $W(t)$ is block diagonal with block elements $w(n - t)$, we directly deduce that
\begin{equation} \label{eq:spectrum_Wt}
    \sigma(W(t)) =  \bigcup_{n \in \Z} \sigma( w(n-t) ) = \bigcup_{j=1}^N  \bigcup_{n \in \Z} \{ \mu_{j, n - t} \}, 
    \quad \text{and that} \quad \sigma_{\rm ess}(W_t) = \{ 0 \}.
\end{equation}
The spectrum is purely discrete, composed of eigenvalues. They are all of finite multiplicities (except maybe $0$), and the only accumulation point is $0$. In addition, the map $t \mapsto W(t)$ is translation equivariant, in the sense that
\[
    W(t+1) = \tau_1^* W(t) \tau_1, \qquad \text{with} \quad (\tau_1 \Psi)_n := \Psi_{n-1}.
\]
In particular, the spectrum is $1$--periodic (which can be read directly from~\eqref{eq:spectrum_Wt}).  Figure~\ref{fig:spectrum_Wt} illustrates the situation.

\begin{figure}[ht]
    \centering
    \includegraphics[height=150pt]{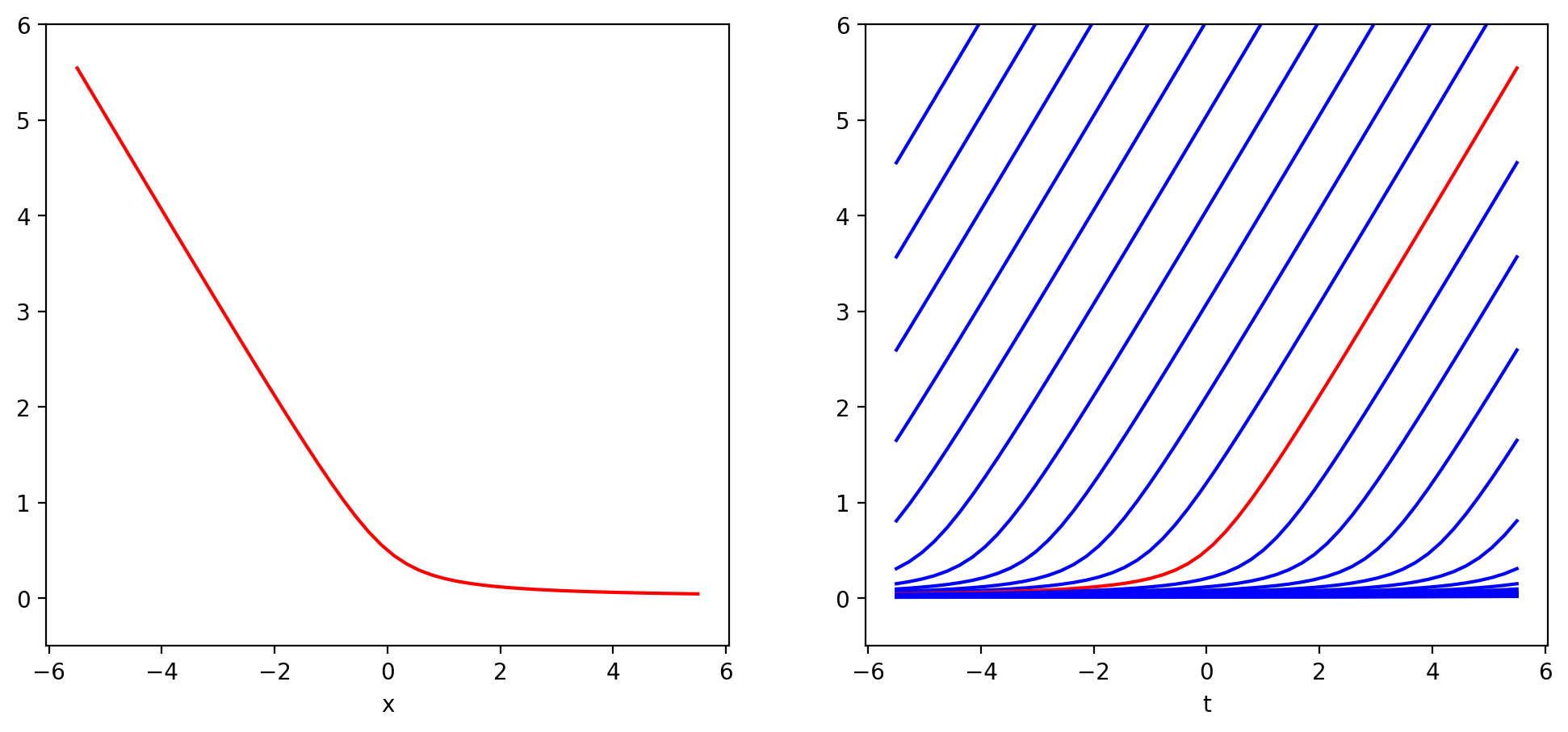}
    \caption{Left: the function $w(x) := \frac12 \left(\sqrt{x^2 + 1} - x \right)$, with $N=1$. Right: the spectrum of the corresponding operator $W(t)$, as a function of $t$ (the $n=0$ case is in red). As $t$ increases, the wall moves to the right. Here, $t \mapsto W(t)$ is operator increasing.}
    \label{fig:spectrum_Wt}
\end{figure}
 
We define the {\em edge} operator $H^\sharp(t) := H + W(t)$. Let us record some properties of this operator.

\begin{proposition} \label{prop:basics_intro}
    Let $w : \R \mapsto \cS_N$ be a Lipschitz function satisfying \eqref{eq:wall_def_intro}. 
    \begin{enumerate}
        \item For all $t \in \R$, the operator $H^\sharp(t)$ is self-adjoint on the constant domain $\cD(W)$.
        \item The map $t \mapsto H^\sharp(t)$ is norm--resolvent continuous and translation equivariant, in the sense that $H^\sharp(t + 1) = \tau_1^* H^\sharp(t) \tau_1$, where $\tau_1$ is the usual translation operator $(\tau_1 \Psi)_n := \Psi_{n-1}$;
        \item The essential spectrum $\sigma_\ess( H^\sharp(t) )$ is independent of $t$, and equals $\sigma_\bulk$.
       \end{enumerate}
\end{proposition}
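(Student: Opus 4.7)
The plan is to treat the three claims in turn, with the essential spectrum computation being the main task.

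For (1), the operator $W(t)$ is diagonal multiplication by the $\cS_N$-valued map $n \mapsto w(n-t)$, hence self-adjoint on its maximal domain $\cD(W(t))$. The Lipschitz estimate $\|W(t)-W(s)\|_{\op}\le \nu|t-s|$ noted just before the proposition shows that $W(t)-W(s)$ is bounded, so $\cD(W(t)) = \cD(W)$ does not depend on $t$. Since $H$ is bounded self-adjoint by Lemma~\ref{lem:bulk_self_adjoint}, adding it to $W(t)$ preserves self-adjointness on the common domain $\cD(W)$ (trivial case of Kato--Rellich).

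For (2), the same Lipschitz estimate gives $\|H^\sharp(t)-H^\sharp(s)\|_{\op} \le \nu|t-s|$, which is strictly stronger than norm-resolvent continuity: the standard first-order resolvent identity then yields $(H^\sharp(t)-z)^{-1} \to (H^\sharp(s)-z)^{-1}$ in operator norm as $t \to s$ for any $z$ in the common resolvent set. The quasi-periodicity follows from two separate calculations: $\tau_1$ commutes with $H$ because convolutions are translation-invariant (cf.~\eqref{eq:convolutionForm}), and a direct entrywise computation on the diagonal symbol of $W(t)$ yields the required intertwining relation between $W(t)$ and $W(t+1)$.

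Claim (3) is the main task, and I would prove both inclusions $\sigma_\bulk \subseteq \sigma_\ess(H^\sharp(t)) \subseteq \sigma_\bulk$ by constructing Weyl sequences; the $t$-independence of $\sigma_\ess(H^\sharp(t))$ then follows automatically. For the first inclusion, given $\lambda \in \sigma_\bulk$, there exist $k_0$ and $v \in \C^N$ with $H_{k_0}v = \lambda v$, and the Bloch wave $\phi_n := \re^{\ri k_0 n} v$ satisfies $H\phi = \lambda \phi$ pointwise. Multiplying $\phi$ by a smooth $\ell^2$-normalized bump pushed far to the right --- where the decay $\|w(n-t)\|_{\op}\to 0$ ensures $\|W(t)\chi\phi\|_{\ell^2}\to 0$ and the boundary error produced by the convolution is $o(1)$ --- produces a weakly null Weyl sequence for $H^\sharp(t)$ at $\lambda$.

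For the reverse inclusion, given a Weyl sequence $(\Psi_m)$ for $H^\sharp(t)$ at $\lambda$, testing the approximate eigenvalue equation against $\Psi_m$ shows that $\langle\Psi_m, W(t)\Psi_m\rangle$ is uniformly bounded; combined with the coercivity of $w$ at $-\infty$ and the weak convergence $\Psi_m \rightharpoonup 0$, this forces $\|\1_{n\le R}\Psi_m\|_{\ell^2}\to 0$ for every fixed $R$ (the mass on any compact window vanishes by weak convergence, and mass far to the left is suppressed by the energy bound). I would then show that a right-truncation $\widetilde\Psi_m := \1_{n > R_m}\Psi_m/\|\1_{n > R_m}\Psi_m\|$, with $R_m \to \infty$ slowly enough that $\|\1_{n > R_m}\Psi_m\| \to 1$ and $\sup_{n > R_m}\|w(n-t)\|_{\op}\to 0$, is a Weyl sequence for $H$ at $\lambda$, giving $\lambda \in \sigma_\ess(H) = \sigma_\bulk$. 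The main technical obstacle is controlling the commutator $[H,\1_{n > R_m}]$ under only the $\ell^1$ assumption on $h$ (no moment bound): I would handle this by approximating $h$ by a finitely supported kernel $h_M$, for which the commutator is compactly supported and vanishes under weak convergence, and controlling the remainder through the small operator norm $\|H - H_M\|_{\op}\le \|h-h_M\|_{\ell^1}$.
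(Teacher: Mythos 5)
Your parts (1) and (2) are correct and essentially identical to the paper's argument: $H$ is a bounded symmetric perturbation of the self-adjoint $W(t)$, and the uniform bound $\|W(t)-W(s)\|_{\op}\le\nu|t-s|$ plus the resolvent identity give norm-resolvent continuity (the paper phrases this as a block-diagonal estimate on $(\ri-w(n-t))^{-1}$, but it is the same computation). For part (3) you take a genuinely different route. The paper decouples $\ell^2(\Z,\C^N)$ into the two half-lines by subtracting the off-diagonal coupling $K=\Pi_R H\Pi_L^*+\Pi_L H\Pi_R^*$, shows $K$ is compact via the finite-range approximation $H_\ell$, and then reads off $\sigma_\ess(H^\sharp_R)=\sigma_\bulk$ (because $W$ restricted to $\Z^+$ is compact) and $\sigma_\ess(H^\sharp_L)=\emptyset$ (because $W_L$ has compact resolvent and $H_L$ is bounded). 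You instead prove both inclusions directly with singular Weyl sequences: truncated Bloch waves pushed to $+\infty$ for $\sigma_\bulk\subseteq\sigma_\ess(H^\sharp(t))$, and a right-truncation of a given Weyl sequence, justified by the energy bound on $\langle\Psi_m,W(t)\Psi_m\rangle$ and the coercivity of $w$ at $-\infty$, for the converse. Both approaches are valid; the paper's is shorter and invokes compactness arguments wholesale, yours is more elementary but requires the commutator bookkeeping you identify. One imprecision you should repair: the commutator $[H_M,\1(n>R_m)]$ is supported in a window of width $O(M)$ \emph{around $R_m$}, which moves with $m$, so it does not vanish ``under weak convergence'' alone. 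The correct fix is the one already implicit in your phrase ``$R_m\to\infty$ slowly enough'': a diagonal argument over the fixed-$R$ statement $\|\1(n\le R)\Psi_m\|\to 0$ lets you choose $R_m\to\infty$ with $\|\1(n\le R_m+M)\Psi_m\|\to 0$, and then $\|[H_M,\1(n>R_m)]\Psi_m\|\le \|h\|_{\ell^1}\,\|\1(|n-R_m|\le M)\Psi_m\|\to 0$, after which the $\ell^1$-approximation in $M$ closes the argument.
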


\begin{proof}[Proof of Proposition~\ref{prop:basics_intro}]
    For the first point, the bulk operator $H$ is bounded and self-adjoint, hence is a bounded perturbation of the self-adjoint operator $W(t)$. Thus, $H^\sharp(t)$ is self-adjoint on $\cD(W)$. 
    
    For the second point, note that $H$ is independent of $t$ while
    the map $t \mapsto W(t)$ is norm--resolvent continuous. Indeed, the operator $(\ri - W(t))^{-1}$ is block diagonal with block elements $(\ri - w(n-t))^{-1}$, and we have the bound
\begin{align*}
    \left\| \dfrac{1}{\ri - w(n-t)} - \dfrac{1}{\ri - w(n - t')} \right\|_\op 
    &=  \left\| \dfrac{1}{\ri - w(n-t)} \left[ w(n - t) - w(n-t') \right] \dfrac{1}{\ri - w(n - t')} \right\|_\op  \\
    &\le \nu | t' - t |.
\end{align*}
    The translation equivariance follows from the translation equivariance of $W(t)$ and the periodicity of $H$. Since $\tau_1$ is unitary, it implies that the spectrum of $H^\sharp(t)$ is 1-periodic in $t$. 
    
    \medskip
    
    It remains to prove the third point, and compute the essential spectrum of $H^\sharp(t)$. We omit the argument $t$ for shortness. We identify 
    $\ell^2(\Z, \C^N)$ with  $\ell^2(\Z^-, \C^N) \oplus \ell^2(\Z^+, \C^N)$
    and define the restrictions $\Pi_L$ from $\ell^2(\Z, \C^N)$ onto $\ell^2(\Z^-, \C^N)$ and $\Pi_R $ from $\ell^2(\Z, \C^N)$ onto $\ell^2(\Z^+, \C^N)$.
    We write
    \begin{align*}
        H^{\sharp} =  H^{\sharp}_L  \oplus  H^{\sharp}_R + K, 
        \quad \text{with} \quad 
        H^{\sharp}_L =  \Pi_L  H^{\sharp} \Pi_L^*,  \quad H^{\sharp}_R =  \Pi_R  H^{\sharp} \Pi_R^* 
    \end{align*}
    and
      \begin{align*}
        K =  \Pi_R  H^{\sharp} \Pi_L^* + \Pi_L  H^{\sharp} \Pi_R^* = \Pi_R  H \Pi_L^* + \Pi_L  H \Pi_R^*,
    \end{align*}
    where the last equality comes from the fact that $W$ is diagonal.
    First, we claim that $K$ is a compact operator. Indeed, for $\ell \in \N$, we introduce the finite range one--periodic Hamiltonian
    \begin{equation} \label{eq:def:Hell}
        H_\ell \Psi := h_\ell * \Psi, \quad \text{with kernel} \quad h_\ell(n) = h(n) \1 (| n | \le \ell).
    \end{equation}
    Since $h_\ell$ converges to $h$ in $\ell^1(\Z, \C^{N \times N})$, we have $\| H - H_\ell \|_{\rm op} \to 0$ as $\ell \to \infty$ by Young's inequality (see the proof of Lemma~\ref{lem:bulk_self_adjoint}). So we also have $\| K - K_\ell \|_{\rm op} \to 0$, with $K_\ell := \Pi_R  H_\ell \Pi_L^* + \Pi_L  H_\ell \Pi_R^*$. In addition, $K_\ell$ is finite rank for all $\ell > 0$, so $K$ is compact, as stated.
    
    \medskip
    
    Since $K$ is compact, it does not affect the essential spectrum, so $\sigma_\ess(H^\sharp) = \sigma_\ess(H^\sharp_L) \bigcup \sigma_\ess(H^\sharp_R)$. Let us identify the essential spectra of $H^{\sharp}_L$ and $H^{\sharp}_R$.
    To the right of the wall, $W$ is bounded and decays to zero. Thus, $\Pi_R  W \Pi_R^*$ is compact and
    $$
    \sigma_{\ess}(H^\sharp_R) = \sigma_{\ess}(H_R).
    $$
We have $\sigma_{\ess}(H_R) = \sigma_\bulk$: the inclusion $ \sigma_{\ess}(H_R) \subset \sigma_\bulk$ can be shown directly by extending a singular Weyl sequence for $H_R$ by zero to the left. For the opposite inclusion, consider a singular Weyl sequence $(\Psi^k)_{k \in \N}$ for $H$ and fix a translation $R_k$ such that $\norm{\Pi_L \Psi^{k}_{R_k} } \le 1/k$, where $\Psi_R(n) := \Psi(n - R)$ is the translation of $\Psi$ by $R$. Then $\Pi_R \Psi^{k}_{R_k}$ is a singular Weyl sequence for $H_R$.     
 
 To the left of the wall, $W_L$ has compact resolvent, since it has eigenvalues of finite multiplicity and without accumulation points. Since $H$ is bounded, 
 $H_L (W_L - i)^{-1}$ is compact and $\sigma_\ess (H^{\sharp}_L) = \sigma_\ess (W_L) = \emptyset$.
 \end{proof}

\subsection{Computation of the spectral flow: reduction to the Jacobi case}

We now compute the spectral flow of $t \mapsto H^\sharp(t)$. Recall that, for $E \in \R \setminus \sigma_\bulk$, the spectral flow $\Sf(H^{\sharp}(t), E, [0, 1])$ counts the net number of eigenvalue branches that cross $E$ downwards, as $t$ varies between $0$ and $1$. A precise definition together with some properties can be found in Appendix~\ref{sec:appendix:SF}. The following example illustrates the concept.
\begin{example}
    For the soft wall operator $t \mapsto W(t)$, we have
    \[
        \forall E > 0, \qquad \Sf(W(t), E, [0, 1]) = -N, \qquad \text{while} \qquad
        \forall E < 0, \qquad \Sf(W(t), E, [0, 1]) = 0.
    \]
    This can be seen directly from Figure~\ref{fig:spectrum_Wt}. This case fits in the framework of Theorem~\ref{th:main_general_1d} with $H = 0$, since in this case the $N$ \emph{bands} consist of the singleton $0$.
\end{example}

We will take advantage of the robustness of the spectral flow and reduce the problem to simpler models. For instance, we will first prove the results for $1$--periodic Jacobi operators (for which $h(n) = 0$ for $|n | \ge 2$), and then extend it to all $1$--periodic Hamiltonians by an approximation argument. The proof that the spectral flow equals $-\cN(E)$ for Jacobi operator is postponed to the next Section. In this section, we explain how to generalize the result from Jacobi operators to any periodic Hamiltonian.

A {\em Jacobi operator} is an operator $H$ acting on $\ell^2(\Z, \C^N)$ of the form
\begin{equation} \label{eq:def:PeriodicJacobi}
 \forall \Psi \in \ell^2(\Z, \C^N), \qquad (H \Psi)_n := a_{n-1}^* \Psi_{n-1} + b_n \Psi_n + a_n \Psi_{n+1},
\end{equation}
where $(a_n)_{n \in \Z}$ and $(b_n)_{n \in \Z}$ are two families of complex--valued $N \times N$ matrices, with $b_n = b_n^*$. We will often represent such operators with a block matrix of the form
\[
    H = \left( \begin{array}{c c c |c c c}
    \ddots & \ddots & & & & \\
    \ddots  & b_{-2} & a_{-2} &  & & \\
    & a_{-2}^* & b_{-1} & a_{-1} & & \\
    \hline
    & & a_{-1}^* & b_0 & a_0 & \\
    & &  & a_0^* & b_1 & \ddots  \\
    & & & & \ddots & \ddots \\
\end{array} \right).
\]
We used the canonical basis of $\ell^2(\Z, \C^N)$, and the two straight lines indicate  the decomposition $\Z = \Z^- \cup \Z^+$ with $\Z^- := \{ n \in \Z, \ n < 0 \}$ and $\Z^+ = \{ n \in \Z, \ n \ge 0 \}$. A Jacobi operator is $K$--{\em periodic} if $a_{n+K} = a_n$ and $b_{n+K} = b_n$. It is $1$--periodic, or simply periodic, if $b_n = b$ and $a_n = a$ are independent of $n \in \Z$. Periodic Jacobi operators are of the convolution form \eqref{eq:convolutionForm} with
\[
h(-1)= a, \quad h(0) = b, \quad h(1) = a^*.
\]
In this simple case, the bulk spectrum~\eqref{eq:spectrum_Hbulk} is given by
\begin{equation} \label{eq:spectrum_Hbulk_Jacobi}
    \sigma_\bulk := \sigma(H) = \bigcup_{k \in \Omega^*} \sigma \left( H_k \right) \quad \text{with} \quad H_k := a^* \re^{ - \ri k} + b  + a \re^{\ri k}.
\end{equation}

We will establish a version of Theorem~\ref{th:main_general_1d} in the special case of Jacobi operators.
\begin{proposition} \label{prop:main_Jacobi}
    Let $H$ be a Jacobi operator and $w : \R \mapsto \cS_N$ be a Lipschitz function satisfying \eqref{eq:wall_def_intro}. Define the operator family $H^\sharp(t)$ as in Section~\ref{sec:softWall}.
   For all $E \in \R \setminus \sigma_\bulk$, the energy $E$ is in the essential gaps of all operators $H^\sharp(t)$, and
        \[
        \Sf( H^\sharp(t) , E, [0, 1]) = -\cN(E).
        \]
\end{proposition}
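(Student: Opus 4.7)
My plan is to derive the formula $\Sf(H^\sharp(t), E, [0,1]) = -\cN(E)$ via a finite--volume truncation combined with the quasi--periodicity of $t \mapsto H^\sharp(t)$, reducing the computation to counting eigenvalues at two extreme wall positions. For $L \in \N$ large, let $H^\sharp_L(t)$ denote the Dirichlet restriction of $H^\sharp(t)$ to $\ell^2(\{-L, \ldots, L\}, \C^N)$: a finite--dimensional Hermitian matrix, Lipschitz in $t$. Writing $N_L(t)$ for the number of eigenvalues of $H^\sharp_L(t)$ strictly below $E$, the finite--dimensional spectral flow reduces to a plain count,
\[
    \Sf\bigl(H^\sharp_L(t), E, [t_-, t_+]\bigr) = N_L(t_+) - N_L(t_-).
\]
I would then evaluate both sides at two extreme points. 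When $t_- \ll -L$, the wall sits far to the left of the box, so $W(t_-)$ restricted to $[-L, L]$ is uniformly small; a standard Weyl--type counting (integrated density of states for the periodic bulk Jacobi $H$) gives $N_L(t_-) = (2L+1)\cN(E) + O(1)$. When $t_+ \gg L$, the wall dominates throughout the box and pushes all eigenvalues above $E$, so $N_L(t_+) = 0$. Hence $\Sf(H^\sharp_L(t), E, [t_-, t_+]) = -(2L+1)\cN(E) + O(1)$.

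The heart of the argument is then to transfer this count to the untruncated operator. Only eigenstates localized near the wall at position $\sim t$ are genuinely $t$--sensitive, while states localized near the Dirichlet boundaries at $\pm L$ depend only weakly on $t$ as long as $|t| \ll L$. A Combes--Thomas exponential decay estimate for the resolvent $(H - z)^{-1}$ when $z$ lies in the bulk gap, together with a telescoping argument over nested boxes, should yield
\[
    \Sf\bigl(H^\sharp_L(t), E, [t_-, t_+]\bigr) = \Sf\bigl(H^\sharp(t), E, [t_-, t_+]\bigr) + O(1)
\]
uniformly in $L$. Combining this with the quasi--periodicity $H^\sharp(t+1) = \tau_1^* H^\sharp(t) \tau_1$, the unitary invariance of the spectral flow under conjugation, and its additivity over subintervals, gives
\[
    \Sf\bigl(H^\sharp(t), E, [0, K]\bigr) = K\, \Sf\bigl(H^\sharp(t), E, [0, 1]\bigr)
\]
for every integer $K \ge 1$. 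Taking $K$ of order $L$, dividing both sides by $K$, and sending $L \to \infty$ eliminates the $O(1)$ corrections and yields $\Sf(H^\sharp(t), E, [0, 1]) = -\cN(E)$.

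The main obstacle is this truncation--to--full comparison: the $O(1)$ bound must be genuinely uniform in $L$, and ruling out spurious $E$--crossings from states localized near $\pm L$ is delicate because the wall itself visits the boundaries of the box near $t = t_\pm$, so the ``stationary in $t$'' claim for those edge states has to be handled with care. A more Jacobi--specific alternative is a discrete Pr\"ufer / transfer--matrix approach: the eigenvalue equation for a nearest--neighbour Jacobi operator is a first--order symplectic recursion, the spectral flow equals the Maslov index of the Lagrangian path traced out by the transfer matrices, and this index can be computed explicitly from the Floquet discriminant of the periodic bulk and shown to equal $-\cN(E)$. This bypasses the truncation comparison entirely, at the price of a more involved symplectic setup.
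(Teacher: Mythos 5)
Your global strategy --- Dirichlet truncation, exact eigenvalue counting at the two extreme wall positions, and an averaging over $\sim L$ periods of the quasi--periodic family to absorb $O(1)$ errors --- is genuinely different from the paper's proof. The paper instead deforms $w$ to a piecewise--linear wall, decouples the line into two half--lines by a compact \emph{cut} operator, discards the left half (where the wall pushes all spectrum above $E$), identifies the right half--line flow with that of a full--line \emph{dislocated} model, and computes the latter by comparison with a finite \emph{periodic} (torus) model in which the count $\ell\,\cN(E)\to(\ell-1)\,\cN(E)$ is exact. Your endpoint counts $N_L(t_-)=(2L+1)\cN(E)+O(1)$ and $N_L(t_+)=0$ are correct, and the final division--by--$K$ step is sound.

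The genuine gap is the comparison $\Sf(H^\sharp_L(t),E,[t_-,t_+])=\Sf(H^\sharp(t),E,[t_-,t_+])+O(1)$ \emph{uniformly in $L$}, and the tool you propose does not obviously deliver it. The spectral flow over an interval of length $\sim L$ is computed from a partition with $\sim L\,\cN(E)$ subintervals (there are that many crossings); comparing spectral projections of $H^\sharp_L$ and $H^\sharp$ subinterval by subinterval via Combes--Thomas gives an error per subinterval, hence an accumulated error $O(L)$, which is fatal after dividing by $L$. What actually rescues your scheme for Jacobi operators is not resolvent decay but algebra: decoupling $[-L,L]$ from its complement removes only the two hopping blocks at $\pm L$, a self-adjoint perturbation of rank at most $4N$ \emph{independent of $L$}, and one can show a finite--rank perturbation changes a spectral flow by at most (twice) its rank. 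This yields $\Sf(H^\sharp)=\Sf(H^\sharp_L)+\Sf(\text{exterior})+O(1)$, and the two exterior half--line contributions are $O(1)$ because over $[t_-,t_+]$ the wall penetrates only $O(1)$ sites of the right exterior and uncovers only $O(1)$ sites of the left exterior. Some version of this decoupling --- which is precisely the paper's cut--operator idea --- is unavoidable; as written, your telescoping step is the missing proof, not a detail. Finally, beware that your fallback via transfer matrices is explicitly disavowed by the authors: the off--diagonal block $a$ of a matrix--valued Jacobi operator need not be invertible, so the transfer matrix (and hence the symplectic/Maslov formalism) may not even be defined.
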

The proof of this Proposition is technically the most involved part of the paper. Its full proof is the topic of Section~\ref{sec:spectral_flows_Jacobi}. For now, let us explain how to deduce Theorem~\ref{th:main_general_1d} from it, that is how to go from the periodic Jacobi case to the general $1$--periodic Hamiltonians.


\begin{proof}[Proof of Theorem~\ref{th:main_general_1d} as a corollary of Proposition~\ref{prop:main_Jacobi}]

First, we explain how the case of finite-range interactions can be reduced to the Jacobi case, using {\em supercells}, and next we show that convolution by $\ell^1$-kernels can be approximated by finite rank kernels.

\subsubsection*{Step 1 : From $1$--periodic Jacobi operators to finite range interactions}
Assume indeed that the periodic Hamiltonian $H$ has a kernel $h$ with finite range, say $h(n) = 0$ for all $n > \ell$. Then we can write $H$ as a Jacobi operator acting on $\ell^2(\ell \Z, \C^{\ell N})$, by defining the supercell kernel $\widetilde{h}$ with
\[
    \widetilde{h}(-1)= \widetilde{a}, \quad \widetilde{h}(0) = \widetilde{b}, \quad \widetilde{h}(1) = \widetilde{a}^*.
\]
where $\widetilde{b}$ and $\widetilde{a}$ are matrices of size $(\ell N) \times (\ell N)$, given by
\[
    \widetilde{b} := \begin{pmatrix}
    h(0) & h(-1) & \cdots & h(-\ell+1) \\
    h(1) & h(0) & \cdots & h(-\ell+2) \\
    \vdots & \ddots & \ddots & \vdots \\
    h(\ell-1) & \cdots  & h(1) & h(0)
    \end{pmatrix}
    \quad  \text{and} \quad
   \widetilde{a} := \begin{pmatrix}
       h(-\ell) & 0 & \cdots & 0 \\
       h(-\ell+1) & h(-\ell) & \ddots & \vdots \\
    \vdots & \vdots & \ddots & 0\\
    h(-1) & h(-2) & \cdots  & h(-\ell)
    \end{pmatrix}.
\]
We denote by $\widetilde{H}$ the corresponding operator acting on $\ell^2(\Z, \C^{\ell N})$. Then $\sigma(H) = \sigma(\widetilde{H})$, and each Bloch band of $H$ becomes a set of $\ell$ bands for $\widetilde{H}$. In particular, $\cN_{\widetilde{H}}(E) = \ell \cN_H(E)$. 

\medskip

For the supercell soft wall, we set
\[
\widetilde{w}(x) := \begin{pmatrix}
w(\ell x) & 0 & \cdots & 0 \\
0& w(\ell x +1) & \ddots & \vdots \\
\vdots & \ddots & \ddots & 0 \\
0 & \cdots & 0 & w(\ell x + \ell - 1)
\end{pmatrix},
\]
and define $\widetilde{W}(t)$ by the construction in Subsection~\ref{sec:softWall}.
The operator $\widetilde{H}^\sharp(t):= \widetilde{H} + \widetilde{W}(t)$ is a periodic Jacobi operator with the same spectrum as the original $H^\sharp(\ell t)$. As $t$ increases from $0$ to $1$, the parameter $\ell t$ varies from $0$ to $\ell$. In particular, the spectral flow satisfies
\[
    \Sf \left( \widetilde{H}^\sharp(t), E, [0, 1]\right) 
    = \Sf \left( H^\sharp(t), E, [0, \ell]\right) 
    = \ell \, \Sf \left( H^\sharp(t), E, [0, 1]\right). 
\]
On the other hand, Proposition~\ref{prop:main_Jacobi} gives
\[
    \Sf \left( \widetilde{H}^\sharp(t), E, [0, 1]\right) = - \cN_{\widetilde{H}}(E) = -\ell \cN_H(E), 
\]
which proves Theorem~\ref{th:main_general_1d} in the case of finite range kernels.

\subsubsection*{Step 2 : From finite range to integrable kernels}

Now consider a general kernel $h$ in $\ell^1$. For $\ell \in \N$, we consider the periodic Hamiltonian $H_\ell$ with finite range potential
\[
    h_\ell(n) := h(n) \1(| n | \le \ell).
\]
This kernel has already been introduced in~\eqref{eq:def:Hell}. Recall that $\| H - H_\ell \|_{\rm op} \to 0$ as $\ell \to \infty$ by Young's inequality. Note also that $\| H^\sharp(t) - H_\ell^\sharp(t) \|_{\rm op} = \| H - H_\ell \|_{\rm op}$ is independent of $t$, hence goes to $0$ uniformly in $t \in \R$. In particular, the spectrum of $H_\ell^\sharp(t)$ approaches uniformly the spectrum of $H^\sharp(t)$. 

\medskip

Let us fix $E \in \R \setminus \sigma_\bulk$ and set
\[
    \varepsilon := \dist (E, \sigma_\bulk) > 0.
\]
The previous point shows that there exists $\ell_0$ such that, for all $\ell \ge \ell_0$, we have $\| H - H_\ell \|_{\rm op} \le \varepsilon/2$. For $\ell \ge \ell_0$, we obtain that $E$ is not the essential spectrum of $H_\ell^\sharp(t)$ for all $t$, hence the spectral flow $\Sf \left( {H}^\sharp_\ell(t), [0, 1] , E\right)$ is well-defined. In addition, we have, with obvious notations, $\cN_{H_\ell}(E) = \cN_{H}(E)$ for $\ell \ge \ell_0$. 

\medskip

We can now apply the robustness of the spectral flow with respect to small bounded perturbation, see Lemma~\ref{lem:stability_SF_periodic} in the Appendix, and get that
\[
    \Sf \left( H^\sharp(t), E, [0, 1] \right) = \lim_{ \ell \to \infty} \Sf \left( H^\sharp_\ell(t), E, [0, 1] \right)  = - \cN(E),
\]
where we used that Theorem~\ref{th:main_general_1d} holds for finite range interactions.
\end{proof}

\subsection{The spectrum at fixed $t_0$, proof of Theorem~\ref{th:main_fix_t0}} 
\label{sec:proof_main_fix_t0}

In this section, we prove Theorem~\ref{th:main_fix_t0}, which states that for all $t_0$ and all $E \in \R \setminus \sigma_\bulk$, there are at least $\cN(E)$ eigenvalues in each interval of size $\nu$ in the gap where $E$ lies. 

\begin{proof}
To start, recall all eigenvalues branches are $\nu$-Lipschitz. Unfortunately, we could not find a reference for this well-known result, so let us give a short proof of this fact, and prove the following: Let $t \mapsto A(t)$ be a continuous family of self-adjoint operators which is $\nu$-Lipschitz, in the sense $\| A(t) - A(s) \|_\op \le \nu | t - x |$ (this implies that they all have the same domain), then all isolated branches of eigenvalues are also $\nu$-Lipschitz. 

\medskip

In the case where $A(t)$ is matrix valued (finite dimensional case), we first note that for all normalized $\Psi$, the map $t \mapsto \bra \Psi, A(t) \Psi \ket$ is $\nu$-Lipschitz. Indeed, we have
\[
    \left| \langle \Psi, A(t) \Psi \rangle - \langle \Psi, A(s) \Psi \rangle \right| = \left| \bra \Psi, A(t) - A(s) , \Psi \ket \right| \le \| A(t) - A(s) \|_\op \| \Psi \|^2 \le \nu | t -s |.
\]
Since the $j$-th lowest eigenvalue $\lambda_j(t)$ is defined as a min--max combination of such $\nu$--Lipschitz functions, we conclude that all maps $t \mapsto \lambda_j(t)$ are $\nu$--Lipschitz. 

\medskip

We now turn to the general case. Let $t_0 \in \R$, let $\lambda_0 := \lambda(t_0)$ be an isolated eigenvalue of $A(t_0)$, and let $m = \dim \Ker(A(t_0) - \lambda_0)$ be its multiplicity. By continuity of the spectrum~\cite{Kat-95}, there is $\varepsilon > 0$ and $\eta > 0$ so that for all $t \in (t_0 - \varepsilon, t_0 + \varepsilon)$ the interval $(\lambda_0 - \eta, \lambda_0 + \eta)$ contains exactly $m$ eigenvalues of $A(t)$. We label the smallest eigenvalue as $\lambda_1(t)$ and $\lambda_m(t)$ for the largest one. In addition, by taking $\varepsilon$ and $\eta$ small enough, we may further assume that
\[
\begin{cases}
        {\rm dist} \left( \lambda_0 - \eta, \sigma (A(t)) \right) = {\rm dist} \left( \lambda_0 - \eta, \lambda_1(t) \right) = \lambda_1(t) - (\lambda_0 - \eta), \\
        {\rm dist} \left( \lambda_0 + \eta, \sigma (A(t)) \right) = {\rm dist} \left( \lambda_0 + \eta, \lambda_m(t) \right)
         = (\lambda_0 + \eta) - \lambda_m(t).
\end{cases}
\]
In particular, since $A(t)$ is self-adjoint, we have $\| \left( A(t) - (\lambda_0 - \eta) \right)^{-1} \|_\op = \left(  \lambda_1(t) - (\lambda_0 + \eta) \right)^{-1}$. Let us prove that $t \mapsto \lambda_1(t)$ is $\nu$--Lipschitz on the interval $(t_0 - \varepsilon, t_0 + \varepsilon)$. We have
\begin{align*}
    \left| \lambda_1(t) - \lambda_1(s) \right| & = \left| \lambda_1(t) - (\lambda_0 - \eta) \right| \cdot  \left| \lambda_1(s) - (\lambda_0 - \eta) \right|  \left| \frac{1}{\lambda_1(s) - (\lambda_0 - \eta)} - \frac{1}{\lambda_1(t) - (\lambda_0 - \eta)} \right|,
\end{align*}
and the last term is also
\begin{align*}
    & \left| \left\| \frac{1}{ A(s) - (\lambda_0 - \eta)} \right\|_\op -  \left\| \frac{1}{ A(t) - (\lambda_0 - \eta)} \right\|_\op \right|  \le \left\| \frac{1}{ A(s) - (\lambda_0 - \eta)} - \frac{1}{ A(t) - (\lambda_0 - \eta)} \right\|_\op \\
    & \quad  \le \left\| \frac{1}{ A(s) - (\lambda_0 - \eta)} \right\|_\op \left\| \frac{1}{ A(t) - (\lambda_0 - \eta)} \right\|_\op  \left\| A(t) - A(s) \right\|_\op
\end{align*}
We conclude that $\left| \lambda_1(t) - \lambda_1(s) \right|  \le \nu | t - s |$. The proof that $\lambda_m(s)$ is $\nu$--Lipschitz on the interval $(t_0 - \varepsilon, t_0 + \varepsilon)$ is similar. Finally, for any local continuous branch of eigenvalue $\lambda(\cdot)$ with $\lambda(t_0) = \lambda_0$, we have
\[
    \lambda_1(t) - \lambda_0 \le \lambda(t) - \lambda_0 \le \lambda_m(t) - \lambda_0, \quad \text{so} \quad
    \left| \lambda(t) - \lambda(t_0) \right| \le \nu | t - t_0 |.
\]
We therefore proved that for all $t_0$, there is $\varepsilon > 0$ so that any continuous branch $\lambda(\cdot)$ of isolated eigenvalue satisfies $| \lambda(t) - \lambda(t_0) | \le \nu | t - t_0 |$ for all $t \in (t_0 - \varepsilon, t_0 +  \varepsilon)$. By compactness of $[t, s]$ for all $t, s \in \R$, we conclude that $\lambda(\cdot)$ is globally $\nu$--Lipschitz.

\medskip

Now, let $E \in \R \setminus \sigma_\bulk$ with $\cN(E) > 0$. Since there is a nonzero spectral flow, $H^\sharp(t)$ has eigenvalues for at least some values of $t$. By shifting $t$ if necessary, we may assume $t=0$. We label this eigenvalue $\lambda_0(0)$, and denote by $\lambda_0(t)$ a corresponding continuous branch.  We use this as a starting point to label the eigenvalue curves $\lambda_k(t)$ for $t \in [0,1]$ such that, for all $t \in [0,1]$
\[
\lambda_k(t) \le \lambda_{k+1}(t), \text{ where } k \in \Z,
\]
and with the convention that $\lambda_k(t)$ equals the edge of the essential spectrum if there exists no $k$-th eigenvalue at this value of $t$. Since the spectrum is periodic, we have $\lambda_k(1)= \lambda_{j}(0)$ for some $j \in \Z$. The spectral flow equals $-\cN(E)$, so there are exactly $\cN(E)$ eigenvalues that cross $\lambda_k(1)$ upwards. We deduce that we actually have
$\lambda_k(1) = \lambda_{k + \cN(E)}(0)$ for all values of $k \in \Z$ such that $\lambda_k(0)$ is an eigenvalue of $H^\sharp (0)$. 
Then, the Lipschitz continuity of $t \mapsto \lambda_k(t)$ gives
\[
\lambda_{k + \cN(E)}(0)- \lambda_{k}(0) = \lambda_k(1) - \lambda_k(0) \le \nu. 
\]
Hence, there are at least $\cN(E)$ eigenvalues for $H^\sharp(0)$ in each interval of length $\nu$ in the gap. 
If $\nu$ is smaller than the width of the gap, the same argument shows that there is edge spectrum for any $t_0 \in [0,1]$, and we can repeat the counting argument for this $t_0$. 

\end{proof}

\subsection{Example : the SSH model}
\label{sec:SSH}

Before going to the technicalities of the proof, let us illustrate our Theorem in the case of the Su--Schrieffer--Heeger (SHH) model~\cite{SuSchHee-79}.

\subsubsection{Presentation of the model, and basic facts}

\begin{figure}[ht]
    \centering
    \includegraphics[width=0.5\textwidth]{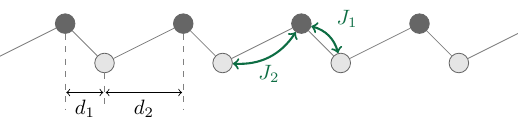}
    \includegraphics[width=0.4\textwidth]{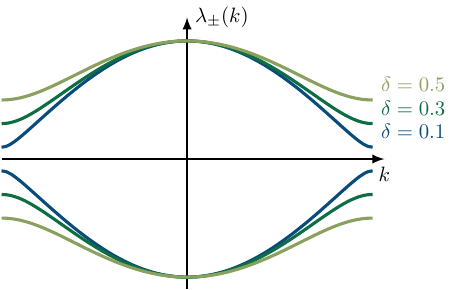}
    \caption{Left: illustration of the SSH model. Right: its band structure for different values of $\delta := | |J_1| - |J_2| |$.}
    \label{fig:SSH_chain}
\end{figure}
The SSH model is a tight-binding Hamiltonian describing a polyacetylene chain, a simple polymer where the distance between neighboring carbon-hydrogen (CH) pairs takes alternating values. We denote by $d_1 > 0$ (resp. $J_1 \in \C$) the distance (resp. hopping amplitude) between the $(2n)$-th (CH) pair and the $(2n + 1)$-th one, and by $d_2 = 1 - d_1 > 0$ (resp. $J_2 \in \C$) the ones between $(2n+1)$-th and the $(2n+2)$-th. If $d_1 \neq d_2$, then the distances alternate, hence so are the hopping parameters, see Figure~\ref{fig:SSH_chain}. In the tight-binding approximation (and for an idealized infinite chain), the Hamiltonian takes the form of a $2$--periodic Jacobi operator acting on $\ell^2(\Z, \C)$ with alternating hopping strengths $J_1$ and $J_2$, that is
 \[
\begin{pmatrix}
    \ddots & \ddots & & & & \\
    \ddots & 0 & J_1 & & & \\
    &  J_1^* & 0 & J_2 & & \\
    & & J_2^* & 0 & J_1 & \\
    & & & J_1^* & 0 & \ddots \\
    & & & & \ddots & \ddots
\end{pmatrix}.
\]
It most cases, we take $J_1, J_2$ real numbers. 
Complex values of $J_1, J_2$ happen to appear in the case of the Wallace model for graphene with zigzag cuts.
In this case, the phases can be removed by defining a \emph{gauge transformation} $U= e^{i A}$ where $A$ is the multiplication operator by 
$$
A_n =\begin{cases}
\sum_{j=0}^{n-1} \phi_n & n \ge 1 \\
0 & n=0 \\
-\sum_{j=n}^{-1} \phi_n & n \le -1 
 \end{cases} \quad  \text{ with } 
 \phi_n =\begin{cases}
\arg{J_1} & n \text{ odd} \\
\arg{J_2} & n \text { even }
\end{cases}.
$$

\medskip
The SSH Hamiltonian becomes a $1$-periodic Jacobi operator $H$ if we take $N = 2$ and the $2 \times 2$ blocks
    \begin{equation} \label{eq:matrices_SSH}
    b = \begin{pmatrix}
        0 & J_1 \\ J^*_1 & 0
    \end{pmatrix}
    \quad \text{and} \quad
    a =  \begin{pmatrix}
        0 & 0 \\ J_2 & 0
    \end{pmatrix}.
    \end{equation}
 \medskip
According to \eqref{eq:spectrum_Hbulk_Jacobi}, the bulk spectrum can be computed from 
$$H_k = b + a \re^{\ri k} + a^* \re^{- \ri k} = \begin{pmatrix}
    0 & J_1 + J_2 \re^{-\ri k} \\
    J_1 + J_2 \re^{\ri k} & 0
\end{pmatrix},$$ 
whose spectrum is $\pm | J_1 + J_2 \re^{\ri k} |$. Note that, as $k$ varies on $\R$, the complex number $J_1 + J_2 \re^{\ri k}$ describes a circle with center $J_1$, and radius $| J_2 |$. With this in mind, we conclude that
\[
    \sigma_\bulk := \sigma(H) = \sigma_\ess (H) = [ -W, - \delta ] \cup [\delta, W], \quad W := | J_1 | + | J_2 |, \quad \delta := \left| | J_1 | - | J_2 | \right|.
\]
In particular, is $| J_1 | \neq | J_2 |$, there is a gap of size $2 \delta > 0$ around the origin, separating the two Bloch bands. In this gap, we have $\cN(E) = 1$.

\begin{remark}[Hard truncation for SSH]
\label{rem:hardcut_SSH}
For the hard cut Hamiltonian
\[
H^\sharp := \begin{pmatrix}
    0 & J_1 & &   \\
    J_1^* & 0  & J_2 &  \\
     & J_2^* & 0 & \ddots \\
     & & \ddots & \ddots 
\end{pmatrix},
\]
seen as an operator on $\ell^2(\N_0, \C)$ (without supercell), then any $\psi \in \ell^2(\N_0, \C)$ solution to $H^\sharp \psi = 0$ solves $J_1 \psi_1 = 0$ and $\psi_{n+2} = \frac{-J_1^*}{J_2} \psi_n$ for all $n \ge 0$. We easily deduce that $\psi_{2n + 1} = 0$, and that $\psi$ is non null and square integrable at $+ \infty$ iff $| J_1 | < | J_2 |$. So $0 \in \sigma \left( H^\sharp \right)$ iff $| J_1 | < | J_2 |$. When the hard cut is translated, and crosses one atom, the role of $J_1$ and $J_2$ are exchanged. So, depending on the location of the hard cut, the eigenvalue $0$ is a (protected) eigenvalue, or is not an eigenvalue.
\end{remark}

\subsubsection{Numerical simulations for the SSH chain}

We now compute numerically the spectral flow in the SSH model, in the presence of a shifted soft wall. For the soft wall, we take
\[  
    w(x) = \begin{pmatrix}
        w_1(x) & 0 \\ 0 & w_1(x+d_1)
    \end{pmatrix}
    \quad \text{with} \quad 
    w_1(x) := \begin{cases}
        0 & \quad \text{for} \quad x \ge 0 \\
        - \nu x  & \quad \text{for} \quad x \le 0
    \end{cases},
\]
which corresponds to the evaluation of $w_1$ at the location of each of the two atoms in the first unit cell. Note that $w$ satisfy the assumption of a soft wall, and is $\nu$--Lipschitz.

\medskip

To compute the edge spectrum of the edge operator, we choose the following simple procedure: we restrict $H^\sharp(t)$ to the the box $[-L, L]$ (with $2L+1$ unit cells), so we consider the finite matrices
\[
    H^\sharp_L(t) := \begin{pmatrix}
        b & a &  & &  \\
        a^* & b & a &  & \\
        0 & \ddots & \ddots & \ddots &  \\
        & &  a^* & b & a \\
        & &  & a^* & b
    \end{pmatrix} 
    + 
    \begin{pmatrix}
    W_t(-L) & & &  \\
    & W_t(-L + 1)  & & \\
    & & \ddots & \\
    & &  & W_t(L)
    \end{pmatrix}.
\]
Then, for each eigenpair $(\lambda, u)$ with $\| u \|_{\ell^2_L} = 1$, we check if the mass of $u$ is localized near the origin, or, more specifically, not localized at the boundary of our simulation box. This procedure discards the spurious edge modes due to the truncation.  We detect that $u$ is a {\em true} edge mode whenever $\| u \1(-L/2 \le n \le L/2) \|_{\ell^2_L} \ge 3/4$.

\medskip

For our numerical simulations, we took $d_1 = 1/4$ (so $d_2 = 3/4$) for the distances between the atoms, $J_1 = 3/2$ and $J_2 = 1/2$ for the hopping parameters, $L = 100$, and we took different values of $\nu \in [0.5, 1, 5, 10]$. The results are displayed in Figure~\ref{fig:SSH}. In each of these figures, we observe a spectral flow of $-1$ in the middle gap, and of $-2$ above the second Bloch band, in accordance with Theorem~\ref{th:main_general_1d}. The flow of eigenvalues becomes steeper and steeper as $\nu$ increases, as predicted in Theorem~\ref{th:main_fix_t0}. Note that our soft wall $w$ is continuous, but not continuously differentiable. This explains the {\em kinks} in the Right figures ($\nu = 5$ and $\nu = 10$), which happens when the kink of the wall touches the second atom located at $x_2 = 1/4$. Note also that since we took $x \mapsto w_1$ non-increasing, the maps $t \mapsto W(t)$ and $t \mapsto H^\sharp(t)$ are operator non-decreasing. So all eigenvalue branches are non-decreasing, as can be checked on these pictures. 

Finally, the first two simulations correspond to values of $\nu$ smaller than the size of the gap ($2|J_1 - J_2|=2$) and we can observe that the lower bound on the number of eigenvalues from Theorem~\ref{th:2d_fix_t0} is almost achieved here.

\begin{figure}
    \centering
    \includegraphics[width=0.24 \textwidth]{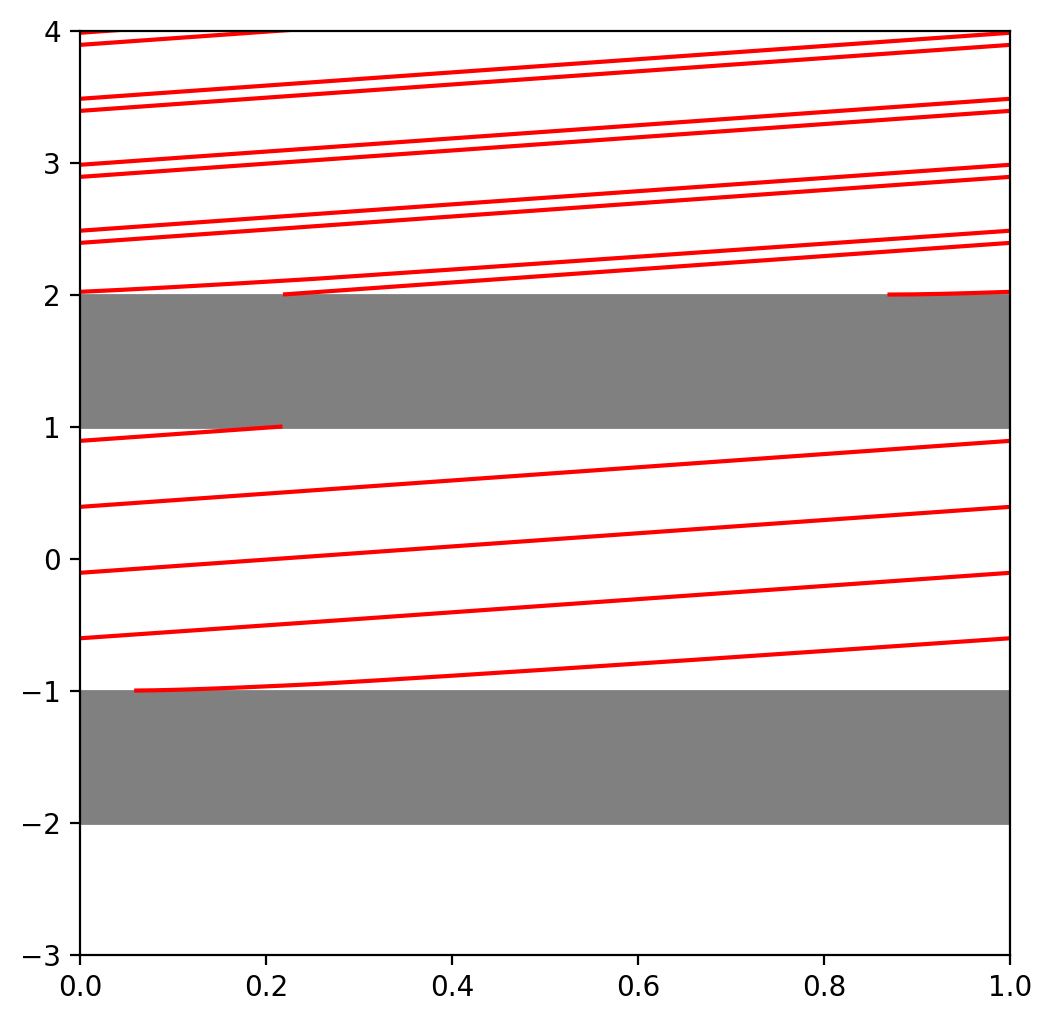}
    \includegraphics[width=0.24 \textwidth]{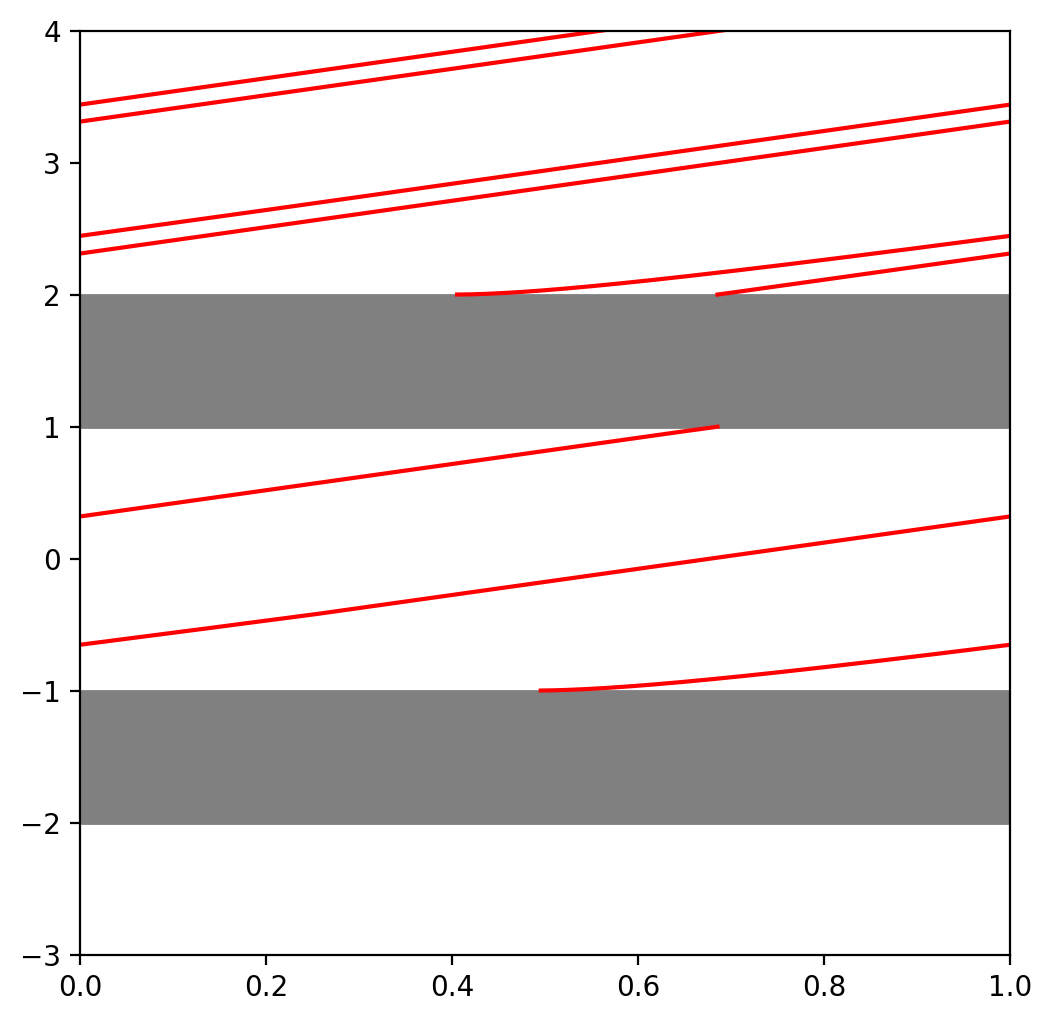}    \includegraphics[width=0.24 \textwidth]{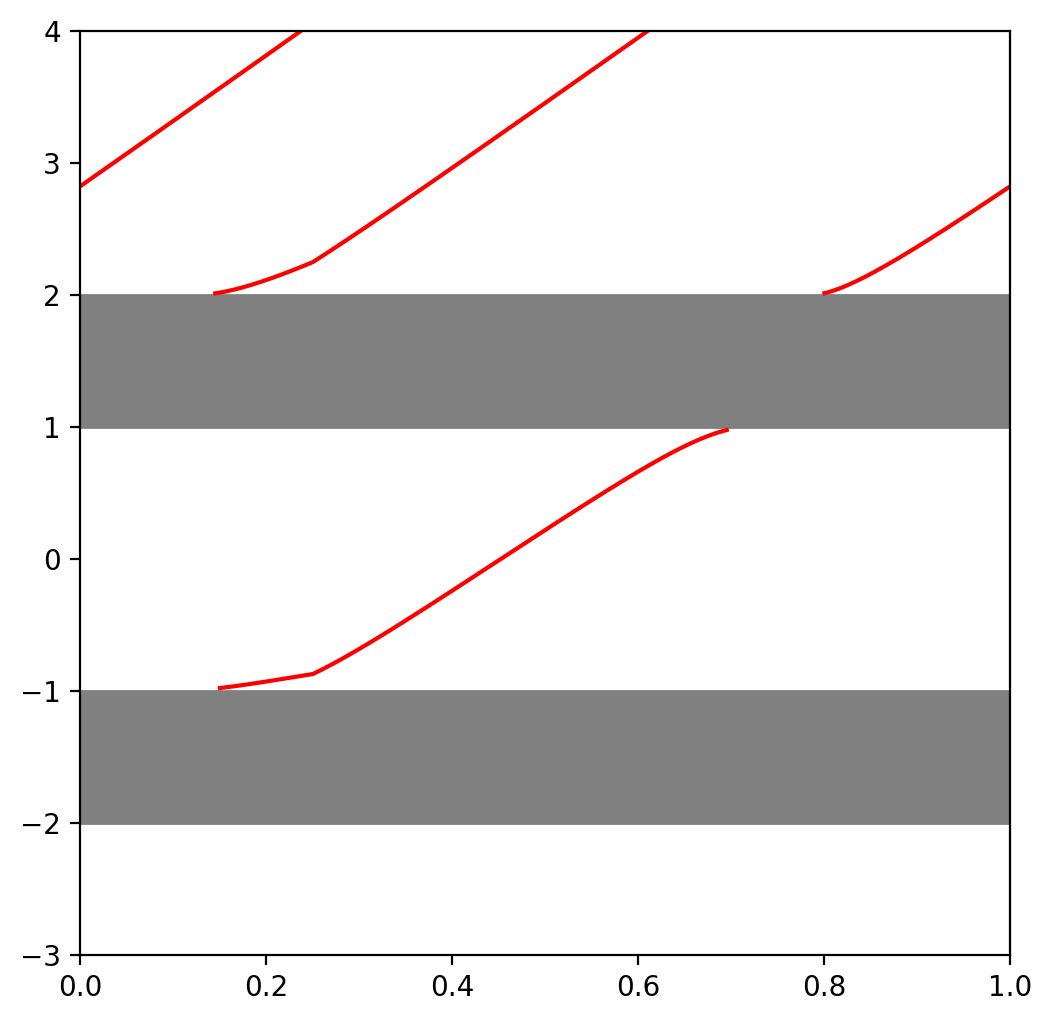}
    \includegraphics[width=0.24 \textwidth]{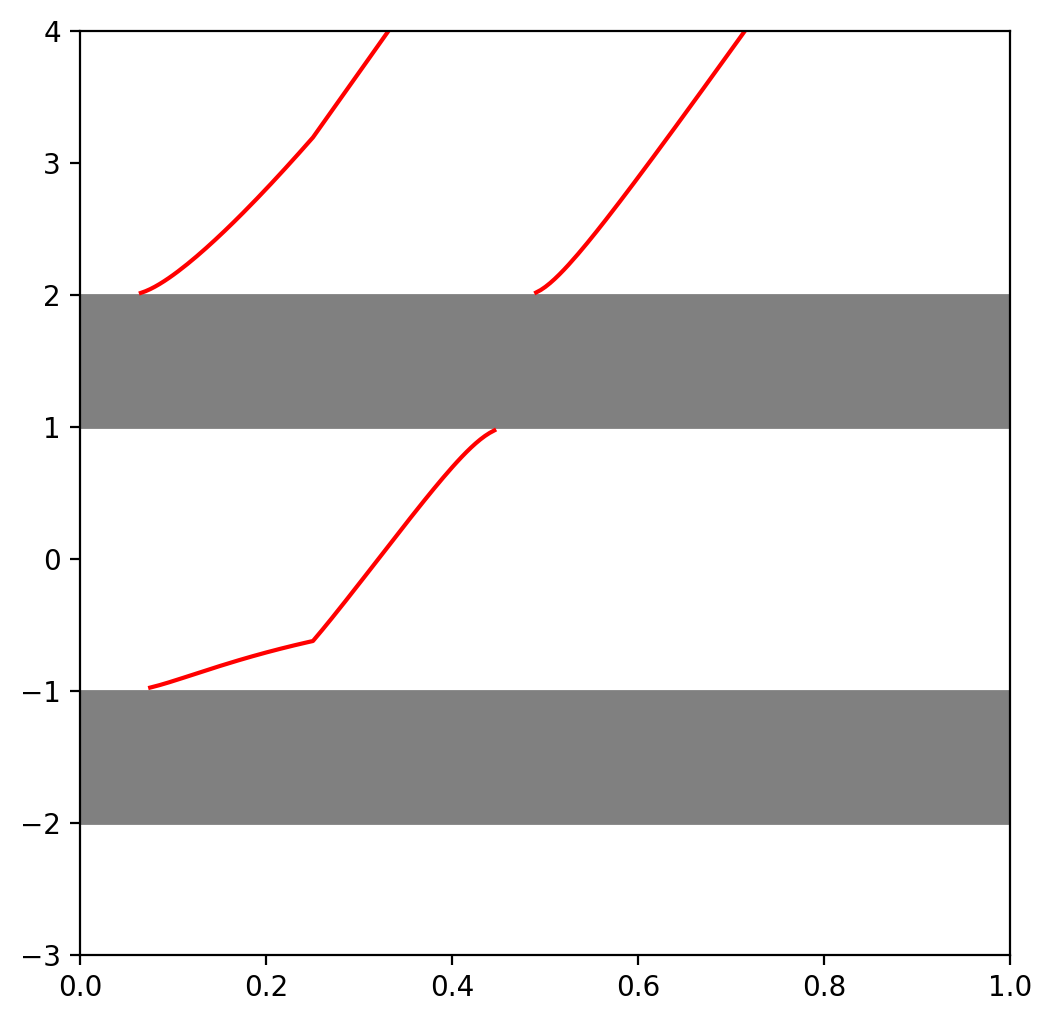}
    \caption{Numerics for the spectrum of $t \mapsto H^\sharp(t)$ in the SSH model with the potential $w_1(x)$ , for different values of $\nu$. From left to right, $\nu = 0.5$, $\nu = 1$, $\nu = 5$ and $\nu = 10$. The essential spectrum is displayed in grey, and the edge spectrum in red. For $E$ in the gap between the $2$ bands, $\cN(E) = 1$ and the spectral flow is $-1$, above the bands, $\cN(E) = 2$ and the spectral flow equals $-2$. When the Lipschitz constant increases, the curves become steeper.}
    \label{fig:SSH}
\end{figure}


\section{Computation of spectral flows in the Jacobi case}
\label{sec:spectral_flows_Jacobi}

It remains to prove Proposition~\ref{prop:main_Jacobi}, that is to prove that the spectral flow is $-\cN(E)$ in the case of periodic Jacobi operators. In order to do so, we will use the robustness of the spectral flow to make successive modifications to the operators.
In Subsections~\ref{ssec:cut_operator}, the Jacobi operator $H^\sharp(t)$ will be shown to have the same spectral flow as a {\em cut} model $H^\Sigma_\cut (t)$, which, itself, has a same spectral flow equal to a dislocated model $\widetilde H(t)$, see Subsection~\ref{sec:dislocated}. This last spectral flow can be computed explicitly, and turns out to be equal to $-\cN(E)$.

\medskip

Before we start, we emphasize that several techniques are available to study Jacobi operators, such as the ones involving {\em transfer matrices}. If $H$ is a Jacobi operator, then the equation $H \Psi = E \Psi$ implies that
\[
    \begin{pmatrix}
        \Psi_{n+1} \\ \Psi_n 
    \end{pmatrix} = T(E) \begin{pmatrix}
    \Psi_n \\ \Psi_{n-1}
    \end{pmatrix}, \quad \text{with} \quad
    T(E) := \begin{pmatrix}
        - a^{-1} (b - E) & -a^{-1} a^* \\
        1 & 0
    \end{pmatrix},
\]
so that we can study the properties of $\Psi$ by studying the transfer matrix $T(E)$. However, this matrix is not always well-defined (in the previous formulation, we need $a$ to be invertible for instance). Actually, we were not able to recover our results using the formalism of transfer matrices, even in the case where $a$ is invertible. Our proof rather relies on a direct study of the full operator, and a variational argument.

\subsection{Reduction to the cut case} 
\label{ssec:cut_operator}

We first deform the potential $w$ to a simpler form. In what follows, we fix a Jacobi operator $H \in \ell^2(\Z, \C^N)$ with kernel $h$ having diagonal entries $b$ and off-diagonal entries $a$. Let us fix $E \in \R \setminus \sigma_\bulk$, and consider some $\Sigma \in \R$ large enough so that
\[
    \Sigma \ge E + 4 C_{a,b}  + 1, \qquad \text{with} \quad
    C_{a,b} := \max \{ \| a \|_\op, \| b \|_\op \}.
\]
Since $w(x) \to +\infty$ as $x \to - \infty$, there is $x_\Sigma \in \R$ so that $w(x) \ge \Sigma$ for all $x < x_\Sigma$. Without loss of generality, we may assume $x_\Sigma \le -1$. We consider the modified potential (here and in what follows, $\Sigma$ is shorthand for $\Sigma \bbI_N$)
\begin{equation} \label{eq:def:wSigma}
    w_\Sigma(x) := \begin{cases}
        0 & \quad \text{if} \quad x \ge 0, \\
        -x (\Sigma  - b) & \quad \text{if} \quad x \in [-1, 0], \\
        (\Sigma   - b)  & \quad \text{if} \quad x \in [x_\Sigma, -1], \\
        (x - x_\Sigma + 1) (\Sigma  - b) + (x_\Sigma - x) w(x) & \quad \text{if} \quad x \in [x_\Sigma - 1, x_\Sigma], \\
        w(x) & \quad \text{if} \quad x \le x_\Sigma - 1.
    \end{cases}
\end{equation}
We also denote by $W_\Sigma(t)$ the corresponding modified wall operator. The key properties of $w_\Sigma$ and $W_\Sigma$ are summarized in the following straightforward Lemma.
\begin{lemma} \label{lem:wSigma}
    The potential $w_\Sigma$ is Lipschitz (hence continuous), is linear on $[-1, 0]$ and satisfies 
    \[
        \forall x > 0, \quad w_\Sigma(x) = 0, \quad \text{while} \quad
        \forall x < -1, \quad w_\Sigma(x) \ge E + 3 C_{a,b} + 1.
    \]
\end{lemma}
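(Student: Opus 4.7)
The statement is essentially a routine case-check against the piecewise definition in~\eqref{eq:def:wSigma}, so my plan is just to verify each of the three claims (linearity on $[-1,0]$, vanishing on $(0,\infty)$, and the lower bound on $(-\infty,-1)$) by inspecting the five pieces and matching values at the four joining points $0$, $-1$, $x_\Sigma$, $x_\Sigma-1$. The linearity on $[-1,0]$ and the vanishing for $x > 0$ are immediate from the first two lines of the definition.

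For the Lipschitz property, my plan is to check continuity piecewise and then bound the slope on each piece. At the junction $x=0$, the middle formula $-x(\Sigma - b)$ gives $0$; at $x = -1$ it gives $\Sigma - b$, matching the constant piece on $[x_\Sigma,-1]$; at $x = x_\Sigma$ the interpolation formula $(x - x_\Sigma + 1)(\Sigma - b) + (x_\Sigma - x) w(x)$ reduces to $\Sigma - b$; and at $x = x_\Sigma - 1$ it reduces to $w(x_\Sigma - 1)$. The affine piece on $[-1,0]$ has slope $\|\Sigma - b\|_\op \le \Sigma + C_{a,b}$, the constant piece has slope $0$, and the interpolation piece is Lipschitz because it is an affine combination of the Lipschitz function $w$ (with Lipschitz constant $\nu$) and the constant matrix $\Sigma - b$; finally $w$ itself is $\nu$-Lipschitz on $(-\infty, x_\Sigma -1]$. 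Taking the maximum of these finitely many slopes yields a global Lipschitz constant.

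For the lower bound on $(-\infty, -1)$, I would split into the three relevant subintervals and read every inequality in the operator sense on $\cS_N$. On $[x_\Sigma, -1]$, we have $w_\Sigma(x) = \Sigma\,\bbI_N - b$; since $b \le \|b\|_\op \bbI_N \le C_{a,b}\,\bbI_N$ and $\Sigma \ge E + 4C_{a,b} + 1$, we get $w_\Sigma(x) \ge (E + 3C_{a,b} + 1)\,\bbI_N$. On $[x_\Sigma - 1, x_\Sigma]$, the definition writes $w_\Sigma(x)$ as a convex combination of $\Sigma - b$ and $w(x)$: the first term satisfies the required bound by the computation just done, while the second satisfies $w(x) \ge \Sigma\,\bbI_N \ge (E + 3C_{a,b} + 1)\,\bbI_N$ because $x \le x_\Sigma$ and the choice of $x_\Sigma$ guarantees $w \ge \Sigma$ there. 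A convex combination of two Hermitian matrices satisfying a common operator lower bound satisfies the same bound. Finally, on $(-\infty, x_\Sigma - 1]$ we have $w_\Sigma(x) = w(x) \ge \Sigma \ge E + 3C_{a,b} + 1$ directly.

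There is no real obstacle in this argument, it is purely bookkeeping. The only point worth emphasizing is that all the inequalities in play are operator inequalities in $\cS_N$, and that the numerical constant $\Sigma \ge E + 4 C_{a,b} + 1$ has been chosen precisely so that $\Sigma\,\bbI_N - b \ge (E + 3 C_{a,b} + 1)\,\bbI_N$ holds with one unit $C_{a,b}$ of slack, which will be used in the subsequent sections.
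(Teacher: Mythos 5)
Your proposal is correct and follows essentially the same route as the paper: the paper also dismisses the linearity and vanishing claims as obvious, and proves the lower bound for $x<-1$ by noting that both $\Sigma - b \ge \Sigma - \|b\|_\op \ge E + 3C_{a,b}+1$ and $w(x) \ge \Sigma \ge E + 3C_{a,b}+1$ hold there, so that any convex combination of these two matrices satisfies the same operator bound. Your additional piecewise verification of the Lipschitz property and of the matching values at the junction points is correct bookkeeping that the paper leaves implicit.
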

\begin{proof}
    The first points are obvious. If $x < -1$, we note that $(\Sigma - b) \ge \Sigma - \| b \|_\op \ge E + 3 C_{a,b} + 1$ and that, if $x < x_\Sigma \le -1$, we also have $w(x) \ge \Sigma \ge E + 3 C_{a,b} + 1$. The last inequality follows since $w_\Sigma$ is a combination of these two matrices for $x < -1$.
\end{proof}

The difference of potentials $(w - w_\Sigma)$ is continuous, vanishes for $x < x_\Sigma$, and goes to $0$ as $x \to + \infty$. In particular, for all $t \in \R$, $(W - W_\Sigma)(t)$ is compact, and the map $t \mapsto (W - W_\Sigma)(t)$ is continuous and translation equivariant.

\medskip
 
Let $H^{ \Sigma}(t) := H + W_\Sigma(t)$ be the corresponding edge Hamiltonian. We have
\[
    H^{ \Sigma}(t) - H^{\sharp}(t) = W_\Sigma(t) - W(t),
\]
which is compact for all $t \in \R$. In addition, both operators $H^\Sigma$ and $H^\sharp$ are translation equivariant. By robustness of the spectral flow with respect to compact operators, see Lemma~\ref{lem:stability_SF_compact_periodic}, we deduce that
\begin{equation} \label{eq:H_sharp_to_H_Sigma}
\Sf \left( H^{\sharp}(t), E, [0, 1] \right) = \Sf \left( H^{\Sigma}(t), E, [0, 1] \right).
\end{equation}
In what follows, we study the operator $H^{ \Sigma}(t)$. The main advantage of this new potential $w_\Sigma$ is that it is piece-wise linear for $x \in [-1, 0]$.


\medskip

Our goal is now to compute the spectral flow of $t \mapsto H^\Sigma(t)$. For $t \in [0,1]$ we define the cut operator
\[
    K(t) := \left( \begin{array}{c c c |c c c}
        \ddots & \ddots & & & & \\
        \ddots  & 0 & (1 - t) a &  & & \\
         & (1 - t) a^* & 0 & a & & \\
        \hline
        & & a^* & 0 & t a & \\
        & &  & t a^* & 0 & \ddots  \\
        & & & & \ddots & \ddots \\
    \end{array} \right),
\]
where all undisplayed matrix elements are $0$. This operator is piece-wise linear, of the form $K(t) = tK(1) + (1 - t) K(0)$, with
\[
    K(0) = \left( \begin{array}{c c c |c c c}
        \ddots & \ddots & & & & \\
        \ddots  & 0 &  a &  & & \\
        &  a^* & 0 & a & & \\
        \hline
        & & a^* & 0 & 0 & \\
        & &  & 0 & 0 & \ddots  \\
        & & & & \ddots & \ddots \\
    \end{array} \right), 
\qquad 
    K(1) = \left( \begin{array}{c c c |c c c}
        \ddots & \ddots & & & & \\
        \ddots  & 0 & 0 &  & & \\
        & 0 & 0 & a & & \\
        \hline
        & & a^* & 0 & a & \\
        & &  & a^* & 0 & \ddots  \\
        & & & & \ddots & \ddots \\
    \end{array} \right).
\]
Note that $K(1) = \tau_1 K(0) \tau_1^*$, so the map $t \mapsto K(t)$ can be extended by translation equivariance, and this extension is continuous. For simplicity, in what follows, we will restrict the study to $t \in [0, 1]$. 

We now define the cut Jacobi operator
\[
    H_\cut^\Sigma(t) := H^\Sigma(t) - K(t), \quad \text{so that} \quad
    H^\Sigma(t) = H^\Sigma_\cut (t) + K(t).
\]
Since $t \mapsto H^\Sigma(t)$ and $t \mapsto K(t)$ are continuous and translation equivariant, so is $t \mapsto H_\cut^\Sigma(t)$. So, using again the robustness of the spectral flow with respect to compact, translation equivariant operators, see Lemma~\ref{lem:stability_SF_compact_periodic}, we find that
\[
    \Sf(H^\Sigma(t) , E, [0, 1]) =  \Sf(H^\Sigma_\cut(t) , E, [0, 1]).
\]
Also, by construction of the cut operator, for all $t \in [0, 1]$, $H_\cut^\Sigma(t)$ is block diagonal with respect to the decomposition $\Z = \Z^- \cup \Z^+$ (note that this is no longer the case for $t \in \R \setminus [0, 1]$). We write $H_\cut^\Sigma(t) = H^\Sigma_L(t) \oplus  H^\Sigma_R(t)$. 
In an analogous way, we decompose the bulk operator as $H_{\rm cut}(t) = H_L(t) \oplus H_R(t)$ and the wall operator as $W(t) = W_L(t) \oplus W_R(t)$, so that $H^\Sigma_L(t) = H_L(t) + W_L(t)$ and $H^\Sigma_R(t) = H_R(t) + W_R(t)$. 

\medskip

These left and right operators are not periodic nor translation equivariant in $t$. Still, one can define the spectral flow as the net number of eigenvalues crossing the energy $E$ downwards. Since these operators are no longer translation equivariant, a priori this spectral flow could depend on the value of $E$ in the essential gap (see Appendix~\ref{sec:appendix:SF}). In any case, since $H_\cut^\Sigma(t)$ is block diagonal, we have
\[
    \Sf(H^\Sigma_\cut(t) , E, [0, 1]) = \Sf(H_L^\Sigma(t) , E, [0, 1])  + \Sf(H_R^\Sigma(t) , E, [0, 1]).
\]

\begin{lemma} \label{lem:H_Sigma_to_H_R_Sigma}
    We have 
    $\Sf \left( H_L^{\Sigma}(t), E, [0, 1] \right) = 0$ and thus
    \[
    \Sf(H^\Sigma (t) , E, [0, 1]) = \Sf(H_R^\Sigma(t) , E, [0, 1]).
    \]
\end{lemma}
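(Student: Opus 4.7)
The plan is to show that the left operator $H_L^\Sigma(t)$ has all its spectrum strictly above $E$, uniformly in $t \in [0,1]$, which immediately forces the spectral flow to vanish. No eigenvalue can cross $E$ downward (or upward) if $E$ never lies in the spectrum.

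First, I would decompose $H_L^\Sigma(t) = H_L^\cut(t) + W_L(t)$, where $H_L^\cut(t)$ denotes the restriction $\Pi_L (H - K(t)) \Pi_L^*$ of the cut bulk operator to $\ell^2(\Z^-, \C^N)$. Since the entries of $H_L^\cut(t)$ are made of the matrices $b$ on the diagonal and $a$ or $t\,a$ on the off-diagonals (all bounded in operator norm by $C_{a,b}$), the triangle inequality together with the Jacobi tridiagonal structure gives the uniform bound $\|H_L^\cut(t)\|_\op \le 2\|a\|_\op + \|b\|_\op \le 3\,C_{a,b}$ for every $t \in [0,1]$. In particular, in the sense of quadratic forms,
\[
H_L^\cut(t) \ge - 3\,C_{a,b}\ \bbI \qquad \text{uniformly in } t \in [0,1].
\]

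Next I would use Lemma~\ref{lem:wSigma} to control $W_L(t)$ from below. For every $n \in \Z^-$ (that is, $n \le -1$) and every $t \in [0,1]$, one has $n - t \le -1$, and therefore the lemma yields $w_\Sigma(n-t) \ge (E + 3\,C_{a,b} + 1)\,\bbI_N$. Since $W_L(t)$ is block-diagonal with these matrices as blocks,
\[
W_L(t) \ge (E + 3\,C_{a,b} + 1)\,\bbI \qquad \text{uniformly in } t \in [0,1].
\]
Adding the two bounds gives $H_L^\Sigma(t) \ge (E + 1)\,\bbI$ for all $t \in [0,1]$, so $\sigma(H_L^\Sigma(t)) \subset [E+1, +\infty)$.

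Consequently, $E$ stays in the resolvent set of $H_L^\Sigma(t)$ throughout the whole path, and the continuous family $t \mapsto H_L^\Sigma(t)$ has no eigenvalue branch that ever crosses $E$. By the definition of the spectral flow recalled in Appendix~\ref{sec:appendix:SF}, this means $\Sf(H_L^\Sigma(t), E, [0,1]) = 0$. Combining with the already established additivity
\[
\Sf(H_\cut^\Sigma(t), E, [0,1]) = \Sf(H_L^\Sigma(t), E, [0,1]) + \Sf(H_R^\Sigma(t), E, [0,1])
\]
and the identification \eqref{eq:H_sharp_to_H_Sigma} together with the compact perturbation argument reducing $H^\Sigma(t)$ to $H_\cut^\Sigma(t)$, we get the announced equality with $\Sf(H_R^\Sigma(t), E, [0,1])$. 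There is no real obstacle here, only the bookkeeping of constants: the choice $\Sigma \ge E + 4C_{a,b} + 1$ made earlier in the section was precisely engineered so that the wall on the left exceeds the operator norm of the bulk restriction with an extra margin of $1$, which is exactly what the above computation exploits.
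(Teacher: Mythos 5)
Your proposal is correct and follows essentially the same route as the paper: a uniform operator-norm bound $\|H_L(t)\|_\op \le 3C_{a,b}$ on the tridiagonal part, the lower bound $w_\Sigma(n-t) \ge \Sigma - b$ for $n \le -1$, $t\in[0,1]$ from the construction of $w_\Sigma$, and the choice of $\Sigma$ to conclude $H_L^\Sigma(t) \ge E+1$ so that $E$ never meets the spectrum. The only cosmetic remark is that Lemma~\ref{lem:wSigma} is stated for $x<-1$ strictly, while you need it at $x=-1$ as well (when $n=-1$, $t=0$); this is harmless since $w_\Sigma(-1)=\Sigma-b$ satisfies the same bound.
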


\begin{proof}
    We write $H^\Sigma_L(t) = H_L(t) + W_L(t)$. Explicitly, the matrix representing $H_L(t)$ is given by (we keep the straight lines to emphasize that this matrix in only infinite in one direction)
    \[
    H_L(t) = \left( \begin{array}{c c c c | }
        \ddots & \ddots & &  \\
        \ddots  & b &  a &    \\
        &  a^* & b & ta   \\
        & & ta^* & b   \\
        \hline
    \end{array} \right).
    \]
    Note that $H_L(t)$ is a bounded operator on $\ell^2(\Z^-, \C^N)$, with uniform bound
    \[
        \forall t \in [0, 1], \qquad \| H_L(t) \|_{\op} \le 3 C_{a,b}.
    \]
    Indeed, we have, using that $0 \le t \le 1$, and with $\Psi_0 := 0$, that
    \[
        \| H_L(t) \Psi  \|_{\ell^2}^2 \le \sum_{n=-\infty}^{-1} ( \| a^* \|_{\rm op} \| \Psi_{n-1} \| + \| b \|_{\rm op} \| \Psi_n \| + \| a \|_{\rm op} \| \Psi_{n+1} \| )^2 \le  C_{a,b}^2 9 \sum_{n=-\infty}^{-1} \| \Psi_n \|^2 = 9 C_{a,b}^2 \| \Psi \|_{\ell^2}^{2}.
    \]
    
    On the other hand, $W_L(t)$ is block diagonal with block elements involving $w_\Sigma(n - t)$. Since $n\le -1$ and $t \in [0, 1]$, we have $n - t \le -1$, so $W_L(t) \ge \Sigma - \| b \|_\op$. Hence, 
$$ H^\Sigma_L(t) \ge W_L(t) - \| H_L(t) \|_\op \ge \Sigma - \| b \|_\op -  3 C_{a,b} \ge E+1,$$ 
     by our choice of $\Sigma$ in~\eqref{eq:def:wSigma}.
As a consequence, $E$ is never in the spectrum of $ H^\Sigma_L(t)$. This proves the lemma.
\end{proof}

We now focus on the right part. Due to our specific piece-wise linear choices for $w_\Sigma$ and for the cut operator $K(t)$, we obtain that $H^\Sigma_R(t)$ is also piece-wise linear, of the form $H^\Sigma_R(t) = (1 - t) H^\Sigma_R(0) + t H^\Sigma_R(1)$, with
\[
H^\Sigma_R(0) =  \left( \begin{array}{| c c c c }
     \hline
    b & a &  & \\
     a^* & b & a & \\
    & a^* & b & \ddots \\
    & & \ddots & \ddots
    \end{array} \right), 
\qquad \text{and} \qquad
  H^\Sigma_R(1) =  \left( \begin{array}{| c c c c }
        \hline
        \Sigma & 0 &  & \\
        0 & b & a & \\
        & a^* & b & \ddots \\
        & & \ddots & \ddots
    \end{array} \right).
\]
As noted before, the half-line operators are no longer translation equivariant: the presence of the boundary at $0$ makes the model at $t = 1$ not unitarily equivalent to the one at $t = 0$. The key observation is that the relevant part of the spectrum is still periodic, since $ H^\Sigma_R(1)$ is a direct sum of $\Sigma$ with (a shifted version of) $  H^\Sigma_R(0) $. We will study such families of operators in the next section.


\subsection{Dislocated Jacobi operators}

\label{sec:dislocated}

In order to study the right--half Jacobi operator $H^\Sigma_R(t)$, we study yet another {\em dislocated} model. The idea to study this model comes from Hempel and Kohlmann in~\cite{HemKoh-11, HemKoh-11a, HemKoh-12}, and was used in~\cite{Gon-20, Gon-23} to compute spectral flows for Schrödinger operators acting on $L^2(\R)$. In this setting, the dislocated model is modelled by a Schrödinger operator of the form $H_t := -\partial_{xx}^2 + V_t(x)$ with the dislocated potential
\[
    V_t(x) = V(x) \1(x < 0) + V(x - t) \1(x \ge 0),
\]
where $V$ is $1$--periodic. The map $t \mapsto H_t$ is norm--resolvent continuous and $1$-periodic, and we recover periodic models at the endpoints. As mentioned in the introduction, a difficulty with the discrete model is that it is not possible to create such a continuous shift. Our workaround is to introduce a translation equivariant dislocated model $\widetilde{H}^\Sigma(t)$.

\medskip

The operator $\widetilde{H}^\Sigma(t)$ acts on the full space $\ell^2(\Z, \C^N)$ and is defined for $t \in [0, 1]$, by
\[
     \widetilde{H}^\Sigma(t) := 
     \left( \begin{array}{c c c |c c c c c}
        \ddots & \ddots & & & &&\\
        \ddots  & b &  a &  & & & &\\
        &  a^* & b & (1-t) a & ta & & &\\
        \hline
        & & (1 - t) a^* & (1-t)b + t \Sigma & (1-t)a &  & &\\
        & &  ta^* & (1-t)a^* & b & a &   \\
        & & & & a^* & b & \ddots &\\
        & & & & & \ddots & \ddots
    \end{array} \right).
\]
This operator is constructed to interpolate linearly between $\widetilde{H}^\Sigma(t=0)$ and $\widetilde{H}^\Sigma(t=1)$, with
\[
    \widetilde{H}^\Sigma(0) := 
    \left( \begin{array}{c c c |c c c }
        \ddots & \ddots & & & \\
        \ddots  & b &  a &  & &  \\
        &  a^* & b &  a &  & \\
        \hline
        & & a^* & b & a &   \\
        & &  & a^* & b & \ddots   \\
        & & & &  \ddots & \ddots \\
    \end{array} \right), \qquad
    \widetilde{H}^\Sigma(1) := 
    \left( \begin{array}{c c c |c c c c c}
        \ddots & \ddots & & & &&\\
        \ddots  & b &  a &  & & & &\\
        &  a^* & b &  0 & a & & &\\
        \hline
        & & 0 & \Sigma & 0 &  & &\\
        & &  a^* & 0 & b & a &   \\
        & & & & a^* & b & \ddots &\\
        & & & & & \ddots & \ddots
    \end{array} \right).
\]
See also Figure~\ref{fig:ideas}.3. The operator $\widetilde{H}^\Sigma(0)$ is exactly the periodic Jacobi operator $H$, and its spectrum is $\sigma_\bulk$. For the operator $\widetilde{H}^\Sigma(1)$, we note the following. Let us write $\Z = \{0\} \cup (\Z \setminus \{ 0 \})$, and $\ell^2(\Z, \C^N) = \C^N \oplus \ell^2(\Z \setminus \{ 0\}, \C^N)$. With respect to this decomposition, we write
\[
    \forall \Psi \in \ell^2(\Z, \C^N), \qquad \Psi = \{ \Psi_0 \} \oplus \Phi, \quad \text{with} \quad \Phi = ( \cdots, \Psi_{-2}, \Psi_{-1}, \Psi_1, \Psi_2, \cdots).
\]
Then the operator $\widetilde{H}^\Sigma(1)$ acts as
\[
     \forall \Psi = \{ \Psi_0 \} \oplus \Phi \in \ell^2(\Z, \C^N), \qquad \widetilde{H}^\Sigma(1) \left[  \{ \Psi_0 \} \oplus \Phi \right] = \{ \Sigma \Psi_0 \} \oplus H \Phi.
\]
In what follows, we will write 
\begin{equation} \label{eq:widetilde_Sigma}
    \widetilde{H}^\Sigma(1) = \Sigma \, \widetilde{\oplus} \, \widetilde{H}^\Sigma(0),
\end{equation}
for such a decomposition. In particular, we have
\[
    \sigma \left( \widetilde{H}^\Sigma(1) \right) = \sigma \left( H \right) \bigcup \{ \Sigma \} = \sigma_\bulk \bigcup \{ \Sigma \}.
\]
In addition, the extra eigenvalue $\Sigma$ is of multiplicity $N$. Since the energy $E$ is not in the spectrum of $H$, and since $\Sigma > E$, it is not in the spectrum of $\widetilde{H}^\Sigma(0)$ nor $\widetilde{H}^\Sigma(1)$.

\medskip

Finally, since $\widetilde{H}^\Sigma(t)$ is a compact perturbation of $H$ (only a finite number of matrix elements are modified), we have
\[
    \sigma_\ess \left( \widetilde{H}^\Sigma(t)\right) = \sigma_\ess(H) = \sigma_\bulk,
\]
so the essential spectrum is independent of $t \in [0,1]$. Again, some eigenvalues may appear in these essential gaps. Since $E$ is in such an essential gap, the spectral flow $ \Sf \left( \widetilde{H}^\Sigma(t), E, [0, 1] \right)$ is well-defined. Actually, we prove that it equals the one of the right--cut model introduced in the previous section.

\begin{lemma} \label{lem:disloc_to_halfspace}
We have
    \[
        \Sf \left( {H}_R^\Sigma(t), E, [0, 1] \right) = \Sf \left( \widetilde{H}^\Sigma(t), E, [0, 1] \right).
    \]
\end{lemma}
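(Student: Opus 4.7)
The plan is to replace the half-line operator $H_R^\Sigma(t)$ by an auxiliary operator on the full line, then compare with $\widetilde H^\Sigma(t)$ via a finite-rank perturbation and the homotopy invariance of the spectral flow.

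First, on $\ell^2(\Z, \C^N) \cong \ell^2(\Z^-, \C^N) \oplus \ell^2(\Z^+, \C^N)$, I introduce the auxiliary family
\[
\widetilde G(t) := \widetilde H_L \,\oplus\, H_R^\Sigma(t),
\]
where $\widetilde H_L$ is the unperturbed bulk half-line Jacobi operator on $\ell^2(\Z^-, \C^N)$ (diagonal blocks $b$, off-diagonal blocks $a$). Since $\widetilde H_L$ is $t$-independent, the additivity of the spectral flow under direct sums immediately yields
\[
\Sf(\widetilde G(t), E, [0,1]) = \Sf(H_R^\Sigma(t), E, [0,1]).
\]

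Next, a direct inspection of the block matrix representations shows that the difference $K(t) := \widetilde H^\Sigma(t) - \widetilde G(t)$ has nonzero entries only at the four positions $(-1, 0)$, $(0, -1)$, $(-1, 1)$, $(1, -1)$, with respective values $(1-t)a$, $(1-t)a^*$, $ta$, $ta^*$. Hence $t \mapsto K(t)$ is a norm-continuous path of self-adjoint finite-rank operators, of rank at most $2N$.

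I then apply the homotopy invariance of the spectral flow to the two-parameter family $F(s, t) := \widetilde G(t) + s K(t)$ over $(s, t) \in [0,1]^2$. Setting $\gamma_i(s) := F(s, i)$ for $i = 0, 1$, the vanishing of the spectral flow around the boundary of the unit square gives
\[
\Sf(\widetilde H^\Sigma, E, [0,1]) - \Sf(\widetilde G, E, [0,1]) = \Sf(\gamma_0, E, [0,1]) - \Sf(\gamma_1, E, [0,1]).
\]
A short matrix computation identifies $\gamma_0(s)$ as the bulk full-line Jacobi operator on $\ell^2(\Z, \C^N)$ with the coupling between sites $-1$ and $0$ set to $sa$ (instead of $a$), and $\gamma_1(s)$ as the direct sum of an isolated eigenvalue $\Sigma$ at site $0$ with the Jacobi operator on $\ell^2(\Z \setminus \{0\}, \C^N)$ that is bulk everywhere except for a coupling $sa$ between sites $-1$ and $+1$. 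Re-indexing $\Z \setminus \{0\} \cong \Z$ identifies this second block with $\gamma_0(s)$, so
\[
\gamma_1(s) \,\cong\, \Sigma \,\widetilde{\oplus}\, \gamma_0(s)
\]
in the sense of~\eqref{eq:widetilde_Sigma}. Since $\Sigma > E$, the constant eigenvalue $\Sigma$ produces no crossings of $E$, hence $\Sf(\gamma_1) = \Sf(\gamma_0)$, which closes the argument.

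The delicate point will be that the homotopy formula requires the four corner operators $\widetilde G(0), \widetilde G(1), \widetilde H^\Sigma(0), \widetilde H^\Sigma(1)$ to have $E$ in their resolvent sets. For $\widetilde H^\Sigma(0) = H$ and $\widetilde H^\Sigma(1) = \Sigma\,\widetilde{\oplus}\,H$ this is immediate since $E \in \R \setminus \sigma_\bulk$ and $\Sigma > E$; but for $\widetilde G(0)$ and $\widetilde G(1)$ the half-line Jacobi operators may admit $E$ as a discrete edge-state eigenvalue. I would handle this by a density argument: the set of $E'$ in the same bulk gap that lie in the resolvent set of all four corner operators is open and dense (their discrete spectra being countable), the identity is established at such generic $E'$, and it extends to $E$ by constancy of the spectral flow between consecutive corner eigenvalues.
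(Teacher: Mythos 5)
Your proof is correct and follows essentially the same route as the paper: the paper subtracts the very same finite--rank cut path (your $K(t)$ is its $\widetilde K(t)$, your $\widetilde G(t)$ is its $\widetilde H^\Sigma(t)-\widetilde K(t)$) and invokes the packaged compact--stability Lemma~\ref{lem:stability_SF_compact_periodic}, whose hypothesis is verified by exactly your observation that the $t=1$ slice of the two--parameter family equals $\Sigma\,\widetilde\oplus\,$(the $t=0$ slice). Two harmless remarks: the sign in your square identity should read $\Sf(\gamma_1)-\Sf(\gamma_0)$ (immaterial, since this difference vanishes), and the closing genericity argument is unnecessary because the paper's definition~\eqref{eq:def:SF} of the spectral flow already accommodates $E$ in the point spectrum of the endpoint operators.
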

\begin{proof}
We consider yet another cut operator $\widetilde{K}(t)$, defined by $\widetilde{K}(t) = (1 - t) \widetilde{K}(0) + t \widetilde{K}(1)$, with
\[
\widetilde{K}(0) := 
\left( \begin{array}{c c  |c c c }
    \ddots & \ddots &  & \\
    \ddots & 0 & a & 0 & \\
    \hline
    &  a^* & 0 &  \ddots &\\
    &  0 & \ddots & \ddots    \\
\end{array} \right),
\qquad
\widetilde{K}(1) := 
\left( \begin{array}{c c  |c c c }
    \ddots & \ddots &  & \\
    \ddots & 0 & 0 & a & \\
    \hline
    & 0 & 0 &  \ddots &\\
    &  a^* & \ddots & \ddots    \\
\end{array} \right),
\]
where all undisplayed elements are null. Note that $\widetilde{K}(1) = 0 \, \widetilde{\oplus} \, \widetilde{K}(0)$, hence that $(\widetilde{H}^\Sigma - s \widetilde{K})(1) = \Sigma \, \widetilde{\oplus} \, (\widetilde{H}^\Sigma - s \widetilde{K})(0)$ for all $s \in [0, 1]$. Therefore, by robustness of the spectral flow under compact perturbations, we find that (see Lemma~\ref{lem:stability_SF_compact_periodic} below)
\[
\Sf(\widetilde{H}^\Sigma(t), E, [0, 1]) = \Sf(\widetilde{H}^{\Sigma}(t) - \widetilde{K}(t), E, [0, 1]) .
\]
The operator $\widetilde{H}(t) - \widetilde{K}(t)$ is block diagonal, of the form $\widetilde{H}_L^\Sigma(t) \oplus \widetilde{H}_R^\Sigma(t)$. This time, the left part is constant, independent of $t$, hence has a null spectral flow. For the right part, we recover the half Jacobi matrix $H_R^\Sigma(t)$.
\end{proof}

\subsection{Computation of the spectral flow for the dislocated model}
The main result of this section is an exact computation of the spectral flow in the dislocated case.

\begin{lemma} \label{lem:Sf_dislocation}
    We have
    \[
        \Sf \left( \widetilde{H}^\Sigma(t), E, [0, 1] \right) = - \cN(E).
    \]
\end{lemma}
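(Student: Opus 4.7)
The strategy is to compute the spectral flow directly as the trace of a trace-class difference of the spectral projections at the two endpoints of the path, exploiting the structural identity~\eqref{eq:widetilde_Sigma}.

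\medskip

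Since $\widetilde{H}^\Sigma(t) - H$ is a finite-rank operator, supported on the three sites $\{-1,0,1\}$, one has $\sigma_\ess(\widetilde{H}^\Sigma(t)) = \sigma_\bulk$ for all $t \in [0,1]$, and $E$ lies in the resolvent set of both $\widetilde{H}^\Sigma(0) = H$ and $\widetilde{H}^\Sigma(1) = \Sigma \, \widetilde\oplus \, H$ (recall $\Sigma > E$ and $E \notin \sigma_\bulk$). Introducing $P_-(t) := \1(\widetilde{H}^\Sigma(t) < E)$, the projection $P_-(0)$ is the standard Bloch spectral projection of $H$ below $E$. From~\eqref{eq:widetilde_Sigma}, $P_-(1) = 0 \, \widetilde\oplus \, P_-(0)$: it annihilates the basis vectors $e_0^\alpha$ and coincides, on $\ell^2(\Z\setminus\{0\}, \C^N)$, with the conjugate of $P_-(0)$ under the natural isometric identification $\ell^2(\Z\setminus\{0\}, \C^N) \cong \ell^2(\Z, \C^N)$.

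\medskip

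The core computation is
\[
\mathrm{tr}\bigl(P_-(1) - P_-(0)\bigr) = -\cN(E).
\]
Because $E$ lies in a common gap of every Bloch matrix $H_k$, the Bloch projection $k \mapsto \1(H_k < E)$ extends analytically to a complex strip around $\R$. Consequently its spatial kernel $p(m-n)_{\alpha\beta} := \langle e_m^\alpha, P_-(0)\, e_n^\beta\rangle$ is translation invariant in $m-n$ and decays exponentially in $|m-n|$. A direct inspection of the matrix entries of $P_-(1) - P_-(0)$, using the shift identification, shows that they inherit exponential decay, which yields the trace-class property. For the diagonal entries: at any $n \ne 0$, translation invariance of $P_-(0)$ combined with the shift makes the contributions of $P_-(1)$ and $P_-(0)$ coincide; at $n = 0$, $P_-(1)$ contributes zero, while
\[
\sum_{\alpha=1}^N \langle e_0^\alpha, P_-(0)\, e_0^\alpha\rangle = \frac{1}{2\pi}\int_{-\pi}^\pi \mathrm{tr}_{\C^N}\!\bigl[\1(H_k < E)\bigr]\,\rd k = \cN(E),
\]
by Fourier inversion~\eqref{eq:def_fourier_1d}--\eqref{eq:def:Hk} and the constancy of $\cN(E)$ on a gap. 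Summing yields the claimed value.

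\medskip

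Finally, one invokes the Phillips-type identity
\[
\Sf\bigl(\widetilde{H}^\Sigma(t), E, [0,1]\bigr) = \mathrm{tr}\bigl(P_-(1) - P_-(0)\bigr),
\]
valid along a norm-continuous path of bounded self-adjoint operators when $E$ lies in the resolvent set at both endpoints and $P_-(1) - P_-(0)$ is trace class. This identity is classical and is to be recorded as a property of the spectral flow in Appendix~\ref{sec:appendix:SF}. Combining it with the previous step gives $\Sf = -\cN(E)$. The main technical obstacle is the rigorous trace-class estimate: the operator $P_-(1) - P_-(0)$ is not of finite rank nor compactly supported, so one must combine the uniform exponential decay of the Bloch projection kernel with careful bookkeeping of the shift identification to control the off-diagonal part. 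A secondary point is verifying that only finitely many eigenvalue branches cross $E$ on each compact subinterval of $[0,1]$, which follows from the fact that $t \mapsto \widetilde{H}^\Sigma(t)$ is a linear, hence analytic, family of bounded self-adjoint operators.
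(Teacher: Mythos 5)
Your route is genuinely different from the paper's, which approximates $\widetilde H^\Sigma(t)$ by finite periodic dislocated models on $\Z/\ell\Z$, computes the flow there by counting states per band, and passes to the limit via resolvent and projection estimates. The second half of your argument --- the identity $P_-(1)=0\,\widetilde\oplus\,P_-(0)$, the exponential decay of the Bloch projection kernel, the resulting trace-class bound, and the diagonal computation giving $\mathrm{tr}\bigl(P_-(1)-P_-(0)\bigr)=-\cN(E)$ --- is correct and clean. The gap is in the first half: the ``Phillips-type identity'' $\Sf=\mathrm{tr}\bigl(P_-(1)-P_-(0)\bigr)$ is \emph{false} in the generality you state (norm-continuous path of bounded self-adjoint operators, $E$ in the resolvent set at both endpoints, trace-class difference of projections), and it is not recorded in Appendix~\ref{sec:appendix:SF}. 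Spectral flow is a homotopy invariant of the \emph{path}, not a function of its endpoints. Concretely, on $\ell^2(\Z)$ let $D$ be multiplication by $\mathrm{sign}(n+\tfrac12)$; first lower the eigenvalue of $e_0$ linearly from $+1$ to $-1$ (spectral flow $+1$ at $E=0$), then return to $D$ by conjugating with a norm-continuous path of unitaries joining $\1$ to the shift (spectral flow $0$, since the spectrum stays $\{-1,+1\}$). This loop satisfies all your hypotheses, has $P_-(1)-P_-(0)=0$, yet its spectral flow is $1$.

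What makes your strategy salvageable here is a structural fact you use implicitly but never state as a hypothesis: $\widetilde H^\Sigma(t)-H$ is finite rank for every $t$, so the entire path lives in the affine space of compact perturbations of $H$. The linear homotopy $H+s\bigl(\widetilde H^\Sigma(t)-H\bigr)$ keeps $\sigma_\ess=\sigma_\bulk$ fixed, so Lemma~\ref{lem:stability_SF} shows that within this class the spectral flow depends only on the endpoints; and for such paths the identity $\Sf=\mathrm{tr}\bigl(P_-(1)-P_-(0)\bigr)$ does hold --- but that is precisely the nontrivial ``spectral flow equals the relative index of the endpoint projections'' theorem (Phillips; Avron--Seiler--Simon), which you must either prove or cite with its correct hypotheses; it does not follow from anything in the paper's appendix. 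As written, the central step is asserted rather than established, and the assertion as stated is wrong. If you add the compact-perturbation hypothesis and supply (or properly reference) the index theorem, your proof goes through and is arguably shorter than the paper's finite-volume argument.
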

This result was first proved in~\cite{HemKoh-11, HemKoh-11a, HemKoh-12} in the case of Schrödinger operators acting on the continuum. The proof presented here is similar, but more complete, and uses solely tools from spectral theory.

\subsubsection{The finite periodic dislocated model}

First, we introduce a finite dimensional version of our dislocated operator. This is the point in the proof where working with periodic Jacobi operators instead of general periodic Hamiltonians becomes easier to write and study. For $\ell \in \N$, we denote by $\Omega_\ell$ the set of $\ell$ integers centered around $0$. Explicitly,
\[
    \Omega_\ell := \left\{ - \left\lfloor \frac{\ell-1}{2} \right\rfloor, - \left\lfloor \frac{\ell-1}{2} \right\rfloor + 1, \cdots , -1, 0, 1,  \cdots , \left\lfloor \frac{\ell}{2} \right\rfloor - 1, \left\lfloor \frac{\ell}{2} \right\rfloor \right\},
\]
so $\Omega_1 = \{ 0 \}$, $\Omega_2 = \{ 0, 1\}$, $\Omega_3 = \{ -1, 0, 1\}$, and so on. In what follows, we identify $\Omega_\ell$ with the torus $\Z / \ell \Z$, but we chose the labelling of $\Omega_\ell$ to display our matrices. For such $\ell$ with $\ell \ge 5$, we denote by $\widetilde{H}^\Sigma_\ell(t)$ a periodic version of the dislocated Hamiltonian $\widetilde{H}^\Sigma(t)$. This operator acts on $\ell^2(\Omega_\ell, \C^N)$. In the canonical basis, this is the operator (the bars in the matrix still denote the separation between strictly negative and positive indices)
\[
    \widetilde{H}_\ell^\Sigma(t) := 
    \left( \begin{array}{c c c |c c c c c}
        \ddots & \ddots & & & && a^* \\
        \ddots  & b &  a &  & & & &\\
        &  a^* & b & (1-t) a & ta & & &\\
        \hline
        & & (1 - t) a^* & (1-t)b + t \Sigma & (1-t)a &  & &\\
        & &  ta^* & (1-t)a^* & b & a &   \\
        & & & & a^* & b & \ddots &\\
        a & & & & & \ddots & \ddots
    \end{array} \right).
\]
Note the presence of $a^*$ (resp. $a$) in the upper--right (resp. lower--left) corner. Again, this family of matrices is linear in $t$, and interpolates between $\widetilde{H}_\ell^\Sigma(t=0)$ and $\widetilde{H}_\ell^\Sigma(t=1)$, with
\[
\widetilde{H}_\ell^\Sigma(0) := 
\left( \begin{array}{c c c |c c c }
    \ddots & \ddots & & & &a^* \\
    \ddots  & b &  a &  & &  \\
    &  a^* & b &  a &  & \\
    \hline
    & & a^* & b & a &   \\
    & &  & a^* & b & \ddots   \\
    a & & & &  \ddots & \ddots \\
\end{array} \right), \qquad
\widetilde{H}_\ell^\Sigma(1) := 
\left( \begin{array}{c c c |c c c c c}
    \ddots & \ddots & & & && a^*\\
    \ddots  & b &  a &  & & & &\\
    &  a^* & b &  0 & a & & &\\
    \hline
    & & 0 & \Sigma & 0 &  & &\\
    & &  a^* & 0 & b & a &   \\
    & & & & a^* & b & \ddots &\\
    a & & & & & \ddots & \ddots
\end{array} \right).
\]
At $t = 0$, the matrix $\widetilde{H}^\Sigma_\ell(0)$ is a finite version of the periodic Jacobi matrix $H$. We will write $H^\Sigma_\ell(0) =: H_\ell$ below to emphasize this point. Its spectrum can be directly computed using Fourier series. Using plane waves with periodicity $\ell$, we find (compare with~\eqref{eq:spectrum_Hbulk_Jacobi})
\[
    \sigma(H_\ell) = \bigcup_{k \in \Omega_\ell^*} \sigma \left( a^* \re^{- \ri k} + b + a \re^{ \ri k} \right),
\]
with $\Omega_\ell^* := \frac{2\pi}{\ell} \Z \cap \Omega^*$ where we recall that $\Omega^* = (-\pi, \pi]$. In particular, we have $\sigma(H_\ell) \subset \sigma(H)$, and the energy $E$ is not in the spectrum of $H_\ell$. Actually, for each of the $N$ Bloch bands $\{ \lambda_{j,k} \}_{k \in \Omega^*}$ of $H$, we can associate the $\ell$ eigenvalues $\{ \lambda_{j,k} \}_{k \in \Omega^*_\ell}$ of $H_\ell$.

\medskip

At $t = 1$, apart from the presence $\Sigma$ in the middle, we also recognize a periodic model, and we have a relation of the form (compare with~\eqref{eq:widetilde_Sigma})
\[
    \widetilde{H}_\ell^\Sigma(1) = \Sigma \, \widetilde{\oplus} \, H_{\ell-1}, 
    \qquad \text{hence} \qquad 
    \sigma \left( \widetilde{H}_\ell^\Sigma(1) \right) = \sigma \left( H_{\ell-1} \right) \bigcup \{ \Sigma \},
\]
where $\Sigma$ is an eigenvalue of multiplicity $N$.

\begin{lemma} \label{lem:Sf_finite_dislocated}
    For all $\ell \ge 5$, we have $\Sf \left( \widetilde{H}^\Sigma_\ell(t), E, [0, 1]\right) = - \cN(E)$.
\end{lemma}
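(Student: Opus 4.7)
The plan is to exploit the finite--dimensional character of the problem and reduce the spectral flow to a signature--type difference at the endpoints $t=0$ and $t=1$. Concretely, I will use the following classical fact, which follows directly from the definition of the spectral flow in Appendix~\ref{sec:appendix:SF} together with the Rellich theorem on the continuous (even analytic) dependence of eigenvalues for the affine path $t \mapsto \widetilde{H}_\ell^\Sigma(t)$: if $t \mapsto A(t) \in \cS_{\ell N}$ is continuous and $E \notin \sigma(A(0)) \cup \sigma(A(1))$, then
\[
\Sf(A(t), E, [0,1]) = \dim \Ran \1_{(-\infty, E)}(A(1)) - \dim \Ran \1_{(-\infty, E)}(A(0)).
\]
Indeed each downward crossing of $E$ increases the number of eigenvalues below $E$ by one and each upward crossing decreases it by one, so the net downward count coincides with the change of dimensions between the endpoints.

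Next I will check that $E$ is not in the spectrum at the endpoints, so that the previous formula applies. At $t=0$ we have $\widetilde{H}^\Sigma_\ell(0) = H_\ell$, whose spectrum is the subset $\{ \lambda_{j,k} : 1 \le j \le N,\, k \in \Omega^*_\ell\}$ of $\sigma_\bulk$, hence avoids $E$. At $t=1$ the decomposition $\widetilde{H}^\Sigma_\ell(1) = \Sigma\,\widetilde{\oplus}\,H_{\ell-1}$ gives the spectrum $\sigma(H_{\ell-1}) \cup \{\Sigma\}$, which again avoids $E$ because $\sigma(H_{\ell-1}) \subset \sigma_\bulk$ and $\Sigma \ge E + 4 C_{a,b} + 1 > E$.

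Finally I will compute the two dimensions. Since $E$ lies in a bulk gap, by definition of $\cN(E)$, for every $k \in \Omega^*$ one has $\lambda_{j,k} < E$ for $1 \le j \le \cN(E)$ and $\lambda_{j,k} > E$ for $j > \cN(E)$. Applied to the discrete momenta $k \in \Omega_m^*$ this counts exactly $m\,\cN(E)$ eigenvalues of $H_m$ below $E$ for any integer $m \ge 1$. Hence
\[
\dim \Ran \1_{(-\infty, E)}\bigl(\widetilde{H}^\Sigma_\ell(0)\bigr) = \ell\,\cN(E), \qquad
\dim \Ran \1_{(-\infty, E)}\bigl(\widetilde{H}^\Sigma_\ell(1)\bigr) = (\ell-1)\,\cN(E),
\]
the extra eigenvalue $\Sigma$ (with multiplicity $N$) contributing nothing because it is strictly above $E$. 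Subtracting gives
\[
\Sf\bigl(\widetilde{H}^\Sigma_\ell(t), E, [0,1]\bigr) = (\ell-1)\cN(E) - \ell\,\cN(E) = -\cN(E).
\]
The only nontrivial ingredient is the endpoint formula in the first paragraph; since the path is affine in $t$, eigenvalue branches are analytic outside of crossings and the finite--dimensional version of the spectral flow from Appendix~\ref{sec:appendix:SF} applies without obstruction.
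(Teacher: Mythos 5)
Your proposal is correct and follows essentially the same route as the paper: both arguments reduce the finite-dimensional spectral flow to the difference of eigenvalue counts below $E$ at the endpoints, using continuity of the eigenvalue branches of the affine path, and both count $\ell\,\cN(E)$ eigenvalues below $E$ at $t=0$ and $(\ell-1)\,\cN(E)$ at $t=1$ (the eigenvalue $\Sigma$ lying above $E$). Your write-up is somewhat more explicit about why $E$ avoids the endpoint spectra and about the sign convention, but the underlying argument is identical.
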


\begin{proof}
The map $t \mapsto \widetilde{H}^\Sigma_L(t)$ is continuous (even linear) in $t$. So the branches of eigenvalues of these matrices are also continuous in $t$. At $t = 0$, there are $\ell \cN(E)$ eigenvalues of $\widetilde{H}_\ell^\Sigma(0)$ below the energy $E$, and at $t = 1$, there are $(\ell-1) \cN(E)$ eigenvalues of $\widetilde{H}_\ell^\Sigma(1)$ below $E$. This difference implies that the net number of eigenvalues crossing $E$ when $t$ increases to $1$ equals $-\cN (E)$
\end{proof}

Let us recall our analogy of the Grand Hilbert Hotel given in Remark~\ref{rem:HilbertHotel}. In this finite periodic case, there are only a finite number of rooms per floor. The difference between the initial and final number of rooms counts, without ambiguity, the number of eigenvalues that cross from one band to the next. 


\subsubsection{From the finite dislocated model to the infinite dislocated model}
\label{sec:finite_to_infinite}

In the previous section, we proved that the spectral flow of the supercell dislocated model $\widetilde{H}^\Sigma_\ell$ is $-\cN(E)$ for all $\ell \ge 5$.  In this section, we justify that one may take the limit $\ell \to \infty$ and deduce that spectral flow is also $-\cN(E)$ for the full dislocated model $\widetilde{H}^\Sigma$.
Throughout this subsection, we drop the superscript $\Sigma$, and the {\em tilde} notation for clarity: we will simply write $H(t)$ and $H_\ell(t)$ for the dislocated models acting on $\ell^2(\Z, \C^N)$ and $\ell^2(\Omega_\ell, \C^N)$ respectively. Recall that at $t = 0$, the operators $H(0)$ and $H_\ell(0)$ are periodic, and that
\[
    \sigma (H_\ell(0)) \subset \sigma_\bulk, \quad \sigma (H(0)) = \sigma_\bulk.
\]
\medskip

In order to relate the operators $H_\ell$ and $H$ that act on different Hilbert spaces, we introduce $i_\ell : \cH_\ell \to \cH$ for the extension by zero. Explicitly,
\[
    \forall \Psi \in \cH_\ell, \ \forall n \in \Z, \quad (i_\ell \Psi)_n = \begin{cases}
        \Psi_n & \quad \text{if} \quad n \in \Omega_\ell, \\
        0 & \quad \text{else}
    \end{cases},
\]
whose adjoint is the restriction operator $i_\ell^* : \cH \to \cH_\ell$ given by
\[
    \forall \Psi \in \cH, \ \forall n \in \Omega_\ell, \quad (i_\ell^* \Psi)_n = \Psi_n.
\]

We also introduce a smooth cutoff function $\chi: \R \mapsto [0,1]$ such that 
\[
\chi(x) = 1 \quad \forall x \in [-1/4,1/4], \qquad \chi(x) = 0 \quad \forall x \in \R \setminus (-1/3,1/3),
\]
and define $\chi_\ell$ as the multiplication operator by $\chi(\cdot/\ell)$ in $\cH_\ell$. Note that this function is supported in $[-\ell/3, \ell/3]\subset \Omega_\ell$. We will use the following identities throughout this section. 

\begin{lemma} \label{lem:estimates}
    For all $\ell \ge 12$ and $t \in [0,1]$ we have,
    \begin{enumerate}
        \item $ H (t) i_\ell \chi_\ell = i_\ell  H_\ell (t)\chi_\ell$ from $\cH_\ell$ to $\cH$, and $\chi_\ell i_\ell^* H(t) = \chi_\ell  H_\ell(t) i_\ell^*$ from $\cH$ to $\cH_\ell$,
        \item For any multiplication operator $\phi$ on $\cH_\ell$ with support in $\Omega_\ell \setminus [-\ell/4, \ell/4]$, we have $H_\ell(t) \phi = H_\ell(0) \phi$ as operators from $\cH_\ell$ to $\cH_\ell$. \\
        Similarly, For any multiplication operator $\widetilde \phi$ on $\cH$ with support in $\Z \setminus [-\ell/4, \ell/4]$, we have $H(t)\widetilde \phi = H(0)\widetilde \phi$ as operators from $\cH$ to $\cH$.
        \item The operator $R_\ell := \left[ H_\ell, \chi_\ell \right] = H_\ell(t) \chi_\ell  - \chi_\ell  H_\ell(t)$ acting on $\cH_\ell$ is independent of $t$, and satisfies $\| R_\ell \|_{\op, \ell} \le \frac{C}{\ell}$. 
    \end{enumerate}
\end{lemma}

The first point states that we can replace ${H}(t)$ with ${H}_\ell(t)$ in the support of $\chi_\ell$ (the cut-off function erases the boundary conditions). The second point states that outside the support of the cut-off function, the operators are independent of $t \in [0, 1]$. The last point says that one can commute the Hamiltonians with the cut-off operator, up to an error of order $O(\ell^{-1})$.

\begin{proof}
For the first point, we note that for all $\Psi \in \cH_\ell$, the function $\chi_\ell \Psi$ has support in $[-\ell/3, \ell/3]$. Whenever $\ell \ge 9$, this implies that  $H_\ell(t) \chi_\ell \Psi$ has support in $[-\ell/3 - 1, \ell/3 + 1]$, hence  $H_\ell(t) \chi_\ell \Psi = H(t) i_\ell \chi_\ell \Psi$.
The second relation is the adjoint of the first.
\medskip

For the second point, we note that $H_\ell(t) = H_\ell(0) + M_\ell(t)$, where $M_\ell(t) := i_\ell M(t) i_\ell^*$, and where $M(t)$ is a finite rank operator with nonzero matrix elements only in the entries $\{ -1, 0, 1\}$. So $M_\ell(t) \phi = 0$ whenever $\ell > 4$, since we assumed that $\phi= 0$ in $\Omega_\ell \cap [-\ell/4, \ell/4] \supset \{ -1, 0, 1\}$. The argument for $\widetilde \phi$ is identical.

\medskip

For the third point, since $\chi_\ell$ is equal to the identity in the support of $M$ for all $\ell > 8$, we have
 $M_\ell(t) \chi_\ell = \chi_\ell M_\ell(t)$. In particular, $[ H_\ell(t), \chi_\ell] = [H_\ell(0), \chi_\ell]$ is independent of $t$. Next, we find that
\[
    \left[ H_\ell(0) \chi_\ell - \chi_\ell H_\ell(0) \right]_{ij} = \begin{cases}
        a \ (\chi_\ell(n+1) - \chi_\ell(n)) & \quad \text{if} \quad (i,j) = (n, n+1) \\
        a^* (\chi_\ell(n-1) - \chi_\ell(n)) & \quad \text{if} \quad (i,j) = (n, n-1) \\
        0 & \quad \text{else}
    \end{cases}.
\]
Note that $ | \chi_\ell(n+1) - \chi_\ell(n) | \le \frac{C}{\ell}$. So
\[
    R_\ell := H_\ell(t) \chi_\ell - \chi_\ell H_\ell(t) = H_\ell(0) \chi_\ell - \chi_\ell H_\ell(0) \quad \text{satisfies} \quad \| R_\ell \|_{\op, \ell} \le 2 \| a \|_{\rm op} \frac{C}{\ell}. \qedhere
\]
\end{proof}

First, we prove that any complex number in the resolvent set of $H(t)$ is in the resolvent set of $H_\ell(t)$ for sufficiently large values of $\ell$.
\begin{lemma} \label{lem:resolvent-inclusion}
    There is a constant $C$, such that, for all $t \in [0,1]$ and all $\lambda \in \C$ satisfying
    \[
    \dist(\lambda, \sigma({H}(t))) := \varepsilon  > 0,
    \]
    we have     
    \[
     \dist(\lambda, \sigma({H}_\ell(t))) \ge \frac{\varepsilon}{4} \text{ for all } \ell \ge \max( 12, C/\varepsilon).
    \]
\end{lemma}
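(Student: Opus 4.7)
The plan is to establish the quantitative lower bound
\[
\|(H_\ell(t) - \lambda)\Psi\|_{\cH_\ell} \ge \tfrac{\varepsilon}{4}\|\Psi\|_{\cH_\ell} \qquad \forall \Psi \in \cH_\ell,
\]
for $\ell$ large enough, by an IMS-type localization that splits $\Psi$ into an inner piece (where $H_\ell(t)$ and $H(t)$ essentially coincide modulo the wrap-around identification) and an outer piece (where the dislocation perturbation $M(t) := H(t) - H(0)$ vanishes). Two facts from the spectrum of $H$ enter: first, $\sigma_\bulk \subset \sigma_\ess(H(t)) \subset \sigma(H(t))$, so $\dist(\lambda, \sigma_\bulk) \ge \varepsilon$; second, by the Fourier-series description of $H_\ell(0)$ recalled before Lemma~\ref{lem:Sf_finite_dislocated}, $\sigma(H_\ell(0)) \subset \sigma_\bulk$, hence $\dist(\lambda, \sigma(H_\ell(0))) \ge \varepsilon$ as well.

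The first step is to pick a smooth partition of unity $\chi_1^2 + \chi_2^2 = 1$ with $\chi_1$ equal to $1$ on $[-\ell/4,\ell/4]$ and supported in $[-\ell/3, \ell/3]$, and $\chi_2 := \sqrt{1 - \chi_1^2}$. Both are Lipschitz with constant $O(1/\ell)$, so that the same argument as in Lemma~\ref{lem:estimates} yields $\|[H_\ell(t), \chi_j]\|_\op \le C/\ell$ and $\|[H(t), \chi_1]\|_\op \le C/\ell$. The fundamental identity is
\[
\|(H_\ell(t) - \lambda)\Psi\|^2 = \|\chi_1(H_\ell(t) - \lambda)\Psi\|^2 + \|\chi_2(H_\ell(t) - \lambda)\Psi\|^2,
\]
combined with $\|\chi_j A\Psi\|^2 \ge \tfrac{1}{2}\|A\chi_j\Psi\|^2 - \|[\chi_j,A]\Psi\|^2$, which follows from $(p+q)^2 \le 2p^2 + 2q^2$ applied to $A\chi_j\Psi = \chi_j A\Psi - [\chi_j,A]\Psi$.

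For the inner piece, since $\chi_1\Psi$ is supported well inside $\Omega_\ell$, the same proof as Lemma~\ref{lem:estimates}(3) (with $\chi_1$ replacing $\chi_\ell$) gives the identification $(H_\ell(t) - \lambda)\chi_1\Psi = \chi_1(H(t) - \lambda)i_\ell\Psi + R'\Psi$ in $\cH_\ell$ with $\|R'\|_\op \le C/\ell$. Then the commutator $[H(t),\chi_1] = O(1/\ell)$ allows one to move $\chi_1$ to the right of $(H(t) - \lambda)$, and the hypothesis $\dist(\lambda, \sigma(H(t))) \ge \varepsilon$ yields
\[
\|(H_\ell(t) - \lambda)\chi_1\Psi\|_{\cH_\ell} \ge \varepsilon \|\chi_1\Psi\|_{\cH_\ell} - \tfrac{C}{\ell}\|\Psi\|.
\]
For the outer piece, note that $M(t)$ is supported on $\{-1,0,1\}$, where $\chi_2$ vanishes as soon as $\ell \ge 12$, so Lemma~\ref{lem:estimates}(2) gives $H_\ell(t)\chi_2 = H_\ell(0)\chi_2$. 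Consequently $(H_\ell(t) - \lambda)\chi_2\Psi = (H_\ell(0) - \lambda)\chi_2\Psi$, and self-adjointness of $H_\ell(0)$ together with $\dist(\lambda, \sigma(H_\ell(0))) \ge \varepsilon$ yields $\|(H_\ell(t) - \lambda)\chi_2\Psi\| \ge \varepsilon\|\chi_2\Psi\|$.

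Combining these two lower bounds via the elementary inequality $(a-b)^2 \ge a^2/2 - b^2$ (valid for all reals), and using $\|\chi_1\Psi\|^2 + \|\chi_2\Psi\|^2 = \|\Psi\|^2$, I expect to obtain
\[
\|(H_\ell(t) - \lambda)\Psi\|^2 \ge \tfrac{\varepsilon^2}{4}\|\Psi\|^2 - \tfrac{C''}{\ell^2}\|\Psi\|^2,
\]
which exceeds $\varepsilon^2/16 \cdot \|\Psi\|^2$ for $\ell \ge C/\varepsilon$. The main obstacle is bookkeeping: tracking which Hilbert space ($\cH$ versus $\cH_\ell$) every operator and vector lives in, verifying that the generalized version of Lemma~\ref{lem:estimates}(3) applies to the cut-off $\chi_1$ (it does, because the proof only uses the Lipschitz bound on the cut-off), and checking that the wrap-around part of $H_\ell(t)$ is invisible for functions supported in the inner region of $\Omega_\ell$ once $\ell \ge 12$.
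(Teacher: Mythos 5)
Your argument is correct and reaches the same conclusion, but by a genuinely different route. The paper proves the lemma by building an approximate \emph{right inverse}: it glues $\chi_\ell (H(t)-\lambda)^{-1} i_\ell$ (near the dislocation) to $(1-\chi_\ell)(H_\ell(0)-\lambda)^{-1}$ (away from it), shows that $(H_\ell(t)-\lambda)$ applied to this parametrix is $1+\cR_\ell(t)$ with $\|\cR_\ell(t)\|_\op \le 2C/(\varepsilon\ell)$, and inverts by a Neumann series. You instead prove the equivalent lower bound $\|(H_\ell(t)-\lambda)\Psi\| \ge \tfrac{\varepsilon}{4}\|\Psi\|$ by IMS localization, which for the self-adjoint matrix $H_\ell(t)$ is exactly $\dist(\lambda,\sigma(H_\ell(t)))\ge \varepsilon/4$. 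Both proofs use the same two inputs ($\dist(\lambda,\sigma(H(t)))\ge\varepsilon$ and $\sigma(H_\ell(0))\subset\sigma_\bulk$) and the same $O(1/\ell)$ commutator estimates; your version avoids the Neumann series at the cost of some quadratic bookkeeping, and your inner piece can in fact be done with \emph{no} error term, since $H_\ell(t)\chi_1\Psi = i_\ell^* H(t)\, i_\ell\chi_1\Psi$ exactly once $\ell\ge 12$ (the support of $H(t)i_\ell\chi_1\Psi$ stays inside $\Omega_\ell$), giving $\|(H_\ell(t)-\lambda)\chi_1\Psi\|\ge\varepsilon\|\chi_1\Psi\|$ directly. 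One point to fix: you cannot take an arbitrary smooth $\chi_1$ and set $\chi_2:=\sqrt{1-\chi_1^2}$ and then claim $\chi_2$ is Lipschitz with constant $O(1/\ell)$ --- near the boundary of $\{\chi_1=1\}$ the square root degrades the modulus of continuity (an increment $\delta$ in $\chi_1$ can produce an increment of order $\sqrt{\delta}$ in $\chi_2$), which would turn your error into $O(1/\sqrt{\ell})$ and change the threshold to $\ell\gtrsim C/\varepsilon^2$. Use the standard IMS construction $\chi_k = j_k/\sqrt{j_1^2+j_2^2}$ with smooth $j_1,j_2\ge 0$ of the prescribed supports, so that both $\chi_1$ and $\chi_2$ are smooth rescalings by $1/\ell$ and the commutators are genuinely $O(1/\ell)$; with that choice your threshold $\ell\ge\max(12,C/\varepsilon)$ is recovered.
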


Recall that, for a self-adjoint operator $A$, we have $\dist(\lambda, \sigma(A)) = \| ( \lambda - A)^{-1} \|_{\rm op}^{-1}$.
\begin{proof}
Since the essential spectrum of $H(t)$ coincides with $\sigma_\bulk$, we find that $\dist(\lambda, \sigma_\bulk) \ge \varepsilon$. 
    Recall that $\sigma(H_\ell(0)) \subset \sigma_\bulk$, so in particular $H_\ell(0) - \lambda$ is invertible with $\|(H_\ell(0) - \lambda)^{-1}\|_{\rm op} \le \varepsilon^{-1}$. The same bound holds for $(H(t) - \lambda)^{-1}$. 
    We combine both resolvents to construct an approximate inverse for $H_\ell(t) -\lambda$.
    Indeed, for any $\ell \ge 9$, the identities from Lemma~\ref{lem:estimates} give
    \begin{equation} \label{eq:resolvent-comutator-base}
        \left( H_\ell(t) - z \right) \chi_\ell i_\ell^* = \chi_\ell \left(H_\ell(t) - z \right) i_\ell^* + R_\ell i_\ell^* = \chi_\ell i_\ell^* \left( H(t) - z \right) + R_\ell i_\ell^*
    \end{equation}
    as operators from $\cH$ to $\cH_\ell$. Together with the fact that $i_\ell^* i_\ell = \1_\ell := \1_{\cH_\ell}$, we get the following identity, as operators from $\cH_\ell$ to $\cH_\ell$,
    \begin{align*}
       & ( H_\ell(t) -\lambda) \left\{ \chi_\ell i_\ell^*( H(t) - \lambda)^{-1} i_\ell + (\1_\ell - \chi_\ell) (H_\ell(0) - \lambda)^{-1} \right\} \\
       & \quad = \chi_\ell i_\ell^* i_\ell + R_\ell i_\ell^* (H(t)- \lambda)^{-1} i_\ell + (\1_\ell - \chi_\ell) - R_\ell (H_\ell(0) - \lambda)^{-1} \\
       & \quad = \1_\ell + R_\ell \left\{ i_\ell^* (H(t) - \lambda)^{-1} i_\ell -  (H_\ell(0) - \lambda)^{-1}  \right\} := \1_\ell + \mathcal{R}_\ell(t).
    \end{align*}
  We can bound $\mathcal{R}_\ell(t)$ by combining the bounds on the resolvents and on $R_\ell$ from Lemma~\ref{lem:estimates} to obtain
    \[
        \left\|\mathcal{R}_\ell(t) \right\|_{\op, \ell} \le \frac{2 C}{\varepsilon \ell},
    \]
    Note that the bound is independent of $t \in [0, 1]$. Then, for $\ell \ge C':= \max \{ 12, \frac{4 C}{\varepsilon} \}$, we have $\| \cR_\ell(t) \|_\op \le \frac{1}{2}$. We conclude that $H_\ell(t) -\lambda$ is invertible with
        \begin{align*}
        \|( H_\ell(t) -\lambda)^{-1}\|_{\op , \ell}
        &\le \left\| \chi_\ell i_\ell^* ( H(t) - \lambda)^{-1} i_\ell + (\1_\ell - \chi_\ell)(H_\ell(0) - \lambda)^{-1} \right\|_{\op, \ell} \, \|(\1_\ell + \mathcal{R}_\ell(t) )^{-1}\|_{\op, \ell} \\
       & \quad \le \frac{2}{\varepsilon}\, \frac{1}{1 - \norm{\cR_\ell(t)}_\op} \le \frac{4}{\varepsilon}.
    \end{align*}
    This is the desired result.
\end{proof}

To complete the proof, we have to show that the spectral projectors of $H_\ell(t)$ in the gaps of $\sigma_\bulk$ converge to spectral projectors of $H(t)$. We record a convenient property of projections.
    \begin{lemma} \label{lem:projections_base}
        Let $P$ be a projector of rank $m$, and $A$ be a bounded operator such that $\| P - A \|_\op < 1$. Then $\rank (A) \ge m$.
    \end{lemma}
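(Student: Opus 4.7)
The plan is to show that $A$ is injective on the range of $P$, which will immediately give $\rank(A) \ge \dim \Ran(P) = m$.

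First I would pick an arbitrary nonzero $v \in \Ran(P)$, so that $Pv = v$. The idea is to estimate $\|Av\|$ from below using the proximity of $A$ to $P$. Writing $Av = Pv - (P - A)v = v - (P - A)v$ and applying the triangle inequality yields
\[
\|Av\| \ge \|v\| - \|(P-A)v\| \ge (1 - \|P - A\|_\op)\|v\|.
\]
Since $\|P - A\|_\op < 1$ by hypothesis, the factor $(1 - \|P - A\|_\op)$ is strictly positive, so $Av \ne 0$ whenever $v \ne 0$. Hence the restriction $A|_{\Ran(P)}$ is injective.

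The conclusion is then immediate: $\Ran(A) \supseteq A(\Ran(P))$, and by injectivity $\dim A(\Ran(P)) = \dim \Ran(P) = m$, so $\rank(A) \ge m$. There is no real obstacle here; this is a standard perturbation-of-projection argument, and the only point worth noting is that the strict inequality $\|P - A\|_\op < 1$ is essential to obtain injectivity (equality would allow a nonzero vector to be sent to zero).
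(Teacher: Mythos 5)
Your proof is correct and essentially the same as the paper's: both restrict attention to the $m$--dimensional space $\Ran(P)$ and use $\| P - A \|_\op < 1$ to conclude that $A$ cannot drop rank there. The paper phrases this via the compression $PAP$ being invertible on $\Ran(P)$ (which implicitly uses $\|P\|_\op \le 1$), while your direct injectivity bound $\|Av\| \ge (1 - \|P-A\|_\op)\|v\|$ is marginally more elementary and works for any idempotent $P$.
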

\begin{proof}
    We have $\| P - P A P \|_\op < 1$, so, seen as maps from $E := \Ran(P)$ to itself, we have $\| \bbI_m - A |_E \| < 1$, which proves that $A |_E$ is full rank on $E$. So $\rank( A) \ge m$.
\end{proof}

In what follows, for $\lambda_1 < \lambda_2$, we denote by $P_{(\lambda_1, \lambda_2)}(A)$ the spectral projector of the self-adjoint operator $A$ on the open interval $(\lambda_1, \lambda_2)$. We recall that if $\lambda_1$ and $\lambda_2$ are not in the spectrum of $A$, then this spectral projector can be written as the Cauchy contour integral
\[
     P_{(\lambda_1, \lambda_2)}(A) = \dfrac{1}{2 \ri \pi} \oint_{\sC} \dfrac{\rd z}{z - A},
\]
where $\sC$ is any simple, positively oriented contour in $\C$ enclosing the interval $(\lambda_1, \lambda_2)$ and no other portions of the real line. We can take for instance $\sC$ the positively oriented circle with center $\frac12(\lambda_1 + \lambda_2)$ and radius $\frac12 (\lambda_2 - \lambda_1)$.

\begin{lemma} \label{lem:projections} Fix $t$ and let $\lambda_1 < \lambda_2$ which are not eigenvalues of $H(t)$. Then, for $\ell$ large enough, we have 
    $$\rank \left(P_{(\lambda_1, \lambda_2)}(H_\ell(t))\right) = \rank  \left(P_{(\lambda_1,\lambda_2)}(H(t))\right).$$
\end{lemma}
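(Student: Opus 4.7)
The strategy is to use the Cauchy contour representation of spectral projectors combined with the approximate resolvent identity established in the proof of Lemma~\ref{lem:resolvent-inclusion}. In the setting where the lemma is used (Section~\ref{sec:finite_to_infinite}) the interval $(\lambda_1,\lambda_2)$ lies in a gap of $\sigma_\bulk$, so $P$ has finite rank $r$ (otherwise the equality would force infinitely many eigenvalues of $H_\ell(t)$, which is impossible on the finite-dimensional space $\cH_\ell$). Choose a positively oriented rectangular contour $\mathcal{C}\subset\mathbb{C}$ enclosing $[\lambda_1,\lambda_2]$ and staying at distance $\ge \varepsilon_0>0$ from $\sigma(H(t))\cup\sigma_\bulk$. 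By Lemma~\ref{lem:resolvent-inclusion} the contour avoids $\sigma(H_\ell(t))$ uniformly for $\ell$ large, so both projectors admit the Cauchy integral representation
\[
P=\frac{1}{2\pi\mathrm{i}}\oint_{\mathcal{C}}(z-H(t))^{-1}\,\mathrm{d}z, \qquad P_\ell=\frac{1}{2\pi\mathrm{i}}\oint_{\mathcal{C}}(z-H_\ell(t))^{-1}\,\mathrm{d}z.
\]

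The key intermediate step is the operator-norm estimate $\|P_\ell-\chi_\ell P i_\ell\|_{\op}=O(1/\ell)$. Starting from the identity derived in the proof of Lemma~\ref{lem:resolvent-inclusion},
\[
(z-H_\ell(t))^{-1}=\bigl\{\chi_\ell(z-H(t))^{-1}i_\ell+(1-\chi_\ell)(z-H_\ell(0))^{-1}\bigr\}(1+\mathcal{R}_\ell(z))^{-1},
\]
one uses the uniform bound $\|\mathcal{R}_\ell(z)\|_{\op}\le C/\ell$ along $\mathcal{C}$ to replace $(1+\mathcal{R}_\ell(z))^{-1}$ by $1$ at the cost of an $O(1/\ell)$ remainder. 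Integrating over $\mathcal{C}$, the contribution of $(1-\chi_\ell)(z-H_\ell(0))^{-1}$ vanishes because $\sigma(H_\ell(0))\subset\sigma_\bulk$ is disjoint from $(\lambda_1,\lambda_2)$. This yields the announced bound. Since $\rank(\chi_\ell P i_\ell)\le r$, the projector lemma preceding the statement then gives $\rank P_\ell \le r$ for all $\ell$ large enough.

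For the reverse inequality, taking the adjoint of the bound above (all three operators $P$, $P_\ell$ and $\chi_\ell$ are self-adjoint) gives $\|P_\ell-i_\ell^* P \chi_\ell\|_{\op}=O(1/\ell)$, and sandwiching by $i_\ell$ on the left and $i_\ell^*$ on the right yields
\[
i_\ell P_\ell i_\ell^* = \chi_\ell P\chi_{\Omega_\ell}+O(1/\ell) \quad \text{in } \mathcal{B}(\cH),
\]
where $\chi_{\Omega_\ell}:=i_\ell i_\ell^*$ is the orthogonal projection onto $\Omega_\ell$. The range of $P$ is spanned by finitely many $\ell^2$-eigenfunctions of $H(t)$, which decay at infinity, hence $\chi_\ell P\chi_{\Omega_\ell}\to P$ in operator norm. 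Applying the projector lemma once more, this time to the projector $P$ and the approximation $i_\ell P_\ell i_\ell^*$ (whose rank is at most $\rank P_\ell$), gives $r\le \rank P_\ell$. Combining both inequalities yields $\rank P_\ell = r$ for $\ell$ large. The main technical obstacle is the uniform $O(1/\ell)$ control of $\mathcal{R}_\ell(z)$ along the whole contour, together with the careful bookkeeping of the identifications between operators on $\cH$ and on $\cH_\ell$ induced by $i_\ell$ and $\chi_\ell$.
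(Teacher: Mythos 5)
Your proof is correct and follows essentially the same route as the paper: a Cauchy contour representation of both projectors, the localized approximate resolvent identity from Lemma~\ref{lem:resolvent-inclusion} yielding an $O(1/\ell)$ operator-norm comparison, and the rank-comparison lemma for projectors applied in both directions. The only (harmless) variation is in the lower bound $\rank(P_\ell) \ge r$: you obtain it by taking adjoints and using the decay of the finitely many eigenfunctions spanning $\Ran(P)$ to get $\chi_\ell P \chi_{\Omega_\ell} \to P$ in norm, whereas the paper instead writes a second, symmetric resolvent identity expressing $(H(t)-z)^{-1}$ through $(H_\ell(t)-z)^{-1}$; both arguments are valid.
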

\begin{proof} 

As in the proof of Lemma~\ref{lem:resolvent-inclusion}, the idea is to localize the resolvent of $H_\ell(t)$ using the cut-off functions $\chi_\ell$. On the support of $\chi_\ell$, we compare it with the resolvent of $H(t)$, and outside this support, we compare it with $H_\ell(0)$ (independent of $t$). Since the interval $(\lambda_1,\lambda_2)$ is contained in a gap of $\sigma_\bulk$, the latter part does not contribute to the spectral projection.

\medskip

To start, let $\sC$ be the positively oriented circle as above, which encloses the interval $(\lambda_1, \lambda_2)$. Note that
\[
    \forall z \in \sC, \quad \dist (z, \sigma(H(t))) \ge \max \{ \dist( \lambda_1, \sigma(H(t))), \dist( \lambda_2, \sigma(H(t))) \} =: \varepsilon > 0.
\]
So, by Lemma~\ref{lem:resolvent-inclusion}, for $\ell \ge \frac{C}{\varepsilon}$, this contour is also included in the resolvent set of $H_\ell(t)$, and we have the bound
\[
    \forall z \in \sC, \quad \| (z - H_\ell(t))^{-1} \|_{\rm op} \le \frac{4}{\varepsilon}, \quad \text{and} \quad \| (z - H(t))^{-1} \|_{\rm op} \le \frac{1}{\varepsilon}.
\]

We use the localization function $\chi_\ell$ and $\phi_\ell:= \sqrt{1-\chi_\ell^2}$, such that $\chi_\ell^2 + \phi_\ell^2 =1$. The support of $\chi_\ell$ is in $[-\ell/3, \ell/3]$, the support of $\phi_\ell$ is in the complement of $(-\ell/4, \ell/4)$. The third point of Lemma~\ref{lem:estimates} (for $\chi_\ell$, but the proof is similar for $\phi_\ell$) shows that there is a constant $C$ so that
\begin{equation} \label{eq:bound_R_ell_phi}
    R_\ell :=  [H_\ell(t), \chi_\ell] \quad \text{and} \quad R_\ell^\phi :=  [H_\ell(t), \phi_\ell] \quad \text{satisfy} \quad
    \left\| R_\ell \right\|_{\op, \ell} \le \frac{C}{\ell}, \quad  \left\| R_\ell^\phi \right\|_{\op, \ell} \le \frac{C}{\ell}.
\end{equation}
Consider $z \in \sC$, and recall~\eqref{eq:resolvent-comutator-base} which states that
\[
    \chi_\ell i_\ell^* (H(t) -z) = ( H_\ell(t) -z) \chi_\ell i_\ell^* - R_\ell.
\]
Composing this identity with $(H_\ell(t) -z)^{-1}$ on the left and $( H(t) - z)^{-1}i _\ell \chi_\ell $ on the right gives
\begin{equation}\label{eq:resolvent-comutator}
 ( H_\ell(t) -z)^{-1} \chi_\ell^2 = \chi_\ell i_\ell^*( H(t) - z)^{-1} i_\ell \chi_\ell+ \cR^{(1)}_\ell(z),
\end{equation}
with $\|\cR^{(1)}_\ell (z)\|_{\op, \ell} \le \frac{8C}{\varepsilon^2 \ell}$.
On the other hand, we have $H_\ell(t) \phi_\ell = H_\ell(0) \phi_\ell$ as operators on $\cH_\ell$ (see Lemma~\ref{lem:estimates}(2)), hence, taking adjoints, $\phi_\ell H_\ell(t) = \phi_\ell H_\ell(0)$. Writing
\[
   \phi_\ell(H_\ell(0) - z) =  \phi_\ell (H_\ell(t) - z)  = (H_\ell(t) -z) \phi_\ell -  R_\ell^{\phi}
\]
and multiplying by $(H_\ell(t) -z)^{-1}$ on the left and $(H_\ell(0) - z)^{-1}\phi_\ell$ on the right gives
\begin{equation}\label{eq:resolvent-comutator-outside}
   (H_\ell(t) -z)^{-1} \phi_\ell^2 = \phi_\ell ( H_\ell(0) -z)^{-1} \phi_\ell + \cR^{(2)}_\ell(z),
\end{equation}
again with $\|\cR^{(2)}_\ell (z)\|_{\op} \le \frac{8C}{\varepsilon^2 \ell}$. Summing \eqref{eq:resolvent-comutator} and \eqref{eq:resolvent-comutator-outside}
gives
\begin{align*}
( H_\ell(t) -z)^{-1} 
&= ( H_\ell(t) -z)^{-1} (\chi_\ell^2 + \phi_\ell^2) \\
&= \chi_\ell i_\ell^*( H(t) - z)^{-1} i_\ell \chi_\ell   + \phi_\ell (H_\ell(0) -z)^{-1}\phi_\ell+ (\cR_\ell^{(1)} +\cR_\ell^{(2)})(z).
\end{align*}
Inserting this identity in the Cauchy formula gives
\begin{align*}
 P_{(\lambda_1, \lambda_2)}( H_\ell(t) )  = \chi_\ell i_\ell^* P_{(\lambda_1, \lambda_2)}( H(t) ) i_\ell \chi_\ell + \phi_\ell P_{(\lambda_1, \lambda_2)}( H_\ell (0) ) \phi_\ell + \frac{1}{2 \ri \pi  }\oint_\sC (\cR_\ell^{(1)} +\cR_\ell^{(2)})(z) \rd z.
\end{align*}

Since $(\lambda_1, \lambda_2) $ lies in a gap of $\sigma_\bulk$, the second term is actually zero. The last term can be bounded by $4 C |\sC| / (\pi \varepsilon^2 \ell)$. Thus, we can find $\ell_0$ sufficiently large such that for all $\ell \ge \ell_0$, we have
\[
\left\| P_{(\lambda_1, \lambda_2)}( H_\ell (t) ) - \chi_\ell i_\ell^* P_{(\lambda_1, \lambda_2)}( H(t) ) i_\ell \chi_\ell \right\|_\op < 1.
\]
From Lemma~\ref{lem:projections_base}, we conclude that
\[
\rank(P_{(\lambda_1, \lambda_2)}( H(t))) 
\ge \rank\left( \chi_\ell P_{(\lambda_1, \lambda_2)}( H(t)) \chi_\ell \right) 
\ge \rank\left(  P_{(\lambda_1, \lambda_2)}( H_\ell(t) )  \right).
\]

 For the opposite inequality, we have to define the cut-off function $\widetilde{\phi}_\ell$ such that
 $\widetilde{\phi}_\ell^2 + i_\ell \chi_\ell^2 i^*_\ell = \1_\cH$.
 Multiplying \eqref{eq:resolvent-comutator-base} on the left by $i_\ell \chi_\ell (H_\ell(t) - z)^{-1}$ and on the right by $(H(t)- z)^{-1}$
 gives the identity
 $$
 i_\ell \chi_\ell^2 i_\ell^* (H(t)- z)^{-1} = i_\ell \chi_\ell (H_\ell(t) - z)^{-1} \chi_\ell i_\ell^* + \widetilde{\cR}^{(1)}_\ell(z),
 $$
 with $\| \widetilde{\cR}^{(1)}_\ell (z)\|_{\op, \ell} \le \frac{8C}{\varepsilon^2 \ell}$. The analogue of \eqref{eq:resolvent-comutator-outside} reads 
 \begin{equation*}
\widetilde{\phi}_\ell^2( H(t) - z)^{-1}  = \widetilde{\phi}_\ell ( H(0) -z)^{-1} \widetilde{\phi}_\ell + \widetilde{\cR}^{(2)}_\ell(z).
\end{equation*}
with  $\| \widetilde{\cR}^{(2)}_\ell (z)\|_{\op, \ell} \le \frac{8C}{\varepsilon^2 \ell}$. We then apply the Cauchy formula to
 \begin{align*}
 ( H(t) -z)^{-1} 
 &= i_\ell \chi_\ell ( H_\ell(t) - z)^{-1} \chi_\ell i_\ell^* +  \widetilde{\phi}_\ell (H(0) -z)^{-1} \widetilde{\phi}_\ell + (\widetilde{\cR}_\ell^{(1)} +\widetilde{\cR}_\ell^{(2)})(z),
 \end{align*}
and note that the second term vanishes as before.
Thus, Lemma~\ref{lem:projections_base} gives for all sufficiently large values of $\ell$
\[
\rank(P_{(\lambda_1, \lambda_2)}( H_\ell(t) )) 
\ge \rank\left( i_\ell \chi_\ell P_{(\lambda_1, \lambda_2)}( H_\ell(t) ) \chi_\ell i_\ell^* \right) 
\ge \rank\left(  P_{(\lambda_1, \lambda_2)}( H(t) )  \right),
\]
and completes the proof. 
\end{proof}

We now have all the ingredients to complete the proof of Lemma~\ref{lem:Sf_dislocation} on the spectral flow of $t \mapsto H(t)$.

\begin{proof}[Proof of Lemma~\ref{lem:Sf_dislocation}]

Choose a partition $0 = t_0 \le t_1 \le \cdots \le t_M = 1$ and positive numbers $(a_i)_{1 \le i \le M}$ so that, for all $1 \le i \le M$, the projection $P_{(E - a_i, E + a_i)}\left( H (t) \right)$ is finite rank and $E\pm a_i$ belong to the resolvent set of $H(t)$ for all $t \in [t_{i-1}, t_i]$. We recall in Appendix~\ref{sec:appendix:SF} that the spectral flow of $t \mapsto {H}(t)$ is given by
\begin{align*}
     \Sf \left( {H}(t), E, [0, 1] \right)
     = \sum_{i=1}^{M-1} \rank \, P_{(E + a_i, E + a_{i+1})}( H (t_i) ),
\end{align*}
with the convention that $\rank \, P_{(\lambda_1, \lambda_2)}(A) = - \rank \, P_{(\lambda_2, \lambda_1)}(A)$ if $\lambda_2 < \lambda_1$. We used here that
\[
    \dim \, \Ker (H(1) - E)  = \dim \, \Ker (H(0) - E) ,
\]
which comes from the fact that $H(1) = \Sigma \, \widetilde{\oplus} \, H(0) $, so that there are no boundary terms

\medskip

The hypothesis on $E\pm a_i$ and the continuity of the eigenvalues as a function of $t$, shows that there is $\varepsilon_i>0$ such that 
$$
\forall t \in [t_i, t_{i+1}], \quad \dist(E \pm a_i, \sigma(H (t))) \ge \varepsilon_i.
$$
By Lemma~\ref{lem:resolvent-inclusion}, with $\varepsilon = \min_i \varepsilon_i$ there is $\ell_0$ such that, for all $\ell\ge \ell_0$, we have 
$$
\forall t \in [t_i, t_{i+1}], \quad \dist(E \pm a_i, \sigma(H_\ell(t))) \ge \frac{\varepsilon}{4}.
$$
Thus, the partition given by $\{t_i\}_{i=0, \cdots, M}$ and the corresponding values $a_i$ are suitable to compute the spectral flow of $H_\ell$ for all  $\ell >\ell_0$.
By applying Lemma~\ref{lem:projections}, we find a sufficiently large $\ell$, such that 
$$
\Ran \left(P_{(E+a_i, E+a_{i+1})} \left( H (t_i)\right) \right)
= \Ran \left(P_{(E+a_i, E+a_{i+1})} \left( H_\ell (t_i)\right) \right) \text{ for } i = 1, \cdots, M-1,
$$
and thus, by using Lemma~\ref{lem:Sf_finite_dislocated},
$$
 \Sf \left( {H} (t), E, [0, 1] \right) =  \Sf \left( {H}_\ell (t), E, [0, 1] \right) = -\cN(E). \qedhere
$$
\end{proof}

\section{Application for two--dimensional materials}
    \label{sec:2d}
    
    In this section, we apply the previous theory to the case of two--dimensional materials in the tight--binding approximation. Our goal is to describe the spectrum of such materials in the presence of a soft wall. In the one-dimensional case, we could without loss of generality fix the unit cell to be $[0,1)$ and the corresponding Fourier variable belongs to $[-\pi, \pi)$. To describe general two-dimensional models, we need the full terminology of solid state physics.

    \subsection{Reduction to the one-dimensional case}
    
    \subsubsection{Bravais lattices and bulk operators in tight-binding models}
    \label{sec:2d_bulk}
        Let us first fix some notation. We consider a $\Lat$--periodic crystal, where $\Lat$ is a Bravais lattice in $\R^2$, of the form
    \[
    \Lat = \ba_1 \Z \oplus \ba_2 \Z, \qquad \ba_1, \ba_2 \in \R^2, \quad \det(\ba_1, \ba_2) \neq 0.
    \]
    We denote by $\Gamma$ a fundamental cell of the lattice, {\em i.e.} a subset of $\R^2$ such that the disjoint union of its $\Lat$--translations cover $\R^2$: $\bigsqcup_{\bR \in \Lat} \Gamma + \bR = \R^2$. A typical choice is $\Gamma = \ba_1 [0, 1) \oplus \ba_2 [0, 1)$, but any choice works. We denote by $M \in \N$ the number of atoms in each unit cell $\Gamma$. The location of these atoms are $(\bx_1, \cdots, \bx_M)$ with $\bx_j \in \Gamma$. 
    
    \medskip
    
    The wave-function is an element $\Psi \in \ell^2(\Lat, \C^M)$ parametrized as
    \[
        \Psi = \left( \psi_\bR \right)_{\bR \in \Lat}, \quad \text{where} \quad \forall \bR \in \Lat, \quad \psi_{\bR} = \begin{pmatrix}
        \psi^{(1)}_{\bR} & \psi^{(2)}_{\bR}  & \cdots & \psi^{(M)}_{\bR} 
    \end{pmatrix}^T \quad \in \C^M.
    \]
    As noted in the introduction, one interpretation of this notation is that $\psi_{\bR}^{(m)}\in \C$ is the amplitude of the electron on the atom located at $\bR + \bx_m$. The discrete translation invariance of the perfect, infinite crystal translates in the convolution form of the two--dimensional tight--binding Hamiltonian $\ssH$ (we use straight letters for two--dimensional operators), of the form
    \begin{equation} \label{eq:def:A}
     \forall \Psi \in \ell^2(\Lat, \C^M),  \quad \forall \bR \in \Lat, \quad 
     (\ssH \Psi)_{\bR} = \sum_{\bR' \in \Lat} \ssh(\bR') \Psi_{\bR - \bR'},
    \end{equation}
    where $\ssh : \Lat \to \cM_M(\C)$ is a family of $M \times M$ matrices satisfying $h(- \bR) = h(\bR)^*$. We assume that $h \in \ell^1(\Lat, \C^{M \times M})$, which implies as before that $\ssH$ is a bounded operator on $\ell^2(\Lat, \C^M)$.
  
  \medskip
  
    Since $\sf H$ is a convolution operator, it is diagonal in Fourier basis. 
As usual in solid state physics, we introduce a pair of reciprocal lattice vectors $\ba_1^*$, $\ba_2^*$, that are defined  by the relations $\bra \ba_i^*, \ba_j \ket_{\R^2} = 2 \pi \delta_{ij}$. These vectors generate the reciprocal lattice $\RLat$, and the natural domain for the Fourier variable is the Brillouin zone $\Gamma^* := \R^2 / \RLat$ (seen as a torus). Reasoning as in Section \ref{ssec:periodic_bands}, we find that $\cF \ssH \cF^* = \int_{\Gamma^*}^\oplus {\sf H}_\bk \rd \bk$, where 
    \begin{equation} \label{eq:def:Ak_2d}
        \forall \bk \in \R^2, \qquad {\sf H}_\bk 
        = \sum_{\bR \in \Lat} \ssh(\bR) \re^{-\ri \bR \cdot \bk} \qquad \in \cS_{M}(\C).
    \end{equation}
    The map $\bk \mapsto \ssH_\bk$ is analytic (it is the sum of a uniformly absolutely convergent series of analytic functions) and $\RLat$--periodic, so it is enough to describe ${\sf H}_\bk$ on the Brillouin zone $\bk \in \Gamma^*$. We denote by $\lambda_{1, \bk} \le \cdots \le \lambda_{M, \bk}$ the eigenvalues of $\ssH_\bk$, ranked in increasing order. The $j$-th (two--dimensional) {\bf Bloch band} is the interval $\bigcup_{\bk \in \R^2} \{ \lambda_{j, \bk} \}$, and the spectrum of $\ssH$ is the union of these $M$ Bloch bands:
    \[
        \sigma_\bulk = \sigma \left( {\sf H} \right) = \bigcup_{\bk \in \R^2} \sigma (\ssH_\bk) = 
                \bigcup_{ j =1}^M \bigcup_{\bk \in \Gamma^*} \{ \lambda_{j, \bk} \}.
    \]
    
    In the next section, we will add a wall which is constant along the $\ba_2$--direction (see Section~\ref{sec:general_angles} below for the case of general angles). This implies that the model with the wall is still periodic in the $\ba_2$--direction. In particular, we may perform a partial Fourier transform in this direction. It is useful at this point to introduce the one--dimensional lattices
    \begin{align*}
        \Lat_1 := \ba_1 \Z, \quad \Lat_2 := \ba_2 \Z, & \qquad \text{so that} \qquad \Lat = \Lat_1 \oplus \Lat_2, \\
        \RLat_1 := \ba_1^* \Z, \quad \RLat_2 := \ba_2^* \Z, & \qquad \text{so that} \qquad \RLat = \RLat_1 \oplus \RLat_2.
    \end{align*}
    We also denote by $\Gamma^*_2 := (\ba_2^* \R) / \RLat_2$ the Brillouin zone of the $\Lat_2$ lattice, seen as a torus/line in the two--dimensional space $\R^2$, in the sense that we keep the orientation of this line along the $\ba_2$--direction. We write $\bk_2 \in \Gamma_2^*$ for the Fourier variable in this direction. In the literature, it is often identified with the one--dimensional real torus/line, with the identification $\bk_2 = k_2 \frac{\ba_2^*}{| \ba_2^* | }$ where $k_2 = | \bk_2 | \in \R / | \ba_2^* | \Z$. 
    
    With these conventions, the partial Fourier transform is $\cF_2 : \ell^2(\Lat, \C^M) \to L^2(\Lat_1 \times \Gamma^*_2, \C^M)$ defined by
    \begin{equation} \label{eq:partial_Fourier}
        \forall (\bR_1, \bk_2) \in \Lat_1 \times \Gamma^*_2, \qquad
    \left( \cF_2[\Psi] \right)(\bR_1, \bk_2) := \frac{1}{\sqrt{| \Gamma_2 |}} \sum_{\bR_2 \in \Lat_2} \Psi_{\bR_1 + \bR_2} \re^{- \ri \bk_2 \cdot \bR_2}.
    \end{equation}
    We obtain that $\cF_2 \ssH \cF_2^* = \int_{\Gamma^*_2}^\oplus H_{\bk_2} \rd \bk_2$, where, for all $\bk_2 \in \Gamma^*_2$, $H_{\bk_2}$ is an operator acting on the one--dimensional lattice $\ell^2(\Lat_1, \C^M)$, of the form
    \[
    \forall \Psi \in \ell^2 (\Lat_1, \C^M), \quad \forall \bR_1 \in \Lat_1, \qquad
    \left( H_{\bk_2} \Psi \right)_{\bR_1} = \sum_{\bR_1' \in \Lat_1} h_{\bk_2}(\bR_1') \Psi_{\bR_1 - \bR_1'},
    \]
    where the kernel $h_{\bk_2}$ is 
    \begin{equation} \label{eq:def:hk2}
        \forall \bR_1 \in \Lat_1, \qquad h_{\bk_2}(\bR_1) := \sum_{\bR_2 \in \Lat_2} \ssh(\bR_1 + \bR_2) \re^{-\ri \bk_2 \cdot \bR_2}.
    \end{equation}
Note that the series is convergent and the resulting one-dimensional kernel satisfies the hypothesis of the previous section, since $\ssh \in \ell^1( \Lat)$ implies $h_{\bk_2} \in \ell^1 (\Lat_1) $ by Fubini's Theorem.  
    
  \medskip  
          
    In what follows, we fix $\bk_2 \in \Gamma_2^*$, and study the one-dimensional periodic operator $H_{\bk_2}$, and its soft wall counterpart $H_{\bk_2}^\sharp(t)$. The spectrum of $H_{\bk_2}$ is purely essential, composed of $M$ (one-dimensional) Bloch bands. It is the union of the spectra of $\left(H_{\bk_2} \right)_{\bk_1}$, defined as (compare with~\eqref{eq:def:Hk})
        \[
      \forall \bk_1 \in \Gamma_1^*, \quad   \left(H_{\bk_2} \right)_{\bk_1} := \sum_{\bR_1 \in \Lat_1} h_{\bk_2}(\bR_1) \re^{- \ri \bk_1 \cdot \bR_1}.
    \]
     Together with~\eqref{eq:def:hk2}, we get that 
    \[
       \left(H_{\bk_2} \right)_{\bk_1} = \sum_{\bR_1 \in \Lat_1} \sum_{\bR_2 \in \Lat_2} \ssh(\bR_1 + \bR_2) \re^{- \ri \bk_2 \cdot \bR_2 - \ri \bk_1 \cdot \bR_1} = \sum_{\bR \in \Lat} \ssh(\bR) \re^{- \ri \bk \cdot \bR} = \ssH_{\bk_1 + \bk_2}.
    \]
    We deduce that
    \begin{equation} \label{eq:spectrum_Hk2}
    \sigma \left(   H_{\bk_2} \right) = \bigcup_{\bk_1 \in \Gamma^*_1} \sigma \left( h_{\bk_2}(\bk_1) \right) = \bigcup_{\bk_1 \in \Gamma^*_1} \sigma \left( \ssH_{\bk_1 + \bk_2} \right) 
    =  \bigcup_{\bk_1 \in \ba_1^* \R} \sigma \left( \ssH_{\bk_1 + \bk_2} \right).
    \end{equation}
    One should think of this formula as follows: the spectrum of $H_{\bk_2}$ is the \emph{projection} of the two--dimensional bands (the spectra of $\ssH_{\bk}$) on the $\Gamma^*_2 \approx \ba_2^* \R$ line, where the projection is along the $\ba_1^*$--direction. Note that $\ba_1^*$ is the direction orthogonal to the $\ba_2$ vector (which will be the direction of the wall). We extend this formula to the case of general angles below.

\subsubsection{Soft walls parallel to a lattice vector}
\label{sec:2d-softwall}

We now add a soft wall potential. We will choose our wall aligned with the $\ba_2$ vector (see Section~\ref{sec:general_angles} below for other \emph{conmensurate} directions). To do so, we introduce the normalized vector $\ba_2^\perp := \tfrac{\ba_1^*}{| \ba_1^* |}$. This vector satisfies $ \ba_2^\perp \cdot \ba_2  = 0$ and $\ba_2^\perp \cdot \ba_1  = \frac{2 \pi}{| \ba_1^* |}$. We consider a two--dimensional soft wall potential $w : \R^2 \to \cS_M(\C)$ of the form
\[
\forall \bx \in \R^2, \quad w(\bx) = w_*(\ba_2^\perp \cdot \bx),
\]
where $w_* : \R \mapsto \cS_M(\C)$ is a one--dimensional Lipschitz soft wall potential as in Section~\ref{sec:softWall}. Note that $w(\bx + \lambda \ba_2) = w(\bx)$, so this wall is constant along the $\ba_2$ direction. In addition, since $\ba_2^\perp$ is normalized, this potential has the same Lipschitz constant as $w_1$. To introduce a dislocation parameter $t \in \R$, we set (see also Remark~\ref{rem:Lipschitz})
\begin{equation} \label{eq:wt_2d}
    w_t(\bx) := w(\bx - t \ba_1) = w_* \left( \ba_2^\perp \cdot \left( \bx - t \ba_1 \right) \right) = w_* \left( \ba_2^\perp \cdot \bx - t \frac{2 \pi}{\| \ba_1^* \|} \right).
\end{equation}
The corresponding operator $W(t)$ acting on $\ell^2(\Lat, \C^M)$ is the block--diagonal operator
\[
\forall \Psi \in \ell^2(\Lat, \C^M), \quad \forall \bR \in \Lat, \quad (W(t) \Psi)_{\bR} = w_t(\bR) \Psi_{\bR}.
\]
For a physical crystal subject to a scalar potential $v_*$ which is constant in the $\ba_2$-direction, one would consider a diagonal potential of the form
\begin{equation} \label{eq:wt_2d_bis}
    w_t(\bR) = {\rm diag} \left( v_* (\ba_2^\perp \cdot \left( \bR - t \ba_1 + \bx_1  \right) ), \cdots, v_* (\ba_2^\perp \cdot \left( \bR - t \ba_1 + \bx_M  \right) \right),
\end{equation}
where we recall that $\bx_1, \cdots, \bx_M$ are the positions of the atoms with respect to the center of the unit cell. Note that if $\bR = n \ba_1 + m \ba_2$, then $\ba_2^\perp \cdot \bR = n \ba_2^\perp \cdot \ba_1$.

\medskip

Finally, the soft wall model is the operator
\[
    \ssH^\sharp(t) = \ssH + W(t), \quad \text{with domain} \quad \cD(W) \subset \ell^2(\Z^2, \C^M).
\]
This operator is periodic in the $\ba_2$--direction. After the partial Fourier transform in this direction, we obtain that $\cF_2 \ssH^\sharp_t \cF_2 = \int_{\Gamma^*_2}^\oplus H_{\bk_2}^\sharp(t) \rd \bk_2$ where $H_{\bk_2}^\sharp(t)$ acts on $\ell^2(\Z, \C^M)$ with
\[
    H^\sharp_{\bk_2}(t) = H_{\bk_2} + W(t).
\]
We recover the soft wall described in Section~\ref{sec:softWall}: the normalization for $t$ in~\eqref{eq:wt_2d} has been chosen so that when $t$ moves from $0$ to $1$, the wall shifts by the vector $\ba_1$. We can now apply the previous results to this case.


\subsubsection{Edge spectrum in the two--dimensional setting}

For an energy $E \in \R \setminus \sigma_\bulk := \sigma(\ssH)$, we denote by $\cN(E)$ the number of (two--dimensional) Bloch bands of the two--dimensional bulk operator $\ssH$ below $E$. It is the only integer for which
\[
\forall \bk \in \R^2, \qquad \lambda_{\cN(E), \bk} < E < \lambda_{\cN(E) + 1, \bk},
\]
where we recall that $\lambda_{1, \bk} \le \cdots \le \lambda_{M, \bk}$ are the eigenvalues of the $M \times M$ matrix $\ssH(\bk)$ defined in~\eqref{eq:def:Ak_2d}. Similarly, for $\bk_2 \in \Gamma^*_2$, and for an energy $E \in \R \setminus \sigma(H_{\bk_2})$, we denote by $\cN_{\bk_2}(E)$ the number of (one--dimensional) Bloch bands of $H_{\bk_2}$ below $E$. It is the integer for which
\[
\forall \bk_1 \in \ba_1^* \R, \qquad \lambda_{\cN(E), \bk_1 + \bk_2} < E < \lambda_{\cN(E) + 1, \bk_1 + \bk_2}.
\]
If $E \in \R \setminus \sigma_\bulk$ is in the essential gap of the full two--dimensional operator $\ssH_\bk$, then, we have $\cN_{\bk_2}(E) = \cN(E)$ for all $\bk_2 \in \Gamma^*_2$. However, it may happen that an energy $E$ belongs to the essential gap of $H_{\bk_2}$ for {\em some} values of $\bk_2$, and is not in an essential gap of the full operator $\ssH$. In this case, $\cN_{\bk_2}(E)$ may be well-defined, but not $\cN(E)$. This happens for instance in the Wallace model for graphene (see Section~\ref{sec:exemple_wallace} below).

\medskip

We can now state our result in the two--dimensional case. It is a straightforward application of Proposition~\ref{prop:basics_intro} and Theorem~\ref{th:main_general_1d}.
\begin{theorem} \label{th:2d}
    Assume that $w_1 : \R \mapsto \cS_M$ is a Lipschitz soft wall. Then, with the previous notation,
    \begin{enumerate}
        \item  For all $\bk_2 \in \Gamma^*_2$, and all $t \in \R$, the operator $H^\sharp_{\bk_2}(t)$ is a well-defined self-adjoint operator whose domain is independent of $\bk_2$ and $t$ and given by
        \[
        \cD := \left\{ \Psi \in \ell^2(\Lat_1, \C^M), \quad W_{t = 0} \Psi \in \ell^2(\Lat_1, \C^M) \right\}.
        \] 
        \item The map $\Gamma^*_2 \times \R \ni (\bk_2, t)  \mapsto H^\sharp_{\bk_2}(t)$  is norm-resolvent continuous and translation equivariant in~$t$. 
        \item For all fixed $\bk_2 \in \Gamma^*_2$, the spectrum $t \mapsto \sigma(H^\sharp_{\bk_2}(t))$ is $1$-periodic in $t$. The essential spectrum $\sigma_{\rm ess}(H^\sharp_{\bk_2}(t))$ is independent of $t$, and equals $\sigma (H_{\bk_2})$ described in~\eqref{eq:spectrum_Hk2}.
        \item For all fixed $\bk_2 \in \Gamma^*_2$, and for all $E \in \R \setminus \sigma(H_{\bk_2})$, we have
        \[
        \Sf \left( H_{\bk_2}^\sharp(t), E, [0, 1] \right) = - \cN_{\bk_2}(E).
        \]
    \end{enumerate}
\end{theorem}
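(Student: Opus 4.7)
The strategy is to use the partial Fourier transform~\eqref{eq:partial_Fourier} in the $\ba_2$-direction to reduce the 2D problem to a $\bk_2$-indexed family of 1D soft-wall models, and then invoke Proposition~\ref{prop:basics_intro} and Theorem~\ref{th:main_general_1d} on each fiber. The key observation is that the wall operator $W(t)$ defined in Section~\ref{sec:2d-softwall} is constant along $\Lat_2$ by construction~\eqref{eq:wt_2d}, hence commutes with $\cF_2$; it therefore passes through the direct integral decomposition as multiplication by $w_*\bigl(\ba_2^\perp \cdot (\bR_1 - t\ba_1)\bigr)$ on each fiber $\ell^2(\Lat_1, \C^M)$. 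Identifying $\Lat_1$ with $\Z$ via $n\ba_1 \mapsto n$, this becomes a 1D soft wall of the form $\tilde w(n-t)$ with $\tilde w(x) := w_*(x \cdot 2\pi/\|\ba_1^*\|)$, which inherits Lipschitz continuity and the soft-wall limits~\eqref{eq:wall_def_intro} from $w_*$.

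For each fixed $\bk_2$, the first step is to verify that $H_{\bk_2}$ fits the 1D framework of Section~\ref{sec:1d_model}: the kernel $h_{\bk_2}$ of~\eqref{eq:def:hk2} belongs to $\ell^1(\Lat_1, \C^{M \times M})$ by Fubini (since $\ssh \in \ell^1(\Lat)$), and inherits $h_{\bk_2}(-\bR_1)^* = h_{\bk_2}(\bR_1)$ from the Hermiticity of $\ssh$. Parts (1), the $t$-dependent part of (2), and part (3) then follow directly from Proposition~\ref{prop:basics_intro} applied fiberwise, combined with the explicit formula~\eqref{eq:spectrum_Hk2} for $\sigma(H_{\bk_2})$.

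To finish part (2), I would establish continuity in $\bk_2$. Since $W(t)$ is independent of $\bk_2$ and its Lipschitz constant in $t$ does not depend on $\bk_2$, it suffices to show operator-norm continuity of $\bk_2 \mapsto H_{\bk_2}$. This follows from the bound
\[
\|H_{\bk_2} - H_{\bk_2'}\|_\op \le \sum_{\bR \in \Lat} \|\ssh(\bR)\|_\op \, \bigl|\re^{-\ri \bk_2 \cdot \bR_2} - \re^{-\ri \bk_2' \cdot \bR_2}\bigr|,
\]
(with $\bR_2$ the $\Lat_2$-component of $\bR$), together with dominated convergence.

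Finally, part (4) is a direct application of Theorem~\ref{th:main_general_1d} to the periodic 1D Hamiltonian $H_{\bk_2}$ with wall $\tilde w$: for any $E \in \R \setminus \sigma(H_{\bk_2})$,
\[
\Sf\bigl(H^\sharp_{\bk_2}(t), E, [0,1]\bigr) = -\cN_{H_{\bk_2}}(E) = -\cN_{\bk_2}(E),
\]
the second equality being the very definition of $\cN_{\bk_2}(E)$ as the number of 1D Bloch bands of $H_{\bk_2}$ below $E$. No serious obstacle is anticipated; the only care needed is the unitary identification between the fiber space and the $\ell^2(\Z, \C^M)$ used in the 1D results, and the routine verification that the rescaled wall $\tilde w$ still satisfies the soft-wall hypotheses.
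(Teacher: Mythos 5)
Your proposal is correct and follows essentially the same route as the paper: the paper sets up the partial Fourier transform, the kernel $h_{\bk_2}$, its $\ell^1$ bound via Fubini, and the identification of the fiberwise wall in Sections~\ref{sec:2d_bulk}--\ref{sec:2d-softwall}, and then states Theorem~\ref{th:2d} as a direct fiberwise application of Proposition~\ref{prop:basics_intro} and Theorem~\ref{th:main_general_1d}. Your write-up simply makes explicit the routine verifications (Hermiticity of $h_{\bk_2}$, the rescaled soft-wall hypotheses, and the operator-norm continuity in $\bk_2$ by dominated convergence) that the paper leaves implicit.
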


Concerning the spectrum at fixed $t = t_0$, we can apply Theorem~\ref{th:main_fix_t0}. We first make the observation that if $w(\cdot)$ is $\nu$--Lipschitz, then, for all $\bx \in \R^2$, the map $t \mapsto w_t(\bx)$ is $\nu \frac{| \Gamma |}{\| \ba_2 \|}$--Lipschitz. Indeed, we have, using the classical identity $\| \ba_1^* \| = 2 \pi \| \ba_2 \| | \Gamma |^{-1}$, that 
\[
   \left|  \widetilde{w}_{t}(\bx) - \widetilde{w}_{t'}(\bx)  \right| = 
   \left| w \left( \ba_2^\perp \cdot \bx - t \frac{2 \pi}{\| \ba_1^* \|} \right) - w \left( \ba_2^\perp \cdot \bx - t' \frac{2 \pi}{\| \ba_1^* \|} \right) \right| 
   \le \nu \frac{2 \pi}{\| \ba_1^* \|} | t - t' |  = \nu \frac{| \Gamma |}{\| \ba_2 \|} | t - t'|.
\]
In particular, the maps $t \mapsto W_t$ and $t \mapsto H_{\bk_2}^\sharp(t)$ are also $\nu \frac{| \Gamma |}{\| \ba_2 \|}$--Lipschitz. We obtain the following corrolary of  Theorem~\ref{th:main_fix_t0}.

\begin{theorem} \label{th:2d_fix_t0}
    Let $w : \R \mapsto \cS_N$ be a soft wall which is $\nu$--Lipschitz, for some $\nu > 0$. Then, for all $\bk_2 \in \Gamma^*_2$, all energy  $E$ belonging to an essential gap of width $g(E) \ge \nu | \Gamma | \| \ba_2 \|^{-1}$ of $H_{\bk_2}$, and all $t_0 \in \R$, the operator $ H^\sharp_{\bk_2} (t_0)$ has at least $\cN(E)$ eigenvalues in each interval of the form $(\lambda, \lambda+\nu | \Gamma | \| \ba_2 \|^{-1} ]$ in this gap.\\
    In particular, there are at least $\left\lfloor \dfrac{g(E) \| \ba_2 \|}{\nu | \Gamma | } \right\rfloor \cN_{\bk_2}(E)$ eigenvalues in this gap.
\end{theorem}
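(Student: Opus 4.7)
The plan is to regard this as a direct corollary of Theorem~\ref{th:main_fix_t0} applied fiber-by-fiber to the one-dimensional operators $H^\sharp_{\bk_2}(t)$ produced by the partial Fourier transform of Section~\ref{sec:2d-softwall}. All the heavy lifting has already been done in the one-dimensional chapter; what remains is bookkeeping about constants and a simple counting argument.

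First I would fix $\bk_2 \in \Gamma_2^*$ and check that $t \mapsto H^\sharp_{\bk_2}(t)$ satisfies the hypotheses of Theorem~\ref{th:main_fix_t0}. The bulk part $H_{\bk_2}$ is a one-dimensional convolution operator on $\ell^2(\Lat_1,\C^M)$ as shown in Section~\ref{sec:2d_bulk}. For the wall, note that for $\bR_1 = n\ba_1 \in \Lat_1$ the value $\ba_2^\perp \cdot (\bR_1 - t\ba_1) = (n-t)\,2\pi/\|\ba_1^*\| = (n-t)\,|\Gamma|/\|\ba_2\|$, so the restriction of $w_t$ to $\Lat_1$ is exactly the one-dimensional soft wall of Section~\ref{sec:softWall} built from the rescaled function $\tilde w_*(y) := w_*(y\,|\Gamma|/\|\ba_2\|)$. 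This $\tilde w_*$ inherits the soft-wall limits at $\pm\infty$ from $w_*$ and is $\nu'$-Lipschitz with
\[
\nu' := \nu\,\frac{|\Gamma|}{\|\ba_2\|},
\]
which matches the Lipschitz constant of $t \mapsto H^\sharp_{\bk_2}(t)$ already computed in the paragraph preceding the theorem. Theorem~\ref{th:2d} (third point) then guarantees that the essential spectrum of $H^\sharp_{\bk_2}(t)$ is the $t$-independent set $\sigma(H_{\bk_2})$, so any gap of $H_{\bk_2}$ is an essential gap of $H^\sharp_{\bk_2}(t)$ for every $t$.

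Having collected these ingredients, I would apply Theorem~\ref{th:main_fix_t0} verbatim to the one-dimensional family $H^\sharp_{\bk_2}(t)$, with Lipschitz constant $\nu'$ and counting function $\cN_{\bk_2}(E)$ in place of $\cN(E)$. This yields the first assertion: in every sub-interval of the gap of the form $(\lambda,\lambda+\nu']$, the operator $H^\sharp_{\bk_2}(t_0)$ has at least $\cN_{\bk_2}(E)$ eigenvalues, counted with multiplicity.

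For the quantitative statement in the gap, I would simply partition (a portion of) the gap into $\lfloor g(E)/\nu' \rfloor = \lfloor g(E)\|\ba_2\|/(\nu|\Gamma|)\rfloor$ pairwise disjoint half-open intervals of length $\nu'$, each still contained in the gap. The hypothesis $g(E) \ge \nu'$ ensures the floor is at least $1$, and each such interval contributes at least $\cN_{\bk_2}(E)$ eigenvalues by the previous step. Summing over disjoint intervals gives the lower bound, and no part of the argument presents a genuine obstacle since both the Lipschitz rescaling and the partition step are mechanical; the only delicate bit is conceptual, namely matching the Lipschitz constant carefully so that the floor in the final estimate comes out with the correct geometric factor $|\Gamma|/\|\ba_2\|$.
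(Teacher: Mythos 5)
Your proposal is correct and follows essentially the same route as the paper: the authors also reduce to Theorem~\ref{th:main_fix_t0} by computing (in the paragraph preceding the theorem, via $\|\ba_1^*\| = 2\pi\|\ba_2\|\,|\Gamma|^{-1}$) that $t \mapsto H^\sharp_{\bk_2}(t)$ is $\nu|\Gamma|\,\|\ba_2\|^{-1}$--Lipschitz, and then explicitly skip the remaining (partitioning) argument as identical to the 1d case. Your fiber-wise verification and the disjoint-interval count for the \emph{in particular} clause are exactly the omitted bookkeeping.
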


We skip its proof, as it is similar to the one of Theorem~\ref{th:main_fix_t0} (see also Remark~\ref{rem:Lipschitz}). This time, the density of eigenvalues is lower bounded by $\frac{\nu | \Gamma |}{\cN_{\bk_2}(E) \| \ba_2 \|}$.

\subsection{Walls in other directions}
\label{sec:general_angles}
We now give one possible way to extend our result to walls which are rotated by any commensurable angle. In the case of incommensurate angles, one can follow~\cite{Gon-21} to prove that {\em all essential bulk gaps are filled with edge spectrum}, for all values of $t$.

\medskip
We will show that a wall along any commensurable angle can be cast in the previous framework by increasing the size of the unit cell.
For two co-prime integers $n,m$ with $n \neq 0$ and $m \neq 0$, we define the new vectors
\begin{equation} \label{eq:def:tilde_a12}
    \widetilde \ba_1 := n \ba_1, \quad \text{and} \quad \widetilde \ba_2 := n \ba_1 + m \ba_2.
\end{equation}
In this section, the wall will be aligned with the vector $\widetilde{\ba}_2$, and move along the $\widetilde{\ba}_1$--direction (which is also the $\ba_1$--direction). Varying $n$ and $m$ allows to describe all commensurable directions for the wall, except for the $\ba_1$ and $\ba_2$--directions (which were studied in the previous section).

These two vectors generate a new lattice $\widetilde \Lat$, which is a sublattice of the original $\Lat$. We denote by $\widetilde{\Gamma}$ a corresponding unit cell (supercell). The number of atoms per unit cell is now
$$
\widetilde{M} = |n\, m| M .
$$
We consider the tight-binding Hamiltonian $\widetilde{\ssH}$ as the representation of the original $\ssH$ with $\widetilde \Gamma$ as unit cell. Note that the new kernel $\widetilde{\ssh}$ is now a function from $\widetilde{\Lat}$ to $\C^{\widetilde{M} \times \widetilde{M}}$.

\subsubsection{Bulk spectrum}

The spectrum of the bulk Hamiltonian is identical in both frameworks, but written in the Fourier decomposition for the $\widetilde \Lat$ lattice, it consists of $|n \, m|$-times more bands on a $|n\, m|$-times smaller Brillouin zone: a single band for the original Hamiltonian $\ssH$ corresponds to $|n\, m| $ bands of the \emph{supercell Hamiltonian} $ \widetilde \ssH$.
To make things concrete, we write $L= |n m|$ and label the $L$ points of $\Lat \cap \widetilde \Gamma$ as $\by_1, \cdots, \by_L$. We then define $\cS: \ell^2(\Lat, \C^M) \mapsto \ell^2(\widetilde{\Lat}, (\C^M)^L)$ by
\[
\bigl(\cS \Psi\bigr)(\bR) = \left(\Psi(\bR+ \by_1), \cdots, \Psi(\bR + \by_L) \right)^{\top}
\] 
With this transformation, 
$\widetilde \ssH := \cS \ssH \cS^*$ acts as
\[
\bigl( \widetilde \ssH \Phi \bigr)(\bR) = \sum_{\bR' \in \widetilde{\Lat}} \widetilde \ssh(\bR') \Phi_{\bR - \bR'} \quad
\text{ where } \quad \widetilde \ssh(\bR')_{ij} = \ssh(\bR' - \by_i + \by_j).
\]
\medskip
 Then, a computation shows that the dual lattice of $\widetilde{\Lat}$ is $\widetilde{\Lat}^* := \widetilde{\ba}_1^* \Z \oplus \widetilde{\ba}_2^*\Z$, with
\begin{equation} \label{eq:tildea*_with_a*}
    \widetilde \ba_1^* = \frac{\ba^*_1}{n} - \frac{\ba^*_2}{m} \qquad \widetilde \ba_2^*= \frac{\ba^*_2}{m}.
\end{equation}
This time, it is the dual lattice $\Lat^*$ which is a sublattice of $\widetilde{\Lat}^*$. The corresponding Brillouin zone $\widetilde{\Gamma^*} := \R^2 / \widetilde{\Lat}^*$ has an area $| \widetilde{\Gamma^*}  | = \frac{1}{| n \, m |} | \Gamma^* |$. The operator $\widetilde{\ssH}$ is again a convolution, and we denote by $\widetilde{\ssH}({\bk})$ the corresponding Fourier fibers of the operators. The map $\bk \mapsto \widetilde{\ssH}(\bk)$ is smooth and $\widetilde{\Lat}^*$--periodic. We describe this relation more precisely in this short Lemma.

\begin{lemma} \label{lem:Bloch_tilde}
      We have
      \[
      \sigma \left( \widetilde{\ssH}(\bk) \right) = \bigcup_{\widetilde{\bK} \in \widetilde{\Lat}^* \cap \Gamma^*} \sigma \left( \ssH(\bk + \widetilde{\bK}) \right)
      = \bigcup_{\widetilde{\bK} \in \widetilde{\Lat}^*} \sigma \left( \ssH(\bk + \widetilde{\bK}) \right).
      \]
\end{lemma}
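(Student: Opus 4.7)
The strategy is to recognize that $\widetilde\ssH$ and $\ssH$ are the same operator expressed in two different Bloch decompositions, and to identify each fiber of one with a direct sum of fibers of the other. Since $\widetilde\Lat\subset\Lat$ has index $L=|nm|$, duality yields $\Lat^*\subset\widetilde\Lat^*$ with the same index, so $\widetilde\Lat^*\cap\Gamma^*$ is a complete system of coset representatives for $\widetilde\Lat^*/\Lat^*$ and has cardinality $L$. Up to sets of measure zero, we get the tiling $\Gamma^*=\bigsqcup_{\widetilde\bK\in\widetilde\Lat^*\cap\Gamma^*}(\widetilde{\Gamma^*}+\widetilde\bK)$. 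Since the supercell map $\cS$ is unitary and satisfies $\cS\ssH=\widetilde\ssH\cS$, and both operators are translation--invariant (under $\widetilde\Lat$), the direct integrals $\ssH=\int^{\oplus}_{\Gamma^*}\ssH(\bk)\,\rd\bk$ and $\widetilde\ssH=\int^{\oplus}_{\widetilde{\Gamma^*}}\widetilde\ssH(\bk')\,\rd\bk'$ are unitarily equivalent; the task reduces to identifying the resulting fiberwise equivalence.

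Concretely, for each $\bk'\in\widetilde{\Gamma^*}$ I would build an explicit unitary
\[
V_{\bk'}\ :\ \bigoplus_{\widetilde\bK\in\widetilde\Lat^*\cap\Gamma^*}\C^{M}\ \longrightarrow\ \C^{\widetilde M}
\]
implementing a discrete Fourier transform over the finite abelian group $\Lat/\widetilde\Lat$, whose Pontryagin dual is canonically $\widetilde\Lat^*/\Lat^*$ with representatives $\widetilde\Lat^*\cap\Gamma^*$. Using the labeling $\by_1,\dots,\by_L$ of $\Lat\cap\widetilde\Gamma$ to identify $\C^{\widetilde M}\cong\bigoplus_{i=1}^{L}\C^{M}$, the block entries of $V_{\bk'}$ are Bloch phases of the form $L^{-1/2}\,\re^{\ri(\bk'+\widetilde\bK)\cdot\by_i}\,\bbI_M$. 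One then checks by direct computation, using $\widetilde\ssh(\bR)_{ij}=\ssh(\bR-\by_i+\by_j)$, a change of summation variables $\bR\mapsto \bR-\by_i+\by_j\in\Lat$, and the orthogonality of characters of $\widetilde\Lat^*/\Lat^*$ (which makes the cross terms between different $\widetilde\bK$ vanish), that
\[
V_{\bk'}^{\,*}\,\widetilde\ssH(\bk')\,V_{\bk'}\ =\ \bigoplus_{\widetilde\bK\in\widetilde\Lat^*\cap\Gamma^*}\ssH(\bk'+\widetilde\bK).
\]
This intertwining gives the first equality of spectra.

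The second equality is immediate from the $\Lat^*$--periodicity of the map $\bk\mapsto\ssH(\bk)$: any $\widetilde\bK\in\widetilde\Lat^*$ decomposes uniquely as $\widetilde\bK=\widetilde\bK_0+\bK$ with $\widetilde\bK_0\in\widetilde\Lat^*\cap\Gamma^*$ and $\bK\in\Lat^*$, hence $\ssH(\bk+\widetilde\bK)=\ssH(\bk+\widetilde\bK_0)$, so the two unions coincide. The main obstacle is the bookkeeping in the first step: the off--diagonal structure of $\widetilde\ssh$ mixes the positions $\by_i$, and getting the Bloch phases in $V_{\bk'}$ exactly right so that the desired identity holds on the nose, rather than up to an (unspecified) unitary on each summand, is the only delicate point. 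This is a standard \emph{band folding} computation, and the character orthogonality on $\widetilde\Lat^*/\Lat^*$ is precisely what makes the off--diagonal cross terms cancel.
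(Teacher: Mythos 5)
Your proposal is correct and takes essentially the same route as the paper: the paper constructs, for each $\widetilde\bK$ and each eigenvector $u$ of $\ssH_{\bk+\widetilde\bK}$, the vector $u_{\widetilde\bK}:=\bigl(\re^{-\ri\by_i\cdot(\bk+\widetilde\bK)}u\bigr)_{i=1}^L$ and shows directly that $\widetilde\ssH_\bk u_{\widetilde\bK}=\lambda u_{\widetilde\bK}$; these $u_{\widetilde\bK}$ are exactly the columns of your unitary $V_{\bk'}$, and the paper's linear-independence-plus-dimension-count is your character orthogonality packaged differently. One small correction to your "delicate point": with the paper's convention $\ssH_\bk=\sum_\bR\ssh(\bR)\re^{-\ri\bR\cdot\bk}$ and $\widetilde\ssh(\bR)_{ij}=\ssh(\bR-\by_i+\by_j)$, the block entries of $V_{\bk'}$ must be $L^{-1/2}\re^{-\ri\by_i\cdot(\bk'+\widetilde\bK)}\bbI_M$, not $L^{-1/2}\re^{+\ri(\bk'+\widetilde\bK)\cdot\by_i}\bbI_M$; with your sign the change of variable $\bR'=\bR-\by_i+\by_j$ leaves a stray factor $\re^{2\ri\by_j\cdot\bk'}$ (with $\by_j$ depending on $\bR'$ and $i$) that does not factor out, whereas with the paper's sign one uses $\re^{-\ri\by_j\cdot\widetilde\bK'}=\re^{-\ri(\bR'+\by_i)\cdot\widetilde\bK'}$ (valid since $\widetilde\bK'\in\widetilde\Lat^*$) and everything collapses cleanly to $\re^{-\ri\by_i\cdot(\bk+\widetilde\bK')}\ssH_{\bk+\widetilde\bK'}$.
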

\begin{proof}
    This result is classical but we provide a simple proof.
   Recall the definition
    \[
\widetilde \ssH_\bk = \sum_{\bR \in \widetilde \Lat} \widetilde\ssh(\bR) \re^{-\ri \bR \cdot \bk} \qquad \in \cS_{L M}(\C).    
    \]
We fix $\bk \in \R^2$ (or in the Brillouin zone $\R^2 /\widetilde \Lat^*$) and compute its spectrum. To this end, for some $u \in \C^M$ and $\bK \in \widetilde \Lat^*$, we define
\[
u_{\bK} := \left(\re^{-\ri \by_1 \cdot (\bk+\bK) } u, \cdots, \re^{-\ri \by_L \cdot (\bk+\bK) } u \right)^{\top},
\]   
and compute, taking advantage of the periodicity of $\widetilde \ssH_\bk$,
\begin{align*}
\left(\widetilde \ssH_\bk u_{\bK} \right)_i 
= \left(\widetilde \ssH_{\bk + \bK} u_{\bK} \right)_i 
&= \sum_{\bR \in \widetilde \Lat}\re^{-\ri \bR \cdot (\bk+ \bK)}  \left( \widetilde\ssh(\bR) u_{\bK} \right)_i \\
&=\sum_{\bR \in \widetilde \Lat}\re^{-\ri \bR \cdot (\bk+ \bK)} \sum_{j=1}^L \ssh(\bR -\by_i + \by_j)\re^{-\ri \by_j \cdot (\bk+\bK) } u .
\end{align*}
Relabeling $\bR':= \bR -\by_i + \by_j$, we note that this variable runs over all sites of the original lattice $\Lat$, so we obtain finally
\begin{align*}
\left(\widetilde \ssH_\bk u_{\bK} \right)_i &=\re^{-\ri \by_i \cdot (\bk+ \bK)}\sum_{\bR' \in  \Lat}\re^{-\ri \bR'  \cdot (\bk+ \bK)}  \ssh(\bR') u = \re^{-\ri \by_i \cdot (\bk+ \bK)} \ssH_{\bk+\bK} u.
\end{align*}   
If $u$ is an eigenvector of $\ssH_{\bk+\bK}$, then $u_{\bK}$ is an eigenvector of $\widetilde \ssH_\bk$ with the same eigenvalue, for all $\bK \in \widetilde{\Lat}^*$. In addition, the eigenvectors $u_\bK$ are linearly independent for $\bK \in \widetilde \Lat^* / \Lat^*$, so we have obtained the $L M$ eigenvalues of the supercell Hamiltonian $\widetilde \ssH(\bk)$, and the result follows.
\end{proof}

When we perform a partial Fourier transform of $\widetilde{\ssH}$ in the $\widetilde{\ba}_2$--direction, and obtain the new family $\widetilde{H}_{\widetilde{\bk}_2}$ with $\widetilde{\bk}_2 \in \widetilde{\Gamma}^*_2$. 

\begin{lemma}
  We have
  \[
   \sigma \left( \widetilde{H}_{\widetilde{\bk}_2} \right) = \bigcup_{\widetilde{\bk}_1 \in \widetilde{\ba}_1^* \R} \sigma \left( \ssH(\widetilde{\bk}_1 + \widetilde{\bk}_2) \right).
  \]
\end{lemma}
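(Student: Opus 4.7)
The plan is to chain together the one-dimensional formula \eqref{eq:spectrum_Hk2} applied to the supercell Hamiltonian $\widetilde{\ssH}$ with the band-folding identity of Lemma~\ref{lem:Bloch_tilde}, and then check that the two sources of ``extra'' Brillouin-zone shifts can be absorbed into the $\widetilde{\ba}_1^*\R$ variable modulo $\Lat^*$.

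More precisely, we proceed in three steps. First, we observe that $\widetilde{\ssH}$ is a $\widetilde{\Lat}$-periodic convolution operator, so it fits exactly into the framework of Section~\ref{sec:2d_bulk} with $(\ba_1,\ba_2)$ replaced by $(\widetilde{\ba}_1,\widetilde{\ba}_2)$. Applying \eqref{eq:spectrum_Hk2} in this setting gives
\[
\sigma\bigl(\widetilde{H}_{\widetilde{\bk}_2}\bigr) \;=\; \bigcup_{\widetilde{\bk}_1 \in \widetilde{\ba}_1^* \R} \sigma\bigl(\widetilde{\ssH}(\widetilde{\bk}_1 + \widetilde{\bk}_2)\bigr).
\]
Second, we insert Lemma~\ref{lem:Bloch_tilde} to unfold each fiber of $\widetilde{\ssH}$ into fibers of $\ssH$, obtaining
\[
\sigma\bigl(\widetilde{H}_{\widetilde{\bk}_2}\bigr) \;=\; \bigcup_{\widetilde{\bk}_1 \in \widetilde{\ba}_1^* \R}\;\bigcup_{\widetilde{\bK}\in \widetilde{\Lat}^*} \sigma\bigl(\ssH(\widetilde{\bk}_1 + \widetilde{\bk}_2 + \widetilde{\bK})\bigr).
\]

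The last (and main) step is to show that the double union on the right coincides with $\bigcup_{\widetilde{\bk}_1 \in \widetilde{\ba}_1^* \R} \sigma(\ssH(\widetilde{\bk}_1 + \widetilde{\bk}_2))$, using that $\ssH(\cdot)$ is $\Lat^*$-periodic. Writing $\widetilde{\bK} = p\,\widetilde{\ba}_1^* + q\,\widetilde{\ba}_2^*$ with $p,q \in \Z$, the $p\,\widetilde{\ba}_1^*$ contribution is trivially reabsorbed by translating the continuous parameter $\widetilde{\bk}_1 \in \widetilde{\ba}_1^* \R$. It remains to show that for every $q \in \Z$ there exist $s \in \R$ and $\bK \in \Lat^*$ with $q\,\widetilde{\ba}_2^* = s\,\widetilde{\ba}_1^* + \bK$. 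Using \eqref{eq:tildea*_with_a*}, a direct computation yields
\[
q\,\widetilde{\ba}_2^* - s\,\widetilde{\ba}_1^* \;=\; -\tfrac{s}{n}\,\ba_1^* + \tfrac{q+s}{m}\,\ba_2^*,
\]
so the condition $\bK \in \Lat^* = \ba_1^*\Z \oplus \ba_2^*\Z$ amounts to $s = -np$ for some $p \in \Z$ together with $q \equiv np \pmod{m}$. Since $\gcd(n,m) = 1$, multiplication by $n$ is a bijection on $\Z/m\Z$, so such a $p$ exists for any $q$. This is the only slightly subtle point, and it is exactly where the coprimality hypothesis enters.

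Putting the three steps together, each term $\ssH(\widetilde{\bk}_1 + \widetilde{\bk}_2 + \widetilde{\bK})$ is unitarily equivalent (by $\Lat^*$-periodicity) to $\ssH(\widetilde{\bk}_1' + \widetilde{\bk}_2)$ for some $\widetilde{\bk}_1' \in \widetilde{\ba}_1^*\R$ depending on $\widetilde{\bk}_1$ and $\widetilde{\bK}$, and conversely every $\widetilde{\bk}_1' \in \widetilde{\ba}_1^* \R$ is reached. The desired equality follows. The main obstacle is bookkeeping the lattice arithmetic in the third step; beyond that the proof is a straightforward concatenation of previously established results.
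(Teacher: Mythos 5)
Your proof is correct and follows essentially the same three-step structure as the paper's: apply the one-dimensional formula \eqref{eq:spectrum_Hk2} to the supercell operator, unfold via Lemma~\ref{lem:Bloch_tilde}, and then use coprimality of $n,m$ together with the identities \eqref{eq:tildea*_with_a*} and the $\Lat^*$-periodicity of $\ssH(\cdot)$ to absorb the $\widetilde{\ba}_2^*\Z$ shifts into the $\widetilde{\ba}_1^*\R$ variable. The paper phrases the final arithmetic step slightly differently (it shows directly that $\ssH(\bk + \widetilde{\ba}_2^*) = \ssH(\bk - pn\,\widetilde{\ba}_1^*)$ with $pn+qm=1$), but this is the same Bézout argument you give.
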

In other words, the spectrum of $\widetilde{H}_{\widetilde{\bk}_2}$ is again the projection of the spectra of initial operator $\ssH_\bk$ along the $\ba_1^*$--direction (which is the direction orthogonal to the wall). It is therefore quite easy to compute the (essential) spectrum of the one--dimensional bulk operator $\widetilde{H}_{\widetilde{\bk}_2}$ for any angle: it is a projection of the two--dimensional Bloch bands of the bulk operator $\ssH_\bk$ (which is independent of the angle) along a particular direction. Figure~\ref{fig:graphene halfspace} illustrates this phenomenon for the Wallace model.

\begin{proof}
    As in~\eqref{eq:spectrum_Hk2}, we find that
\[
\sigma \left( \widetilde{H}_{\widetilde{\bk}_2} \right) = \bigcup_{\widetilde{\bk}_1 \in \widetilde{\Gamma}_1^*} \sigma \left( \widetilde{\ssH}(\widetilde{\bk}_1 + \widetilde{\bk}_2) \right).
\]
Together with Lemma~\ref{lem:Bloch_tilde}, we deduce that
\[
  \sigma \left( \widetilde{H}_{\widetilde{\bk}_2} \right) = \bigcup_{\widetilde{\bK} \in \widetilde{\Lat}^*} \bigcup_{\widetilde{\bk}_1 \in \widetilde{\Gamma}_1^*} \sigma \left( \ssH (\widetilde{\bk}_1 + \widetilde{\bk}_2 + \widetilde{\bK}) \right)
    = \bigcup_{\widetilde{\bk}_1 \in \widetilde{\ba}_1^* \R} \bigcup_{\widetilde{\bK}_2 \in \widetilde{\ba}_2^* \Z} \sigma \left( \ssH (\widetilde{\bk}_1 + \widetilde{\bk}_2 + \widetilde{\bK}_2^*) \right).
\]
It remains to prove that the union in $\widetilde{\bK}_2^*$ can be dropped. Recall that we assume that $n$ and $m$ are coprime. Let $(p,q) \in \Z^2$ so that $pn + qm = 1$. We have, using~\eqref{eq:tildea*_with_a*},
\[
    \ssH \left( \bk - p\, n \widetilde{\ba}_1^* \right) = \ssH \left( \bk - p \ba_1^* + \frac{pn}{m} \ba_2^* \right) = \ssH \left( \bk - p \ba_1^* - q \ba_2^* + \frac{1}{m} \ba_2^* \right) = \ssH \left( \bk + \widetilde{\ba}_2^* \right),
\]
where we used the $\Lat$--periodicity of $\bk \mapsto \ssH(\bk)$ in the last equality. This proves that any integer shift in the $\widetilde{\ba}_2^*$--direction can be implemented as another shift in the $\widetilde{\ba}_1^*$--direction. The result follows.
\end{proof}

\subsubsection{Spectral flow for the edge Hamiltonian with general angles}

We denote by $\widetilde \ssH^\sharp(t)$ the tight-binding model with a soft wall defined as in the previous section, but now with $\widetilde \ba_2$. Applying Theorem~\ref{th:2d_fix_t0} to this case results in  
\begin{theorem} \label{th:lot_of_edge_states_if_incommensurate}
Let $w : \R \mapsto \cS_N$ be a soft wall which is $\nu$--Lipschitz, for some $\nu > 0$. If $E$ belongs to an essential gap of width $g(E) \ge  \nu$, then for all $t_0 \in \R$ the operator $ \widetilde{H}^\sharp(t_0)_{\widetilde{\bk}_2}$ has at least $\cN(E)$ eigenvalues in each interval of the form $(\lambda, \lambda+\nu |\Gamma|/\|\widetilde \ba_2\|]$ in this gap.
\end{theorem}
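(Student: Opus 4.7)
The plan is to deduce this theorem as a direct consequence of Theorem~\ref{th:2d_fix_t0} applied to the supercell Hamiltonian $\widetilde{\ssH}^\sharp(t)$ built in Section~\ref{sec:general_angles}. Indeed, once one passes to the supercell, $\widetilde{\ssH}$ is a two--dimensional tight--binding Hamiltonian on the lattice $\widetilde{\Lat} = \widetilde{\ba}_1 \Z \oplus \widetilde{\ba}_2 \Z$ with $\widetilde{M} = L\,M$ atoms per unit cell $\widetilde{\Gamma}$ (where $L := |nm|$), and by the very definition of $\widetilde{\ba}_2$ the soft wall is aligned with one of the lattice vectors. Hence the framework of Section~\ref{sec:2d-softwall} applies verbatim, simply by replacing every occurrence of $\ba_1, \ba_2, \Gamma, M$ with its tilded counterpart. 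In particular the partial Fourier transform in the $\widetilde{\ba}_2$--direction produces the family of one-dimensional operators $\widetilde{H}^\sharp_{\widetilde{\bk}_2}(t)$ to which Theorem~\ref{th:2d_fix_t0} is directly applicable.

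The second step is to translate the conclusion of Theorem~\ref{th:2d_fix_t0} back into the original data. Two identities suffice: (i) by Lemma~\ref{lem:Bloch_tilde}, each bulk Bloch band of $\ssH$ splits into $L$ supercell bands, so that the Bloch--counting function for the supercell satisfies $\cN_{\widetilde{\ssH}}(E) = L\,\cN(E)$ whenever $E$ lies in a bulk essential gap; and (ii) the Lipschitz computation at the end of Section~\ref{sec:2d-softwall}, applied to the supercell, gives that $t \mapsto \widetilde{W}(t)$ is Lipschitz with constant $\nu\,|\widetilde{\Gamma}|/\|\widetilde{\ba}_2\| = L\,\nu\,|\Gamma|/\|\widetilde{\ba}_2\|$. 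Plugging these into Theorem~\ref{th:2d_fix_t0} yields the announced lower bound on the eigenvalue density of $\widetilde{H}^\sharp_{\widetilde{\bk}_2}(t_0)$ in every essential gap of $\widetilde{H}_{\widetilde{\bk}_2}$.

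I do not anticipate any essential new obstacle beyond careful bookkeeping of the lattice data, since the proof is a transcription of Theorem~\ref{th:2d_fix_t0} into the supercell coordinates. The only point that deserves attention is the hypothesis on the essential gap: by~\eqref{eq:spectrum_Hk2} applied to the supercell, an energy $E$ lies in an essential gap of $\widetilde{H}_{\widetilde{\bk}_2}$ if and only if it lies in a gap of the \emph{projection} of the two--dimensional Bloch bands of $\ssH$ along the $\widetilde{\ba}_1^*$--direction. This is where the commensurability parameters $(n,m)$ play their role: the larger $|n|,|m|$ are, the denser the Brillouin zone $\widetilde{\Gamma}^*$ slices the 2D bands, producing finer projections and hence more gaps of $\widetilde{H}_{\widetilde{\bk}_2}$ that the theorem can populate with edge states, in agreement with the heuristic that incommensurate angles should fill every bulk gap with edge spectrum as in~\cite{Gon-21}.
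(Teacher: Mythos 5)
There is a genuine gap. Applying Theorem~\ref{th:2d_fix_t0} verbatim to the supercell Hamiltonian, as you propose, produces the statement: $\widetilde{H}^\sharp_{\widetilde{\bk}_2}(t_0)$ has at least $\cN_{\widetilde{\ssH}}(E) = L\,\cN(E)$ eigenvalues in each interval of length $\nu\,|\widetilde{\Gamma}|/\|\widetilde{\ba}_2\| = L\,\nu\,|\Gamma|/\|\widetilde{\ba}_2\|$, where $L = |nm|$. That is \emph{not} the theorem. The theorem asserts $\cN(E)$ eigenvalues in each interval of length $\nu\,|\Gamma|/\|\widetilde{\ba}_2\|$, i.e.\ an $L$ times finer localization. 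Having $L\,\cN(E)$ eigenvalues in an interval of length $L\delta$ does not imply $\cN(E)$ eigenvalues in every subinterval of length $\delta$: the eigenvalues could cluster. The average density is the same, but the stated conclusion is strictly stronger than what your argument delivers, and your route would also need the stronger gap hypothesis $g(E) \ge L\,\nu\,|\Gamma|/\|\widetilde{\ba}_2\|$.

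The missing ingredient, which is the actual content of the paper's proof, is that the spectrum of $t \mapsto \widetilde{H}^\sharp_{\widetilde{\bk}_2}(t)$ is periodic in $t$ with period $1/L$, not merely period $1$. This uses the coprimality of $n$ and $m$: writing $pn + qm = 1$, one checks that $\widetilde{w}_{1/(nm)}(\bx) = \widetilde{w}_0(\bx + \ba_2)$, so shifting the wall by $t = 1/(nm)$ reproduces the original model up to a lattice translation by $-\ba_2$. Consequently
\[
\Sf\left(\widetilde{H}^\sharp_{\widetilde{\bk}_2}(t), E, \left[0, \tfrac{1}{L}\right]\right) = \frac{1}{L}\,\Sf\left(\widetilde{H}^\sharp_{\widetilde{\bk}_2}(t), E, [0,1]\right) = -\cN_{\ssH}(E),
\]
and running the Lipschitz argument of Theorem~\ref{th:main_fix_t0} over a $t$--interval of length $1/L$, with eigenvalue branches that are $\nu L |\Gamma|/\|\widetilde{\ba}_2\|$--Lipschitz, yields $\cN(E)$ eigenvalues in each energy interval of length $\frac{1}{L}\cdot \nu L |\Gamma|/\|\widetilde{\ba}_2\| = \nu|\Gamma|/\|\widetilde{\ba}_2\|$. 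Without this $1/L$--periodicity step your proposal cannot close the factor of $L$.
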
  

Note that the density of edge states is lower bounded by $\cN(E ) \| \widetilde \ba_2 \|/ ( \nu | \Gamma |)$. When the cut becomes close to incommensurate, in the sense that $\| \widetilde \ba_2 \|$ becomes large, then there are numerous edge states for each value of $\widetilde{\bk}_2$. 

\begin{proof}
According to Theorem~\ref{th:2d}, 
\[
\Sf \left({\widetilde H^\sharp_{\widetilde \bk_2}(t), E, [0,1]}\right) = -\cN_{\widetilde\ssH} (E).
\]
Each Bloch band in the supercell framework corresponds to $L$ Bloch bands of the original Hamiltonian, so we have $\cN_{\widetilde\ssH} (E) = L \cN_{\ssH} (E)$. On the other hand, we claim that the spectrum of $\widetilde H^\sharp_{\widetilde \bk_2}(t)$ is periodic in $t$ with period $1/L$. This can be seen as follows. 

Recall that $n,m$ are coprimes, and let $p,q \in \Z$ be such that $pn + qm = 1$ as before. Then, we have $\frac{1}{ nm } = \frac{p}{n} + \frac{q}{m}$. This gives, 
\[
    \widetilde w_{\frac{1}{nm}}(\bx) =  w_* \left(  \widetilde{\ba}_2^\perp  \cdot \left( \bx - \frac{1}{nm} \widetilde{\ba}_1 \right) \right) = \widetilde w\left(  \widetilde{\ba}_2^\perp  \cdot \left( \bx - \frac{1}{nm} \widetilde{\ba}_1 + \frac{1}{m} \widetilde{\ba}_2 \right)  \right)
    = \widetilde w\left(  \widetilde{\ba}_2^\perp  \cdot \left( \bx + \ba_2 \right)  \right),
\]
so $\widetilde w_{\frac{1}{nm}}(\bx) = \widetilde w_{0}(\bx + \ba_2)$. After a shift of the wall by $t = \frac{1}{nm}$, we recover the initial model, translated by $-\ba_2$. This proves our claim.
We conclude that
\[
\Sf \left({\widetilde H^\sharp_{\widetilde \bk_2}(t), E, \left[ 0,\frac{1}{L} \right]} \right)=\frac{1}{L}\Sf \left( \widetilde H^\sharp_{\widetilde \bk_2}(t), E, [0,1] \right) = - \cN_{\ssH} (E).
\]
Recalling that the eigenvalue branches for $t \mapsto \widetilde \ssH^\sharp_{\widetilde \bk_2}(t)$ are Lipschitz with constant $\nu |\widetilde \Gamma| / \|\widetilde \ba_2\| = \nu L \Gamma/ \|\widetilde \ba_2\| $ and mimicking the proof of Theorem~\ref{th:main_fix_t0} gives the result.   
\end{proof}


\subsection{Example : the Wallace model for graphene}
\label{sec:exemple_wallace}

We now apply the previous results in the case of graphene, within the Wallace approximation. We consider three different angles, and compute the edge spectral flow in these three cases.

\subsubsection{The Wallace model}
\begin{figure}[]
    \centering
      \hspace{\fill}
    \includegraphics[height=0.3 \textwidth]{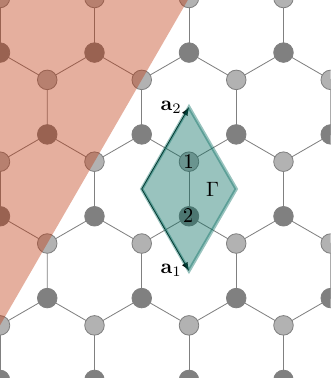}
    \hspace{\fill}
    \includegraphics[height=0.3 \textwidth]{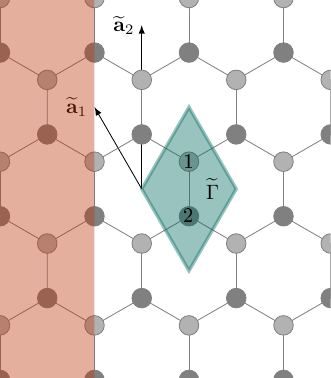}
    \hspace{\fill}
    \includegraphics[height=0.3 \textwidth]{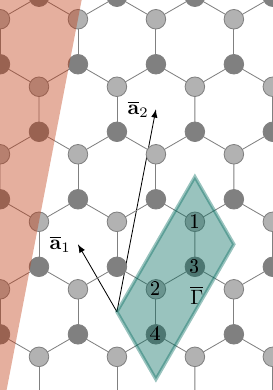}
      \hspace{\fill}
    \caption{Our convention for the zigzag direction (left), armchair direction with $(n,m) = (-1, 1)$ (middle), and another angle with $(n,m) = (2, -1)$ (right). The shaded cell represents $\Gamma$ (resp. $\widetilde{\Gamma}$, $\overline{\Gamma}$).}
    \label{fig:convention_angle}
\end{figure}
Let us start by describing the bulk spectrum of graphene. Graphene has a honeycomb lattice structure, which is the triangular Bravais lattice, with two atoms per unit cell. As basis vectors, we may choose
\[
    \Lat = \ba_1 \Z \oplus \ba_2 \Z
    \quad \text{with} \quad 
    \ba_1 = \frac{a_0}2 \begin{pmatrix}    1 \\ -\sqrt{3} \end{pmatrix}
    \quad \text{and} \quad
    \ba_2 = \frac{a_0}2 \begin{pmatrix}  1 \\ \sqrt{3} \end{pmatrix},
    \qquad \text{(zig--zag convention)},
\] 
where $a_0 = \sqrt{3}d_0$ is the graphene lattice constant and $d_0$ is the distance between neighboring carbon atoms. For unit cell we take $\Gamma = [0,1) \ba_1 + [0, 1) \ba_2$ as shown in Figure~\ref{fig:convention_angle} and there are $M = 2$ carbon atoms, located at 
$\bx_1 = \frac13( \ba_1 + 2\ba_2)$ and $\bx_2 = \frac13 (2 \ba_1 + \ba_2)$. The Wallace Hamiltonian only contains hopping between nearest neighbours, with
\[
\begin{cases}    
(\ssH\Psi)_{\bR}^{(1)} & = \sst_0 \left( \Psi_{\bR}^{(2)} + \Psi_{\bR - \ba_1}^{(2)} + \Psi_{\bR + \ba_2}^{(2)} \right), \\
    (\ssH \Psi)_{\bR}^{(2)} & = \sst_0 \left( \Psi_{\bR}^{(1)} + \Psi_{\bR + \ba_1}^{(1)} + \Psi_{\bR - \ba_2}^{(1)} \right).
 \end{cases}
\]
Here, $\sst_0$ is the carbon-carbon hopping amplitude and since it just sets the energy scale, we will take $\sst_0= 1$ throughout this Section. There are only five non-zero matrices $\ssh(\bR)$, given by
\[
\ssh(\bnull) = \begin{pmatrix}
    0 & 1 \\ 1 & 0
\end{pmatrix}, \qquad
\ssh(\ba_1) =  \ssh(-\ba_2) = \begin{pmatrix}
    0 & 1 \\ 0 & 0
\end{pmatrix},
\qquad
\ssh(-\ba_1) = \ssh(\ba_2) =  \begin{pmatrix}
    0 & 0 \\ 1 & 0
\end{pmatrix},
\]
We find that the Bloch fibers are given by 
\[
    \ssH_\bk = \begin{pmatrix}
    0 & 1 + \re^{ -\ri \bk\cdot \ba_1} + \re^{  \ri \bk \cdot \ba_2} \\ c.c. & 0
\end{pmatrix}.
\]
The map $\bk \mapsto \ssH_\bk$ is $\Lat^*$--periodic, with $\Lat^* = \ba_1^* \Z \oplus \ba_1^* \Z$, with
\[
    \ba_1^* = \frac{2 \pi}{\sqrt{3} a_0} \begin{pmatrix} \sqrt{3} \\ -1   \end{pmatrix}, \qquad
    \ba_2^* = \frac{2 \pi}{\sqrt{3} a_0} \begin{pmatrix} \sqrt{3} \\ 1   \end{pmatrix}.
\]
The eigenvalues of $\ssH_\bk$ are $\pm | 1 + \re^{ -\ri \bk\cdot \ba_1} + \re^{  \ri \bk \cdot \ba_2}|$. Thus, the two bands are mirror images that meet at the so called Dirac points $\bk \in \bK + \Lat^*$ or $\bk \in \bK' + \Lat^*$,  $\bK := \frac13(\ba_1^* + \ba_2^*)$ and $\bK' := \frac23(\ba_1^* + \ba_2^*)$. These are the only points where the gap closes.

\subsubsection{Zigzag orientation}
With this choice for the lattice vector $\ba_2$, and assuming as before that the wall is constant along the $\ba_2$--direction, we can see that the associated halfspace model would corresponds to the zig-zag or bearded cut (see the left panel of Fig.~\ref{fig:convention_angle}). 

\medskip

After a partial Fourier transform in the $\bk_2$ direction, we find the Jacobi operator with blocks
\[
b= h_{\bk_2} (\bnull) = \begin{pmatrix}
0 & 1+ \re^{-\ri \bk_2 \cdot \ba_2 } \\ c.c. & 0
\end{pmatrix}, \quad 
a^* = h_{\bk_2} (\ba_1) = \begin{pmatrix}
0 & 0\\ 1 & 0
\end{pmatrix}.
\]
We recover the SSH matrices studied in Section~\ref{sec:SSH}, with $J_1 := 1+ \re^{-\ri \bk_2 \cdot \ba_2 }$ and $J_2 = 1$. Note that $J_1$ is now complex valued, but its phase can be removed by a unitary transformation that does not affect the wall potential.

For shortness, define $r(\bk_2) = | 1 + \re^{- \ri \bk_2 \cdot \ba_2} | = | 2 \sin \left( \frac{\bk_2 \cdot \ba_2}{2} \right) |$.
The essential spectrum is just
\[
    \sigma(H_{\bk_2}) = \sigma_\ess(H_{\bk_2}) = [- |1 + r(\bk_2)|, - | 1 - r(\bk_2) |] \cup [ |1 - r(\bk_2), | 1 + r(\bk_2) |],
\]
The gap of the one-dimensional operator $H_{\bk_2}$ closes only for $\bk_2 = \frac13 \ba_2^* + \Lat_2^*$ and $\bk_2 = \frac23 \ba_2^* + \Lat_2^*$, corresponding  to the projections of the $\bK$ and $\bK'$ points on the line $\ba_2^* \R$, along the $\ba_1^*$--direction. This can be seen in the first panel of Figure~\ref{fig:graphene halfspace}. Apart from these two points, we have $\cN_{\bk_2}(E = 0) = 1$. Numerical simulations of the edge modes are presented in Figures \ref{fig:graphene_edge_spectrum_different_k2} and \ref{fig:graphene_edge_spectrum_different_t}.

\medskip

Note that $r(\bk_2) < 1$ if and only if $\bk_2 \in (-1/3, 1/3) \ba_2^*$ modulo $\Lat_2^*$. In some sense, the Dirac cones $\pm \frac{1}{3}\ba_2^*$ separate the two phases $| J_1 | < | J_2 |$ and $| J_2| < | J_1 |$ of the SSH model. According to Remark~\ref{rem:hardcut_SSH}, this explains in particular why a hard cut would create a line of edge modes in the gap $(-1/3, 1/3) \ba_2^*$ or in the gap $(1/3, 2/3) \ba_2^*$ modulo $\Lat_2^*$, depending on the location of a hard wall with respect to the lattice, which corresponds either to a zigzag edge or to a bearded zigzag edge.
Also note that the spectrum is symmetric around zero, and that for $\bk_2 = \ba_2/2$ (at the edge of the first Brillouin zone) the chain decouples and the essential spectrum is just $\{-1,1\}$. 

\medskip

\subsubsection{The armchair orientation}
The armchair direction can be obtained by taking $n = -1$ and $m = 1$, that is (we use the tilde notation for the armchair convention)
$\widetilde{\ba}_1 = -\ba_1$, $\widetilde{\ba}_2 = -\ba_1 + \ba_2$, $\widetilde{\ba}_1^* = -\ba_1^* - \ba_2^*$ and $\widetilde{\ba}_2^* = \ba_2^*$, namely
\[
    \widetilde{\ba}_1 = \frac{a_0}2 \begin{pmatrix}    -1 \\ \sqrt{3} \end{pmatrix}, \quad 
    \widetilde{\ba}_2= \frac{a_0}2 \begin{pmatrix}    0 \\ 2 \sqrt{3} \end{pmatrix}, \qquad
    \widetilde{\ba}_1^*  = \frac{2 \pi}{\sqrt{3} a_0} \begin{pmatrix} -2 \sqrt{3} \\ 0   \end{pmatrix}, \quad
    \widetilde{\ba}_2^* = \frac{2 \pi}{\sqrt{3} a_0} \begin{pmatrix} \sqrt{3} \\ 1   \end{pmatrix}.
\]
See Fig.~\ref{fig:convention_angle} for the lattice configuration (middle) and Fig.~\ref{fig:graphene halfspace} for the reciprocal space. The matrix elements are given by
\[
    \widetilde{\ssh}(\bnull) = \begin{pmatrix}
    0 & 1 \\ 1 & 0
    \end{pmatrix}, \quad
    \widetilde{\ssh}(\widetilde{\ba}_1) = \widetilde{\ssh}(\widetilde{\ba}_2 - \widetilde{\ba}_1) =  \begin{pmatrix}
    0 & 0\\ 1 & 0
    \end{pmatrix}, \quad
    \widetilde{\ssh}(-\widetilde{\ba}_1) = \widetilde{\ssh}(\widetilde{\ba}_1 - \widetilde{\ba}_2) = \begin{pmatrix}
    0 & 1 \\ 0 & 0
    \end{pmatrix}.
\]

This time, the spectrum of $\widetilde{H}_{\widetilde{\bk}_2}$ is the projection of the Bloch bands of the Wallace model along the horizontal direction, which sends the two inequivalent Dirac points $\bK$ and $\bK'$ into the single point $\bk_2 = 0$. For all other values of $\widetilde{\bk}_2 \in \R \widetilde{\ba}_2^*$, there is a band gap, and $\cN_{H_{\widetilde{\bk}_2}}(0) = 1$. So a spectral flow of $-1$ will appear when a wall is added and moved along $\widetilde{\ba}_1$.

\begin{figure}[ht]
    \centering
    \includegraphics[page=1, width= 0.3 \textwidth]{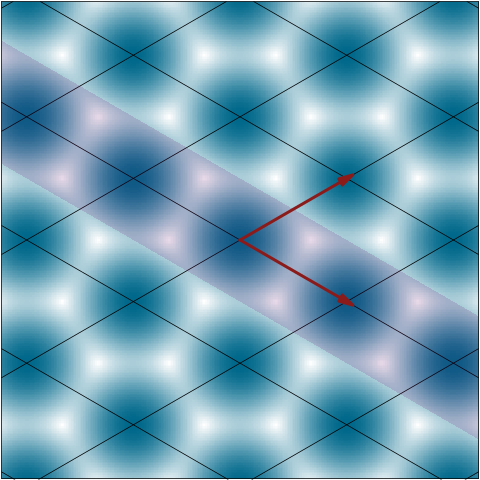}
    \includegraphics[page=2, width= 0.3 \textwidth]{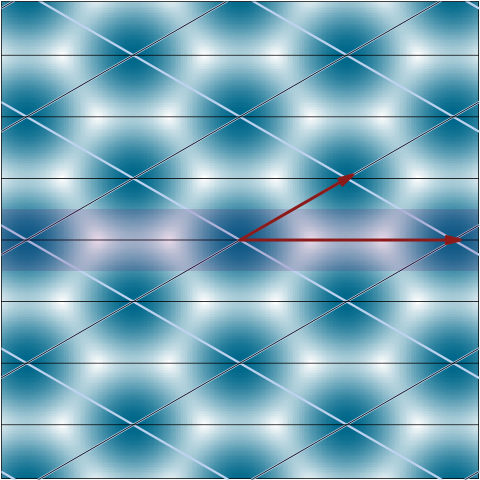}
    \includegraphics[page=3, width= 0.3 \textwidth]{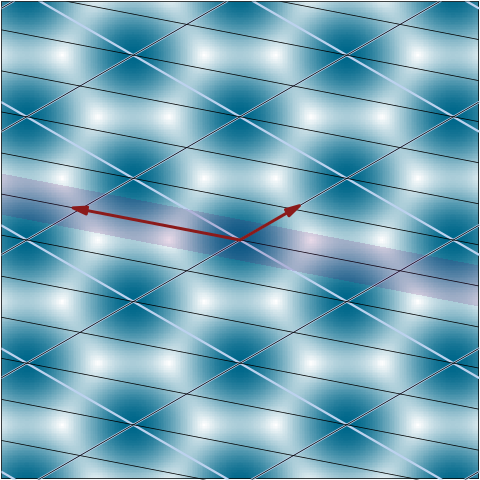}
    \caption{Brillouin zones and reciprocal vectors for the Wallace model corresponding to a zigzag cut, armchair cut ($n=1, m=-1$) and a cut with $n=2$, $m=-1$.  In the second two pictures, the gray lines indicate the original Brillouin zone, which is covered by $|n m|$ copies of the new Brillouin zone. The blue-white gradient shows the upper energy band for the Wallace model, the white dots are the Dirac points. The band structure of $H_{\bk_2}$ (resp. $\widetilde{H}_{\widetilde{\bk}_2}$, $\overline{H}_{\overline{\bk}_2}$) is obtained by projecting parallel to the shaded band onto the $ \ba_2^*$-vector (resp. $\widetilde{\ba}_2$, $\overline{\ba}_2$) that points to the upper right corner.}
    \label{fig:graphene halfspace}
\end{figure}

\subsubsection{Another rational cut}
To illustrate the discussion in Section~\ref{sec:general_angles}, we consider a different rational direction with $n = -1$ and $m = 2$, that is (we use the bar notation for this angle) $\overline{\ba}_1 = -\ba_1$, $\overline{\ba}_2 = -\ba_1 + 2\ba_2$, $\overline{\ba}_1^* = -\ba_1^* - \frac12 \ba_2^*$ and $\overline{\ba}_2^* = \frac12 \ba_2^*$, namely
\[
    \overline{\ba}_1 = \frac{a_0}2 \begin{pmatrix}    -1 \\ \sqrt{3} \end{pmatrix}, \quad 
    \overline{\ba}_2= \frac{a_0}2 \begin{pmatrix}    1 \\ 3 \sqrt{3} \end{pmatrix}, \qquad
    \overline{\ba}_1^*  = \frac{\pi}{\sqrt{3} a_0} \begin{pmatrix} -3\sqrt{3} \\ 1   \end{pmatrix}, \quad
    \overline{\ba}_2^* = \frac{\pi}{\sqrt{3} a_0} \begin{pmatrix} \sqrt{3} \\ 1   \end{pmatrix}.
\]
The lattice structure is shown in the rightmost panel of Figure~\ref{fig:convention_angle} and the reciprocal lattice in Figure~\ref{fig:graphene halfspace}.
This time, the lattice $\overline{\Lat} := \overline{\ba}_1 \Z \oplus \overline{\ba}_2 \Z$ strictly contains the initial graphene lattice $\Lat$. The corresponding new unit cell $\widetilde{\Gamma}$ consists of two copies of the original unit cell and contains $4$ atoms. 
In order to construct the $4\times 4$ matrices that will give $\ssh$, it is convenient to label first the sites on one sublattice and then those on the other sublattice, so that $\ssh$ is block-off-diagonal, see again Fig.~\ref{fig:convention_angle}. 

This gives $\overline{\ssh}({\bR}) = \begin{pmatrix}
0 & f({\bR}) \\ g({\bR}) & 0
\end{pmatrix}$ with $g({\bR}) = f(-{\bR})^*$. The nonzero entries are given by
\[
f (\bnull) = \begin{pmatrix}
1 & 0 \\ 1 & 1
\end{pmatrix}, 
\qquad
f (-\overline{\ba}_1)= \begin{pmatrix}
1 & 0 \\ 0 & 1
\end{pmatrix}, 
\qquad
f(\overline{\ba}_1 - \overline{\ba}_2) = \begin{pmatrix}
0 & 1 \\ 0 & 0
\end{pmatrix},
\]
and $f(\overline{\bR}) = 0$ otherwise. The operator $\overline{H}_{{\bk}_2}$ is defined for ${\bk}_2 \in \R \overline{\ba}_2^*$, and is $\overline{\ba}_2^*$--periodic. Note that $| \overline{\ba}_2^* | = \frac12 | \ba_2^*|$: the Brillouin zone is twice smaller than in the previous cases. Simple geometric considerations shows that $\bK$ and $\bK'$ Dirac point are projected on $\pm \frac13 \overline{\ba}_2^*$ modulo $\overline{\ba}_2^* \Z$. Note that the two projections differ. In the terminology of~\cite{AkhBee-08, FefFliWei-22}, this angle is of {\em zig--zag} type. It is of {\em armchair} type if the projections of $\bK$ and $\bK'$ coincide. In our work however, there are no fundamental differences between these two cases. Except from these special values, there is a gap around zero, but now $\cN_{H_{\widetilde{\bk}_2}}(0) = 2$, so a spectral flow of $-2$ appears in the gap.

\subsubsection{Numerical illustration of the edge spectrum for different cuts}

For the three cuts described before, we numerically compute the corresponding edge (and bulk) spectra\footnote{\label{footnote:data}The code can be found here: \url{https://gitlab.com/davidgontier/softwall_jacobimatrix}}. We use the same strategy as in Section~\ref{sec:SSH} to detect edge modes.

\medskip

For the soft wall, we took $w_t(\bx)$ of the form~\eqref{eq:wt_2d}-\eqref{eq:wt_2d_bis}, with the scalar potential
\[
    v_1(x) := \begin{cases}
        0 & \quad \text{for} \quad x \ge 0 \\
        - \nu x  & \quad \text{for} \quad x \le 0.
    \end{cases}.
\]
The corresponding edge models $\ssH^\sharp(t)$ (zigzag orientation), $\widetilde{\ssH}^\sharp(t)$  (armchair orientation) and $\overline{\ssH}^\sharp(t)$ (our commensurate angle with $n = 2$ and $m = -1$, called {\em angle} in what follows) are translation equivariant. Recall that our results shows that, outside the projection of the Dirac cones, we must have
\begin{equation} \label{eq:expected_Sf_numerics}
\Sf \left( H^\sharp_{\bk_2}(t), [0, 1], E \right) = \Sf \left( \widetilde{H}^\sharp_{\overline{\bk}_2}(t), [0, 1], E \right) = 1 \quad \text{while} \quad
\Sf \left( \overline{H}_{\overline{\bk}_2}^\sharp(t), [0, 1], E \right) = 2.
\end{equation}

\underline{The edge modes as functions of $k_2$.}
In Figure~\ref{fig:graphene_edge_spectrum_different_t}, we plot the edge spectra of $\bk_2 \mapsto H_{\bk_2}(t)$, $\widetilde{\bk}_2 \mapsto \widetilde{H}_{\widetilde{\bk}_2}(t)$ and $\overline{\bk}_2 \mapsto \overline{H}_{\overline{\bk}_2}(t)$ for different values of $t \in [0, 1]$, and with the Lipschitz constant $\nu = 1$. The grey parts represent the bulk essential spectrum, and the red curves are the edge eigenvalues. 

\medskip

In these plots, the $x$--axis corresponds to $| \bk_2 |$, respectively $| \widetilde{\bk}_2 |$, $\overline{\bk}_2$. By periodicity, we represent the plots on a portion $[-\frac12 | \ba_2^* |, \frac12 | \ba_2^* |]$, etc. Note that $| \overline{\ba}_2^* | = \frac12 | \widetilde{\ba}_2^* | = \frac12 | \ba_2^* |$, so the $x$--axis for the last plot is twice smaller. 

\medskip

According to these simulations, it seems that the edge mode curves $\bk_2 \mapsto \lambda(\bk_2)$ do not necessarily starts and ends at the Dirac cones, as is probably the case for hard cut truncations~\cite{FefFliWei-22}. One reason is that the addition of a soft wall breaks the chiral symmetry: the spectrum of $H^\sharp(t)$ is no longer symmetric with respect to the origin, so the eigenvalue $E = 0$ no longer plays a special role for edge modes.

\medskip

Another remark is that there is no qualitative difference between the different cuts: we see edge curves in each gap.

\begin{figure}
    \centering
    \includegraphics[width=0.9\textwidth]{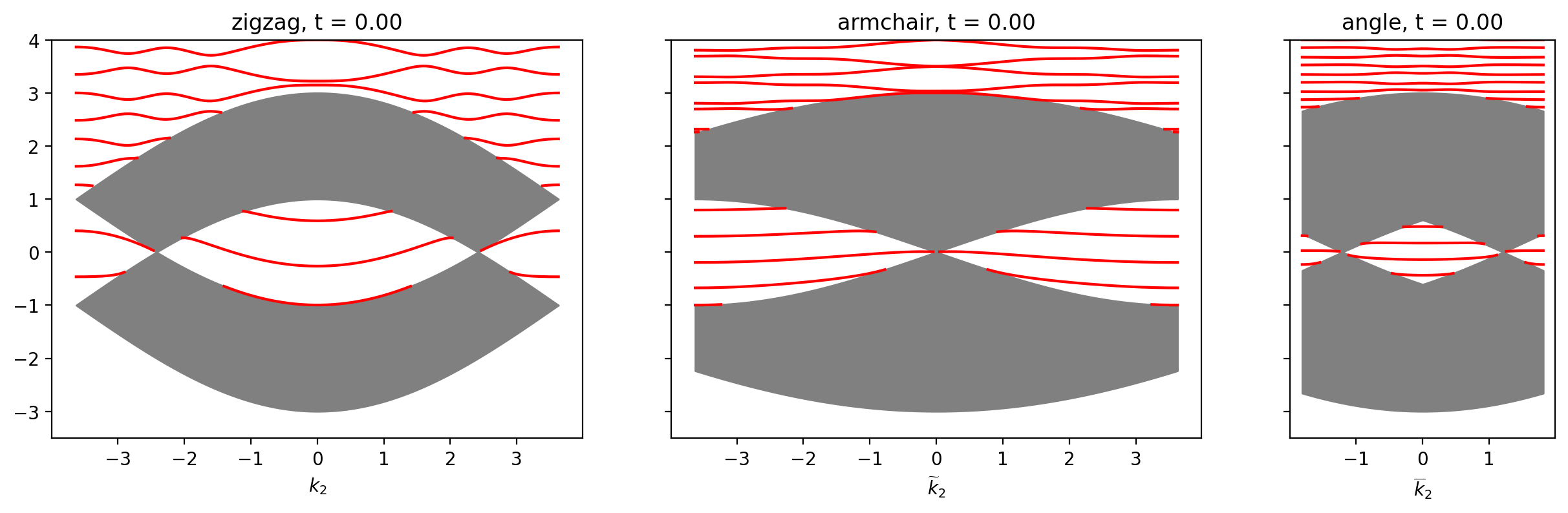}
    \includegraphics[width=0.9\textwidth]{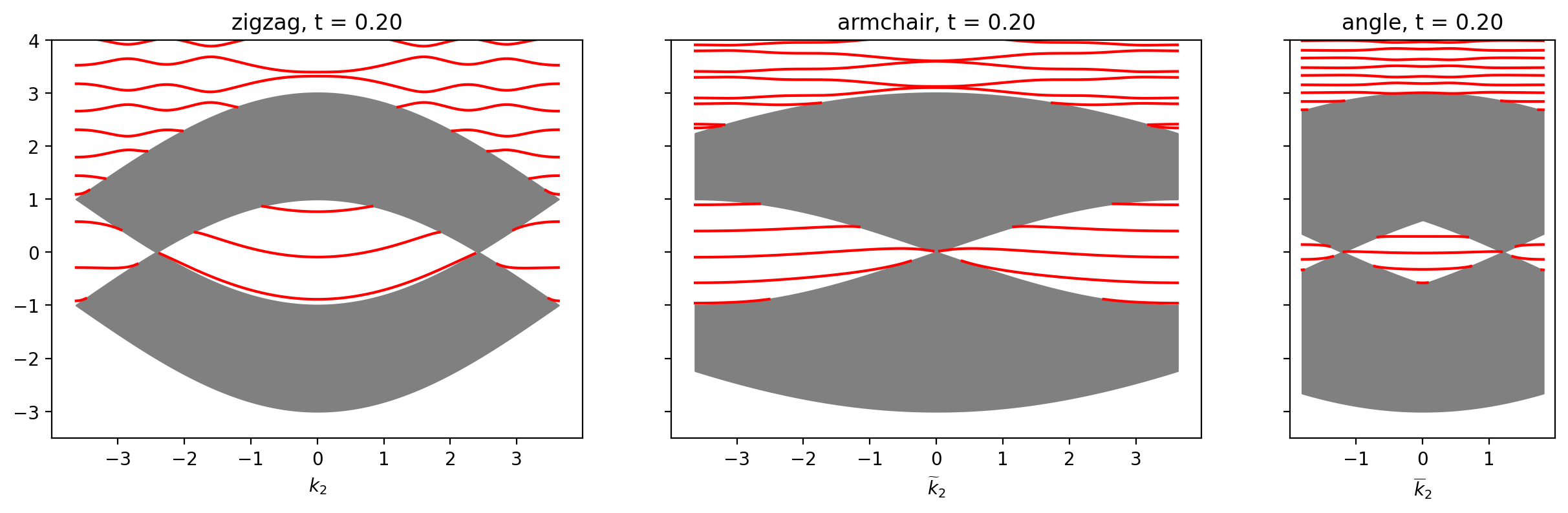}
    \includegraphics[width=0.9\textwidth]{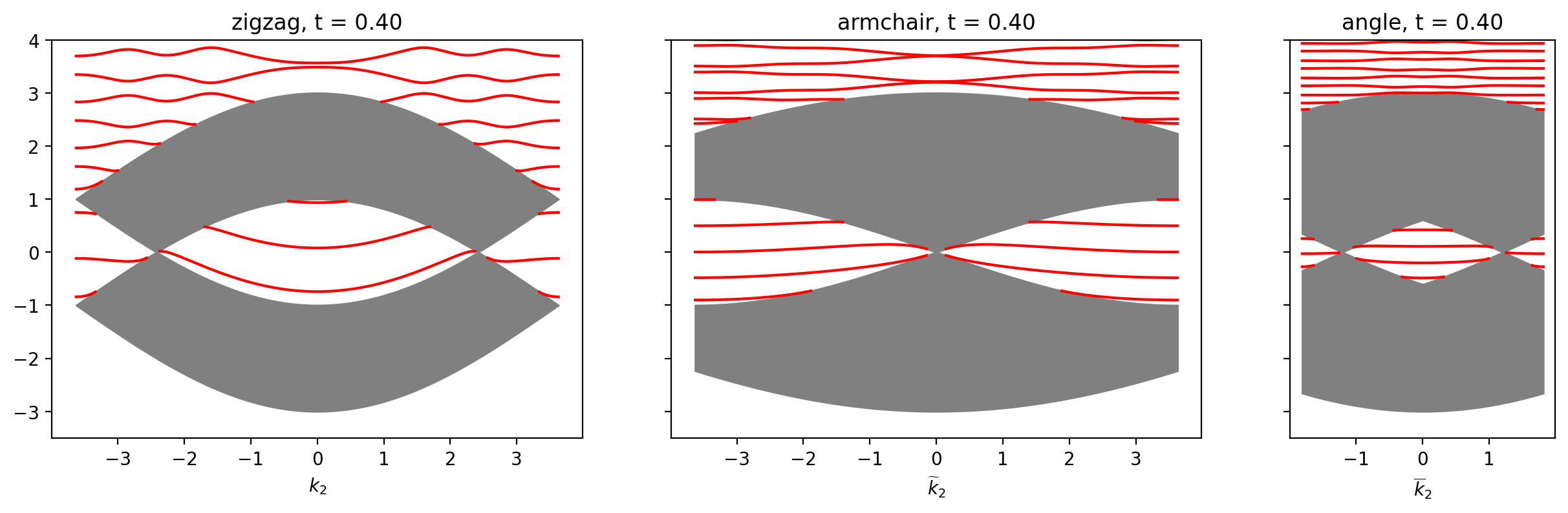}
    \includegraphics[width=0.9\textwidth]{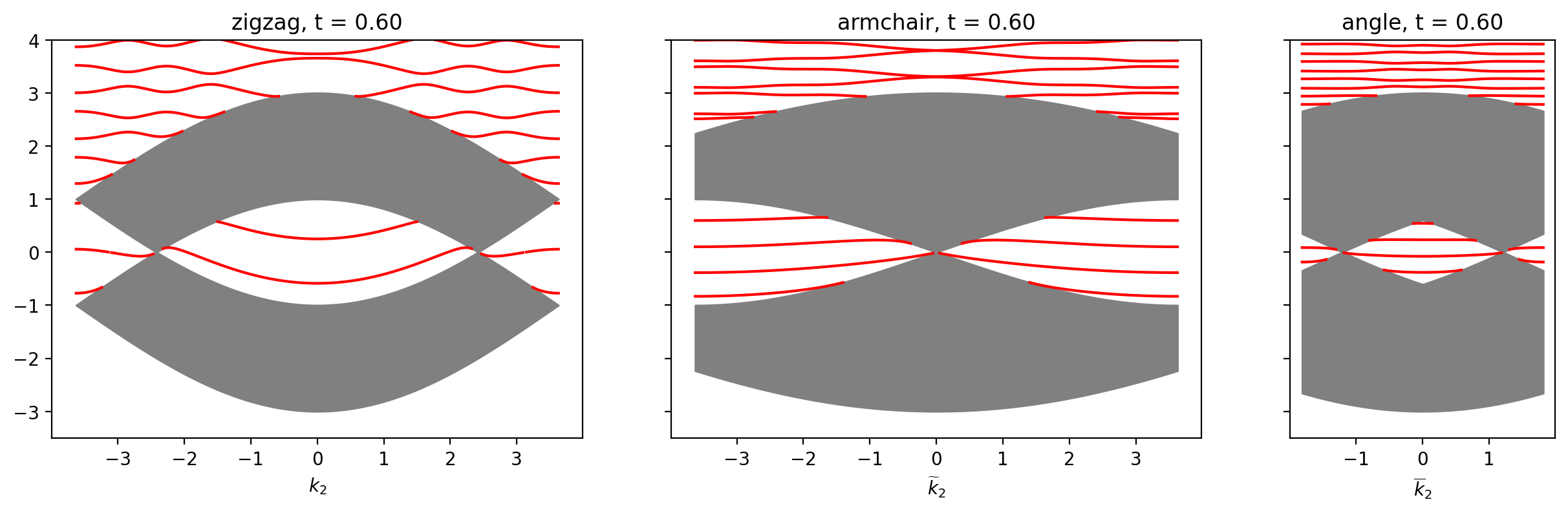}
    \includegraphics[width=0.9\textwidth]{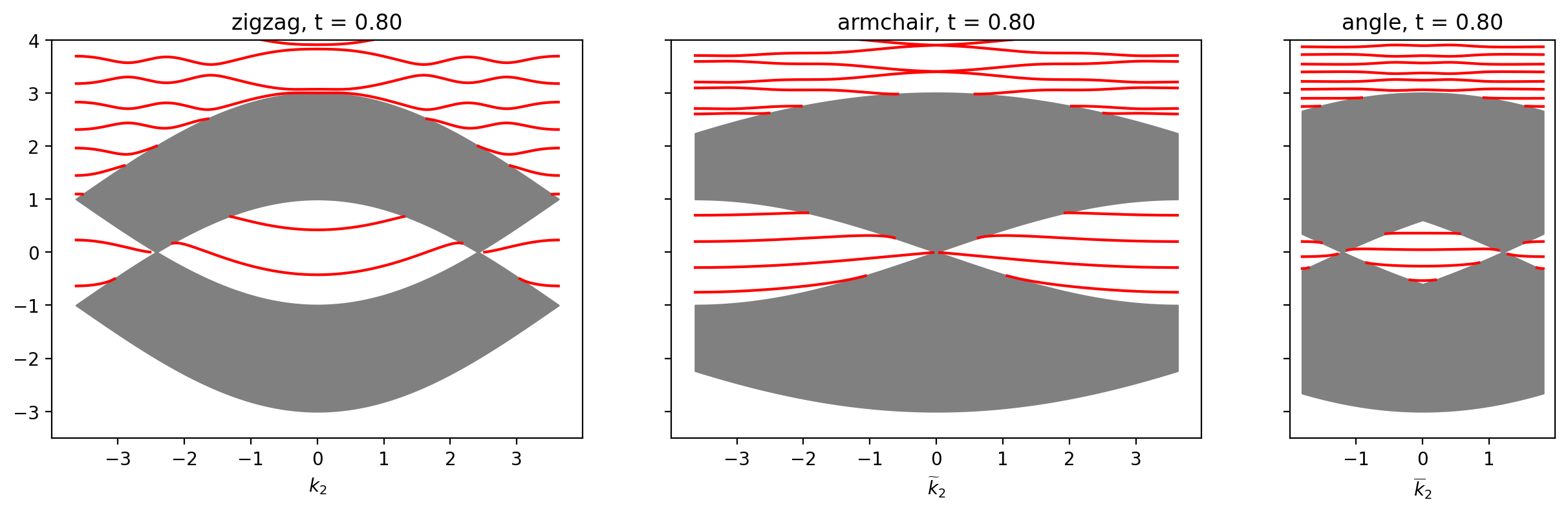}
    
    \caption{Spectrum of the edge Hamiltonians $\bk_2 \mapsto H_{\bk_2}^\sharp(t)$, $\widetilde{\bk}_2 \mapsto \widetilde{H}_{\widetilde{\bk}_2}^\sharp(t)$ and $\overline{\bk}_2 \mapsto \overline{H}^\sharp_{\overline{\bk}_2}(t)$ for different values of $t \in \{ 0, 0.2, 0.4, 0.6, 0.8 \}$, and with $\nu = 1$.}
    \label{fig:graphene_edge_spectrum_different_k2}
\end{figure}


\underline{The edge modes as functions of $t$.}
In Figure~\ref{fig:graphene_edge_spectrum_different_t}, we fix $\bk_2 = \frac16 \ba_2^*$, $\widetilde{\bk}_2 = \frac16 \widetilde{\ba}_2^*$ and $\overline{\bk}_2 = \frac16 \overline{\ba}_2^*$, and plot the edge spectra of $t \mapsto H_{\bk_2}(t)$, $t \mapsto \widetilde{H}_{\widetilde{\bk}_2}(t)$ and $t \mapsto \overline{H}_{\overline{\bk}_2}(t)$. We took different values of the Lipschitz constants. In these figures, we clearly observe the expected spectral flows in~\eqref{eq:expected_Sf_numerics}, and that the curves become steeper and steeper as $\nu$ increases. 

\medskip

Similar spectral flows have been observed in a dislocated model for wave propagation in hexagonal structures, see~\cite{DelFli-23}.

\begin{figure}
    \centering
    \includegraphics[width=0.9\textwidth]{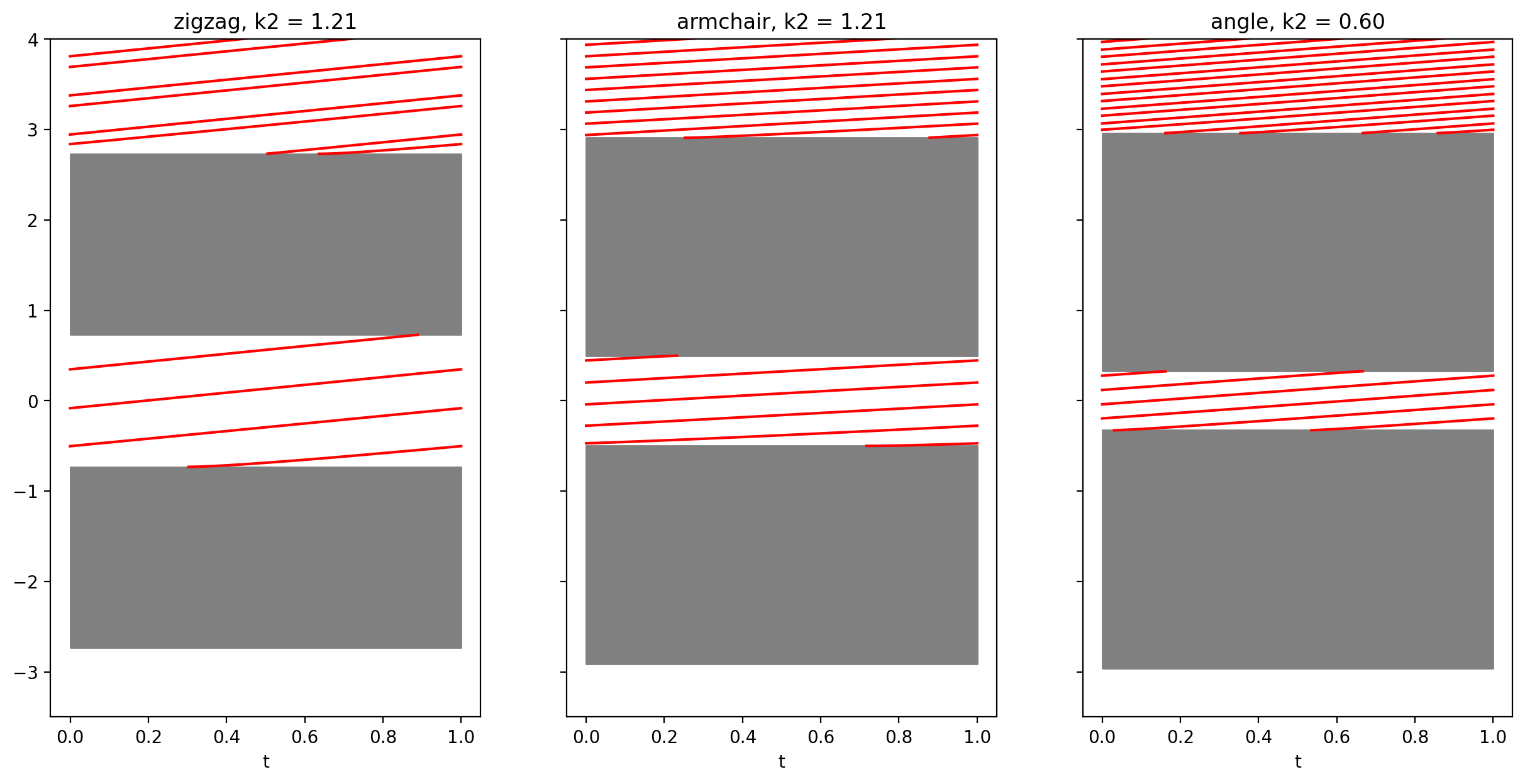}
    \includegraphics[width=0.9\textwidth]{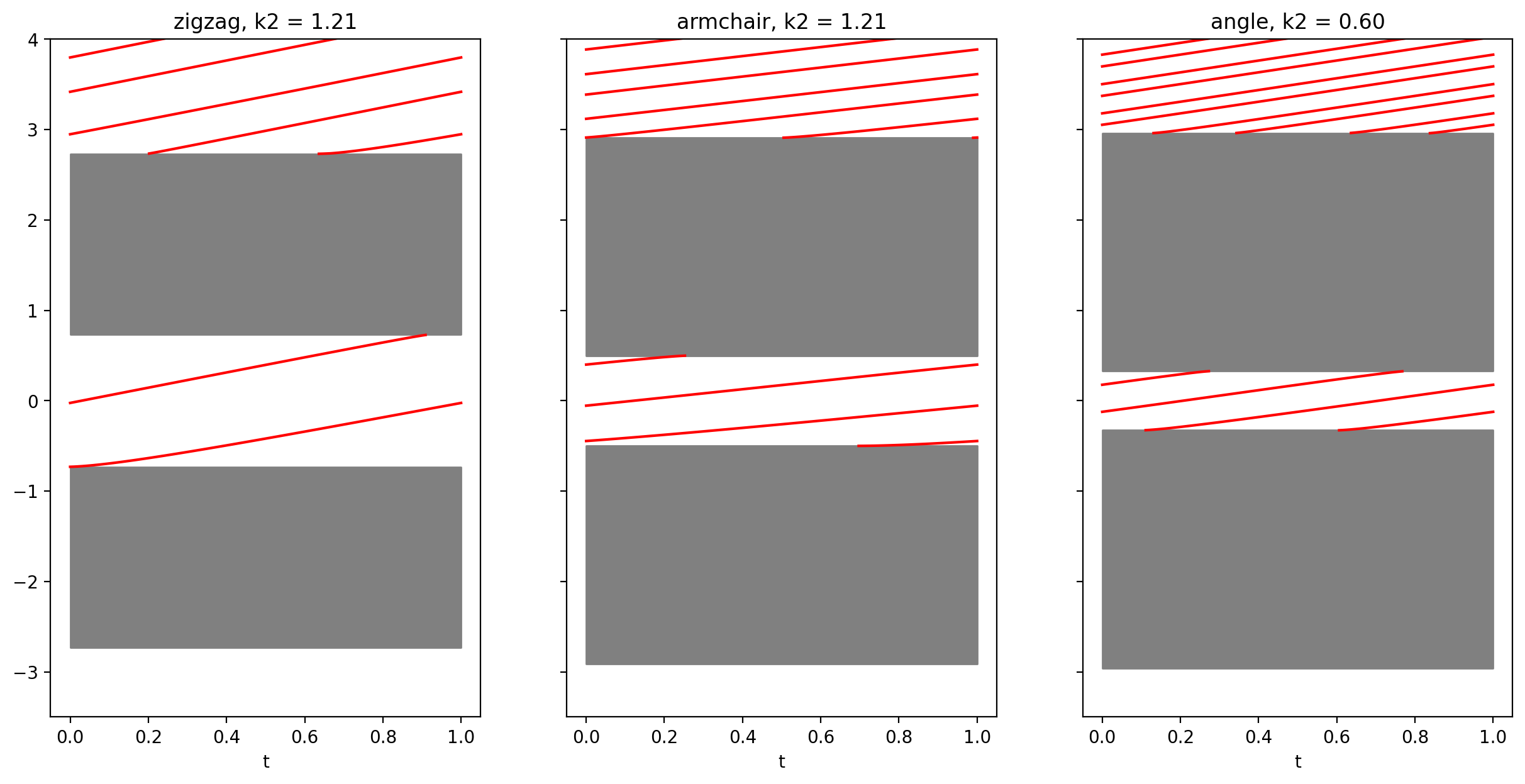}
    \includegraphics[width=0.9\textwidth]{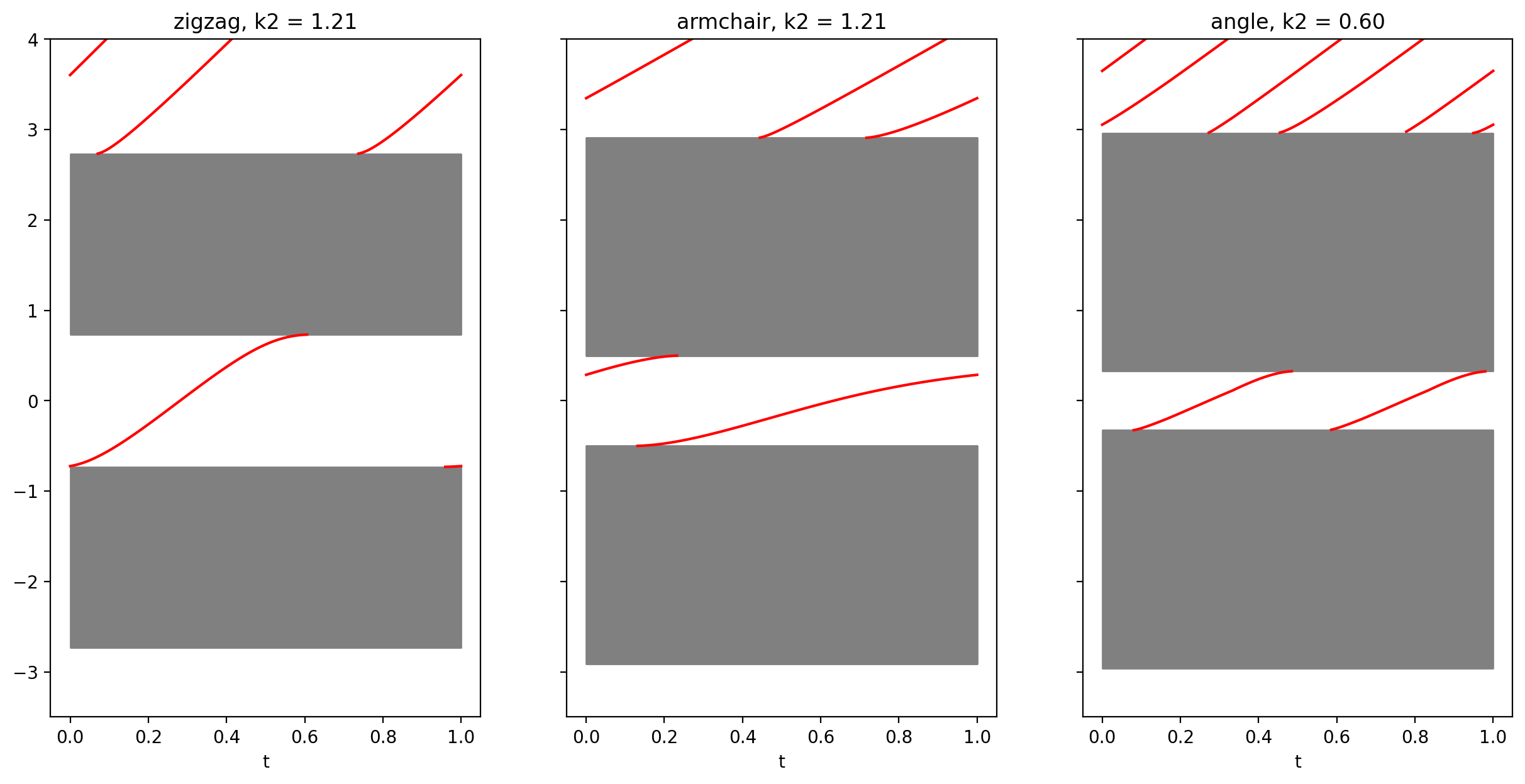}
    
    \caption{Spectrum of the edge Hamiltonians $t \mapsto H_{\bk_2}(t)$, $t \mapsto \widetilde{H}_{\widetilde{\bk}_2}(t)$ and $t \mapsto \overline{H}_{\overline{\bk}_2}(t)$ for different values of $\nu \in \{ 0.5, 1, 5\}$.}
    \label{fig:graphene_edge_spectrum_different_t}
\end{figure}


\appendix

\section{The Spectral Flow}
\label{sec:appendix:SF}

The notion of spectral flow has been introduced in~\cite{AtiPatSin-76, AtiSin-69}, and studied extensively in the case of continuous path of symmetric Fredholm operators (see also~\cite{Phi-96} and the recent book~\cite{DolSchWat-23} for a complete picture). Recall that for a bounded self-adjoint operator $A$, the operator $A - E$ is Fredholm iff $E$ is not in the essential spectrum of $A$. In the unbounded case, we can work instead with the family $B_t := f(A_t - E)$, where $f$ is a bounded increasing function with $f(\lambda) = \lambda$ around $\lambda = 0$. The goal of this Appendix is to give a self--contained version of the Spectral flow, using only tools from spectral theory, and which is enough for our purpose.

\subsection{Norm resolvent topology}

In what follows, we fix $\cH$ a separable Hilbert space. We define the norm-resolvent metric on the set of (possibly unbounded) self-adjoint operators $\cA(\cH)$ acting on $\cH$ by
\[
    d(A, B) := \left\| \cR(\ri, A) - \cR(\ri, B) \right\|_{\rm op},
\]
where we set $\cR(z, A) := (z - A)^{-1}$.
This endows $\cA(\cH)$ with the norm-resolvent topology. Recall that the point $\ri \in \C$ does not play any special role, and that, for all $z \in \C \setminus \R$, the corresponding metric with $z$ instead of $\ri$ defines the same topology. More specifically, we have the following.
\begin{lemma}
    Let $A$ be a self-adjoint operator, and let $z \in \C \setminus \sigma(A)$. Then there is $\varepsilon > 0$ and $C \ge 0$ so that, for all self-adjoint operators $B$ with $d(A, B) < \varepsilon$, we have $z \in \C \setminus \sigma(B)$, and $\| (z - B)^{-1} - (z - A)^{-1} \|_{\rm op} \le C d(A,B)$.
\end{lemma}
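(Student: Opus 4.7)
The plan is to reduce the statement to an algebraic manipulation in the single bounded operator $R_A := (\ri - A)^{-1}$, which has operator norm at most $1$ since $A$ is self-adjoint. The key identity, easily verified via the spectral theorem, is
\[
(z - A)^{-1} = R_A \bigl(1 + (z - \ri) R_A\bigr)^{-1} =: R_A M_A,
\]
and $z \in \rho(A)$ is equivalent to the invertibility of $1 + (z-\ri) R_A$, i.e.\ to the boundedness of $M_A$. Thus by hypothesis $M_A$ is a fixed bounded operator that will play the role of a reference constant in all the estimates.

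The first step I would carry out is to show that $z \in \rho(B)$ whenever $d(A,B)$ is small. Factoring out the reference operator gives
\[
1 + (z-\ri) R_B = \bigl(1 + (z-\ri) R_A\bigr)\bigl[1 + (z-\ri) M_A (R_B - R_A)\bigr],
\]
and a Neumann series argument on the right factor, valid as soon as $d(A,B) < \varepsilon := (2|z-\ri|\,\|M_A\|_{\op})^{-1}$, yields both the invertibility of $1 + (z-\ri) R_B$ (hence $z \in \rho(B)$) and the uniform bound $\|M_B\|_{\op} \le 2\|M_A\|_{\op}$.

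The Lipschitz estimate then follows by decomposing
\[
(z-A)^{-1} - (z-B)^{-1} = (R_A - R_B)\, M_A + R_B (M_A - M_B),
\]
combining with the second resolvent identity $M_A - M_B = (z-\ri)\, M_A (R_B - R_A)\, M_B$, and using $\|R_B\|_{\op} \le 1$ together with the previous bound on $\|M_B\|_{\op}$. This produces the claimed Lipschitz inequality with the explicit constant $C = \|M_A\|_{\op}\bigl(1 + 2|z-\ri|\,\|M_A\|_{\op}\bigr)$.

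I do not expect any substantial obstacle: the content of the lemma is essentially the observation that, near $\sigma(A)$, the function $\lambda \mapsto (z-\lambda)^{-1}$ is a bounded and locally Lipschitz function of $\lambda \mapsto (\ri-\lambda)^{-1}$, and the proof reduces to algebraic bookkeeping together with two elementary Neumann series arguments.
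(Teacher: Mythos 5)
Your proposal is correct and follows essentially the same route as the paper: both factor out the fixed bounded operator $(\ri - A)(z-A)^{-1}$ (your $M_A$, the paper's $\tfrac{\ri-A}{z-A}$), invert the remaining factor $1 + (z-\ri)M_A(R_B-R_A)$ by a Neumann series to get $z \in \rho(B)$, and then obtain the Lipschitz bound by algebraic resolvent identities. Your write-up is somewhat more explicit than the paper's (which works formally with operator fractions), but the key identity and the estimates are the same up to the precise value of the constant $C$.
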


We refer to~\cite[Remark IV.3.13]{Kat-95} for the proof, see in particular~\cite[Eqn. IV (3.10)]{Kat-95}.

\begin{lemma} \label{lem:example_nrc}
    If $A$ is self-adjoint and $H$ is bounded self-adjoint, then  the map $\lambda \mapsto A + \lambda H$ is continuous for the norm-resolvent metric. In particular $\R \ni \lambda \mapsto A - \lambda$ is continuous.
\end{lemma}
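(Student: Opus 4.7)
The plan is a direct application of the second resolvent identity, which will in fact yield Lipschitz (hence continuous) dependence on $\lambda$. Write $A_\lambda := A + \lambda H$. Since $H$ is bounded and self-adjoint, each $A_\lambda$ is self-adjoint on the common domain $\cD(A_\lambda) = \cD(A)$ (independent of $\lambda$), so by the spectral theorem $\ri \in \rho(A_\lambda)$ with the uniform bound $\| (\ri - A_\lambda)^{-1} \|_{\op} \le 1$.

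For $\lambda, \mu \in \R$, I would start from the algebraic identity $(\ri - A_\mu) - (\ri - A_\lambda) = (\lambda - \mu) H$ and sandwich by the two resolvents to obtain
$$
(\ri - A_\lambda)^{-1} - (\ri - A_\mu)^{-1} = (\lambda - \mu) \, (\ri - A_\lambda)^{-1} \, H \, (\ri - A_\mu)^{-1}.
$$
Taking operator norms and using the uniform bound $\| (\ri - A_\lambda)^{-1} \|_{\op} \le 1$ gives
$$
d(A_\lambda, A_\mu) \le |\lambda - \mu| \, \| H \|_{\op},
$$
which proves continuity. The ``in particular'' claim follows by specializing to $H = -\bbI$, a bounded self-adjoint operator with $\|H\|_{\op} = 1$, yielding $d(A - \lambda, A - \mu) \le |\lambda - \mu|$.

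There is essentially no obstacle here; the only point worth checking is that $A + \lambda H$ is genuinely self-adjoint on $\cD(A)$, which is the standard fact that a bounded self-adjoint perturbation preserves self-adjointness on the same domain. The rest is the classical second resolvent identity.
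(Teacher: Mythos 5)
Your proof is correct and is essentially identical to the paper's: both apply the second resolvent identity to $(\ri - A_\lambda)^{-1} - (\ri - A_\mu)^{-1}$ and use the uniform bound $\|(\ri - A_\lambda)^{-1}\|_{\op} \le 1$ from self-adjointness to get the Lipschitz estimate $d(A_\lambda, A_\mu) \le |\lambda-\mu|\,\|H\|_{\op}$. Nothing to add.
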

\begin{proof}
    Let us prove continuity at $\lambda_0 \in \R$. We have
    \[
        \frac{1}{\ri - (A + \lambda H)} - \frac{1}{\ri - (A + \lambda_0 H)} = (\lambda - \lambda_0) \frac{1}{\ri - (A + \lambda H)} H \frac{1}{\ri - (A + \lambda_0 H)}.
    \]
    Since $A$ and $A + \lambda H$ are self-adjoint, we have $\| (\ri - A)^{-1} \|_{\rm op} \le 1$ (and similarly for $A + \lambda H$), so  $d(A+ \lambda_0 H, A + \lambda H) \le | \lambda - \lambda_0 | \cdot \| H \|_{\rm op}$, which implies continuity.
\end{proof}

\subsection{Definition of the Spectral Flow}
We consider a family $[0,1] \ni t \mapsto A_t$ of self-adjoint operators on $\cH$, continuous for this topology. We denote by $\sigma( \{ A_t \})$ and $\sigma_{\rm ess}( \{ A_t \})$ the spectrum and essential spectrum of the family $\{ A_t \}$ as the closure of the union (over $t$) of the spectra and essential spectra. Namely,
\[
    \sigma \left( \{ A_t \} \right) := \overline{ \bigcup_{t \in [0, 1]} \sigma(A_t) } \quad \text{and} \quad
    \sigma_{\rm ess} \left( \{ A_t \} \right) := \overline{ \bigcup_{t \in [0, 1]} \sigma_{\rm ess}(A_t) }.
\]
Both $\sigma( \{ A_t \})$ and $\sigma_{\rm ess}( \{ A_t \})$ are closed subsets of $\R$. Our goal is to define the spectral flow $\Sf(A_t, E, t \in [0, 1])$ for any $E \in \R \setminus \sigma_{\rm ess} \left( \{ A_t \} \right)$. It will correspond to the net number of branches of eigenvalues of $A_t$ crossing the energy $E$ downwards. 

\medskip

Recall that for any self-adjoint operator $A$, if $(E-g, E+g) \cap \sigma_{\ess}(A) = \emptyset$, then there is $0 \le a \le g$ so that both numbers $E - a$ and $E + a$ are not in the spectrum of $A$ (if $E \notin \sigma(A)$, one can simply take $a = 0$). The spectral projector $P_{(E - a, E + a)}(A)$ is then of finite rank, and given by the Cauchy integral
\[
    P_{(E - a, E + a)}(A) =  P_{[E - a, E + a]}(A) = \frac{1}{2 \ri \pi} \oint_{\sC} \frac{\rd z}{z - A},
\]
where $\sC$ is the positively oriented complex circle of center $E$ and radius $a$. We recall the following classical result. We skip the proof.
\begin{lemma} \label{lem:continuity_spectral_projection}
    With the same notation as before, there is $\varepsilon > 0$ so that the map $B \mapsto P_{(E - a, E + a) }(B)$ is continuous on 
   $\{ B \in \cA(\cH), \ d(A, B) < \varepsilon\}$. In particular, we have $\rank \,  P_{(E - a, E + a) }(B) = \rank \, P_{(E - a, E + a) }(A)$ for all such $B$.
\end{lemma}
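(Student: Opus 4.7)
\medskip

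\noindent\textbf{Proof plan.} The strategy is to reduce the continuity of the spectral projection to the uniform continuity (in $z$) of the resolvent map $B \mapsto (z - B)^{-1}$ over the contour $\sC$, and then to conclude by the Cauchy contour formula. The rank equality will follow from the classical fact that two projections within operator--norm distance strictly less than $1$ must have the same rank.

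\medskip

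\noindent\textbf{Step 1: Uniform resolvent continuity on $\sC$.} First I would extend the previous quantitative estimate (the unnamed lemma preceding Lemma~\ref{lem:example_nrc}) from a single $z$ to all $z$ on the compact contour $\sC$. Since $\sC$ is a compact subset of the resolvent set of $A$, the quantities $\| (z-A)^{-1} \|_{\op}$ and $\| (\ri - A)/(z - A) \|_{\op}$ are bounded uniformly in $z \in \sC$ by some constant $M$. Plugging this into the inequality
\[
    d_z(A, B) \;\le\; \frac{M^2 \, d(A, B)}{1 - |z - \ri| M \, d(A, B)},
\]
I obtain that there is $\varepsilon > 0$ such that, whenever $d(A, B) < \varepsilon$, every $z \in \sC$ lies in the resolvent set of $B$, and
\[
    \sup_{z \in \sC} \left\| (z - B)^{-1} - (z - A)^{-1} \right\|_{\op} \;\le\; C \, d(A, B)
\]
for some constant $C$ depending only on $A$, $E$ and $a$. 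In particular, $E \pm a \notin \sigma(B)$, so the spectral projection $P_{(E-a, E+a)}(B)$ is well-defined and admits the same Cauchy integral representation as for $A$.

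\medskip

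\noindent\textbf{Step 2: Continuity of the projection.} With the uniform estimate above, I can write
\[
    P_{(E-a, E+a)}(B) - P_{(E-a, E+a)}(A) = \frac{1}{2 \ri \pi} \oint_\sC \left[ (z - B)^{-1} - (z - A)^{-1} \right] \rd z,
\]
from which the operator norm bound
\[
    \left\| P_{(E-a, E+a)}(B) - P_{(E-a, E+a)}(A) \right\|_{\op} \;\le\; \frac{|\sC|}{2 \pi} \, C \, d(A, B) \;=\; a\, C\, d(A, B)
\]
follows immediately. This proves continuity of $B \mapsto P_{(E-a, E+a)}(B)$ on the open ball $\{ d(A, B) < \varepsilon \}$.

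\medskip

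\noindent\textbf{Step 3: Equality of ranks.} Shrinking $\varepsilon$ if necessary so that $a \, C \, \varepsilon < 1$, the two spectral projections $P(A) := P_{(E-a, E+a)}(A)$ and $P(B) := P_{(E-a, E+a)}(B)$ satisfy $\| P(B) - P(A) \|_{\op} < 1$. Since both are orthogonal projections, the standard fact that two projections at operator distance strictly less than one are unitarily equivalent (a short argument via the Nagy formula $U = (1 - (P(B)-P(A))^2)^{-1/2}(P(A)P(B) + (1-P(A))(1-P(B)))$, or alternatively the rank lemma proved just before Lemma~\ref{lem:projections} applied in both directions) shows that $\rank P(B) = \rank P(A)$. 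The main conceptual content is all in Step 1; Steps 2 and 3 are routine given the uniform resolvent estimate, so I do not anticipate any real obstacle beyond keeping track of how $\varepsilon$ depends on the chosen contour.
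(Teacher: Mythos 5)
Your proof is correct. The paper itself explicitly skips the proof of this lemma (it is labelled a classical result), so there is nothing to compare against line by line; but your argument --- uniform resolvent continuity over the compact contour $\sC$ via the quantitative estimate preceding Lemma~\ref{lem:example_nrc}, the Cauchy integral representation, and the rank-comparison lemma applied in both directions --- is exactly the standard route and is the same machinery the authors deploy later in the proof of Lemma~\ref{lem:projections}, so it fits the paper's framework without any gap.
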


We can now define  the spectral flow of $\{ A_t \}$ at the energy $E$. For all $t \in [0, 1]$, we let $a(t)\ge 0$ be so that $\{ E \pm a(t) \}$ are not in the spectrum of $A_t$. By Lemma~\ref{lem:continuity_spectral_projection}, there is $\varepsilon(t) > 0$ so that the result of the Lemma holds. Set
\[
    U(t) := \left\{ t' \in [0, 1], \quad d(A_t, A_{t'}) < \varepsilon(t) \right\}.
\]
By continuity of the map $t \mapsto A_t$, the set $U(t)$ is an open set of $[0, 1]$, and the union of all $U(t)$ covers the compact set $[0, 1]$. By classical arguments, we can find a subdivision $0 = t_0 < t_1 < \cdots < t_M = 1$ and real non-negative numbers $a_1, a_2, \cdots, a_M \ge 0$, which we call the {\em widths}, so that 
\begin{equation} \label{eq:condition_SF}
    \forall 1 \le i \le M, \quad [t_{i-1}, t_{i}] \ni t \mapsto P_{(E - a_i, E + a_i)}(A_t) 
    \quad \text{is continuous on $[t_{i-1}, t_{i}]$}.
\end{equation}
In other words, the spectrum of $A_t$ does not cross the levels $\{ E \pm a_i \}$ for $t \in [t_{i-1}, t_i]$. Notice in particular that $E + a_{i}$ and $E + a_{i+1}$ do not belong to the spectrum of $A_i$. The {\bf spectral flow} of $\{ A_t \}$  is defined by
\[
    \Sf(A_t, E, t \in [0, 1]) := \sum_{i=1}^{M}  \rank \,   P_{[E, E + a_i)} (A_{t_{i-1}})  -  \rank \,  P_{[E, E + a_i)} (A_{t_{i}})  .
\]
Note that $E$ may be an eigenvalue of $A_{t_i}$ here, so that $P_{[E, E + a_i]}$ is the spectral projection of $A_{t_i}$, {\em including} the eigenspace ${\rm Ker}(A_i - E)$. It is unclear with this definition that the spectral flow has nice topological properties. However, after relabelling the sums, we get the equivalent formula
\begin{equation} \label{eq:def:SF}
     \boxed{ \Sf(A_t, E, t \in [0, 1]) = \dim \Ker (A_{0} - E) - \dim \Ker(A_1 - E) + \sum_{i=1}^{M-1} \rank \, P_{(E+a_i, E + a_{i+1})} (A_{t_i}) ,}
\end{equation}
with the convention that $\rank \, {P}_{(a, b)}(A) = - \rank P_{(b,a)}(A)$ if $b < a$. Note that due to the boundary terms, the spectral flows depends on the value of $E$ in the gap. However, if we restrict the spectral flow to operators $A_t$ satisfying $\dim \Ker (A_{0} - E) = \dim \Ker(A_1 - E)$, then we will prove below that the spectral flow is independent of $E$ in the gap. This happens in particular if $A_1$ is unitarily equivalent to $A_0$.

\medskip

It is a classical result (see~\cite{Phi-96}) that the spectral flow is a well-defined quantity, which is independent of the choice of the $(t_i)$ and the $(a_i)$. The only condition is that the family $(t_i, a_i)$ satisfies~\eqref{eq:condition_SF}. The following example illustrates that the spectral flow counts the number of branches of eigenvalue going downwards.

\begin{example}
Assume that the spectrum of $A_t$ exhibits a single decreasing branch of eigenvalue $\lambda(t)$ crossing downwards the energy $E$ at $t_* \in (0, 1)$. For the subdivision, take $M = 3$, $t_0 = 1$, $t_1 = t_* - \varepsilon$, $t_2 = t_* + \varepsilon$ and $t_3 = 1$, and set $a_1 = 0$, $a_2 = a > 0$, $a_3 = 0$, where $a > \max \{ \lambda(t_1) - \lambda(t_*), \lambda(t_* - \lambda(t_2) )\}$. Then we get 
\[
    \rank \, P_{(E + a_1, E + a_2)} (A_{t_1}) = 1, \quad 
    \rank \, P_{(E + a_2, E + a_3)} (A_{t_2}) = 0,
\]
so the spectral flow equals $1$ in this case. See also Fig.~\ref{fig:spectralFlow_example}. 
\end{example}

\begin{figure}[ht]
    \centering
   \includegraphics[width = 0.7\textwidth]{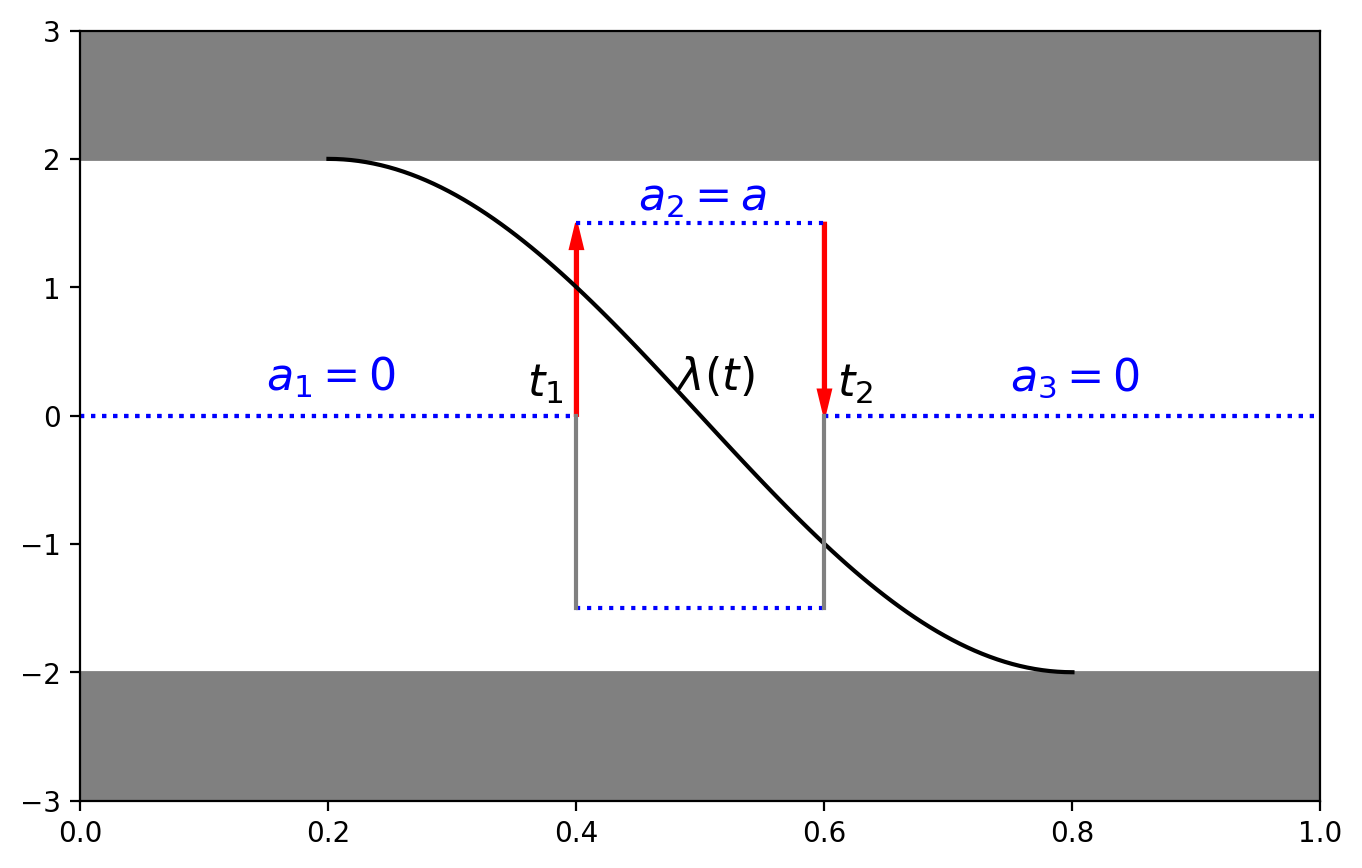}
    
    \caption{The subdivision $(t_i)$ and corresponding widths $(a_i)$. In this example, the spectral flow is $1$ at $E = 0$. The spectral flow is a signed sum  of branches of eigenvalues crossing the red horizontal segments (sign $+$ if the arrow is upwards, sign $-$ otherwise).}
    \label{fig:spectralFlow_example}
\end{figure}

\subsection{Stability of the spectral flow}

The Spectral flow is stable in the following sense.
\begin{lemma} \label{lem:stability_SF_ball}
Let $t \mapsto A_t$ be a norm-resolvent continuous family of self-adjoint operators, and $E \in \R\setminus \sigma_{\rm ess}(\{A_t\})$. There is $\varepsilon > 0$ so that, for any norm-resolvent continuous family $(B_t)$ of self-adjoint operators satisfying $d(A_t, B_t) < \varepsilon$ for all $t \in [0, 1]$ and $B_0 = A_0$,  $B_1 = A_1$, it holds $\Sf(A_t, E, [0, 1]) = \Sf(B_t, E, [0, 1])$.
\end{lemma}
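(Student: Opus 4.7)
The strategy is to show that a subdivision $0 = t_0 < t_1 < \cdots < t_M = 1$ with widths $(a_i)_{i=1}^M$ adapted to $(A_t)$ via~\eqref{eq:condition_SF} remains admissible for $(B_t)$ and produces the same spectral flow, provided $\varepsilon$ is chosen small enough. I would first fix such admissible data for $(A_t)$.

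For the first point, I would promote Lemma~\ref{lem:continuity_spectral_projection} to a uniform statement on each sub-interval $[t_{i-1}, t_i]$. Pointwise, the lemma furnishes, for each $t \in [t_{i-1}, t_i]$, some $\varepsilon(t) > 0$ such that any self-adjoint $B$ with $d(A_t, B) < \varepsilon(t)$ satisfies $E \pm a_i \notin \sigma(B)$ together with $\rank P_{(E - a_i, E + a_i)}(B) = \rank P_{(E - a_i, E + a_i)}(A_t)$. Combining with the norm-resolvent continuity of $s \mapsto A_s$ and the triangle inequality $d(A_t, B_s) \le d(A_t, A_s) + d(A_s, B_s)$, a finite subcover of the compact interval $[t_{i-1}, t_i]$ delivers a uniform $\varepsilon_i > 0$ such that, whenever $d(A_s, B_s) < \varepsilon_i$ for all $s \in [t_{i-1}, t_i]$, the levels $E \pm a_i$ remain in the resolvent set of $B_s$ and the rank is constant. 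Setting $\varepsilon := \min_i \varepsilon_i > 0$, any family $(B_t)$ with $d(A_t, B_t) < \varepsilon$ admits the same $(t_i, a_i)$ as admissible data, i.e.~condition~\eqref{eq:condition_SF} holds for $(B_t)$.

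For the second point, I would use the reformulation~\eqref{eq:def:SF} to compute both spectral flows. Since $B_0 = A_0$ and $B_1 = A_1$, the boundary terms $\dim \Ker(\cdot - E)$ match trivially. At each interior node $t_i$, it then remains to show
\[
    \rank P_{(E + a_i, E + a_{i+1})}(B_{t_i}) = \rank P_{(E + a_i, E + a_{i+1})}(A_{t_i}),
\]
with the sign convention from~\eqref{eq:def:SF}. This is a direct variant of Lemma~\ref{lem:continuity_spectral_projection} for the (possibly asymmetric) interval $(E + a_i, E + a_{i+1})$: the Cauchy contour-integral representation of the spectral projector is valid for any open interval whose endpoints lie outside the spectrum of the operator, and its norm-continuity with respect to the operator follows from exactly the same proof. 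After further shrinking $\varepsilon$ at the finitely many nodes $t_i$ (to ensure $E + a_i, E + a_{i+1} \notin \sigma(B_{t_i})$ and that the two projectors are close in operator norm), a projector-rank-stability argument as in the lemma preceding Lemma~\ref{lem:projections} yields the equality of ranks.

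The main (but minor) obstacle is the uniformity-via-compactness step: upgrading the pointwise Lemma~\ref{lem:continuity_spectral_projection} to a uniform statement on each $[t_{i-1}, t_i]$ by means of a finite subcover. Everything else is bookkeeping once formula~\eqref{eq:def:SF} is at hand, and the hypothesis $B_0 = A_0$, $B_1 = A_1$ is used only to equate the boundary terms of that formula.
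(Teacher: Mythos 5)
Your proposal is correct and follows essentially the same route as the paper: fix an admissible subdivision for $(A_t)$, upgrade Lemma~\ref{lem:continuity_spectral_projection} to a uniform statement on each $[t_{i-1},t_i]$ by a finite-subcover/triangle-inequality argument to get $\varepsilon = \min_i \varepsilon_i$, conclude that the same $(t_i, a_i)$ are admissible for $(B_t)$, and match the terms of~\eqref{eq:def:SF} using $B_0 = A_0$, $B_1 = A_1$ for the boundary contributions. The only cosmetic difference is that the paper obtains the equality of ranks of $P_{(E+a_i,E+a_{i+1})}$ directly from the fact that both levels $E+a_i$ and $E+a_{i+1}$ stay in the resolvent set of $B_{t_i}$ (as $t_i$ lies in two consecutive subintervals), whereas you phrase it as a separate variant of the continuity lemma for asymmetric intervals; both are fine.
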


We emphasize that we require the paths $(A_t)$ and $(B_t)$ to have the same endpoints here. This is due to the boundary terms in~\eqref{eq:def:SF}. The proof follows the exact same lines as in~\cite[Proposition 3]{Phi-96}, so we do not repeat it here.

\begin{lemma} \label{lem:stability_SF}
Let $(s,t) \mapsto (A_t^{(s)})$ be a norm-resolvent continuous family of self-adjoint operators, such that the endpoints $A_{0}^{(s)} = A_0$ and $A_{1}^{(s)} = A_1$ are independent of $s$. Assume in addition that $\Sf(A_t^{(s)}, E, t \in [0, 1])$ is well-defined for all $s$. Then this spectral flow is independent of $s$.
\end{lemma}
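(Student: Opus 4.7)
The plan is to show that $s \mapsto \Sf(A_t^{(s)}, E, t \in [0,1])$ is locally constant on the parameter space of~$s$. Since the spectral flow takes values in $\Z$ and the parameter space is connected (typically $s \in [0,1]$), it will then be constant.

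First I would upgrade the joint norm-resolvent continuity of $(s,t) \mapsto A_t^{(s)}$ to uniform continuity in~$t$, as a function of~$s$. Concretely: the map $s \mapsto (t \mapsto A_t^{(s)})$, valued in the space of continuous paths $C([0,1], \cA(\cH))$ with the sup metric $\rho(\{A_t\}, \{B_t\}) := \sup_{t\in[0,1]} d(A_t, B_t)$, is continuous. This uses the standard fact that joint continuity on a product with a compact factor yields uniform continuity in that factor: for every $s_0$ and every $\eta > 0$, there exists $\delta > 0$ such that
\[
    |s - s_0| < \delta \quad \Longrightarrow \quad \sup_{t \in [0,1]} d\bigl(A_t^{(s)}, A_t^{(s_0)}\bigr) < \eta.
\]

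Next I would fix $s_0$ and apply Lemma~\ref{lem:stability_SF_ball} to the path $(A_t^{(s_0)})_{t \in [0,1]}$ at the energy $E$: the assumption that $\Sf(A_t^{(s_0)}, E, [0,1])$ is well-defined ensures that $E$ lies outside the essential spectrum of every $A_t^{(s_0)}$, so the lemma applies and yields some $\varepsilon = \varepsilon(s_0) > 0$ such that any continuous path $(B_t)$ with $\sup_t d(B_t, A_t^{(s_0)}) < \varepsilon$ and matching endpoints $B_0 = A_0$, $B_1 = A_1$ has the same spectral flow at $E$ as $(A_t^{(s_0)})$.

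Choosing $\delta$ corresponding to $\eta = \varepsilon$ in the first step, the path $B_t := A_t^{(s)}$ satisfies for $|s-s_0|<\delta$ the hypotheses of Lemma~\ref{lem:stability_SF_ball}: the uniform bound holds by choice of $\delta$, and the endpoint condition $B_0 = A_0^{(s)} = A_0$, $B_1 = A_1^{(s)} = A_1$ is guaranteed by the assumption that the endpoints do not depend on~$s$. This delivers local constancy of $s \mapsto \Sf(A_t^{(s)}, E, [0,1])$, and the conclusion follows.

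The main subtlety, and the only real point to check carefully, is the compactness argument needed to turn the joint norm-resolvent continuity into uniform continuity of $s \mapsto (t \mapsto A_t^{(s)})$; everything else is a direct invocation of Lemma~\ref{lem:stability_SF_ball}. The endpoint matching $B_0 = A_0$, $B_1 = A_1$ is essential, since the general definition~\eqref{eq:def:SF} of the spectral flow depends on $\dim \Ker(A_0 - E)$ and $\dim \Ker(A_1 - E)$, and Lemma~\ref{lem:stability_SF_ball} was explicitly stated under the same endpoint assumption.
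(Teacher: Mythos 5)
Your argument is correct and follows exactly the route the paper indicates (the paper only sketches it as ``take a suitable subdivision in the homotopy parameter $s$ and apply the previous Lemma''): local constancy of $s \mapsto \Sf(A_t^{(s)}, E, [0,1])$ via the $\varepsilon$ from Lemma~\ref{lem:stability_SF_ball}, combined with integer-valuedness and connectedness of the $s$-parameter space. Your explicit treatment of the uniform-in-$t$ continuity via compactness and of the endpoint matching fills in the details the paper leaves implicit.
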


This result follows by taking a suitable subdivision in the homotopy parameter $s$ and applying the previous Lemma. The bottomline is that the only obstruction for homotopic invariance is that the spectral flow at $E$ becomes ill-defined, which happens when the essential gap closes. 

\medskip

Recall Weyl's theorem, which states that if $A$ is self-adjoint and $K$ is compact symmetric, then $A + s K$ is self-adjoint, and $\sigma_\ess(A + s K) = \sigma_\ess(A)$ for all $s$: the essential gaps are stable under compact perturbations, and therefore never close. Applying the previous lemma to $A_t^{(s)} := A_t + s K$ gives the following result.

\begin{lemma} \label{lem:stability_SF_compact}
Let $(K_t)$ be a continuous family of symmetric compact operators with $K_0 = K_1 = 0$. Then $\Sf(A_t, E, [0, 1]) = \Sf(A_t + K_t, E, [0, 1])$.
\end{lemma}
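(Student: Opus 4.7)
The plan is to apply the homotopy invariance result (Lemma~\ref{lem:stability_SF}) to the linear interpolation
\[
A_t^{(s)} := A_t + s K_t, \qquad (s,t) \in [0,1]^2.
\]
At the endpoints in $t$ we have $A_0^{(s)} = A_0 + s K_0 = A_0$ and $A_1^{(s)} = A_1 + s K_1 = A_1$, both independent of $s$ by the hypothesis $K_0 = K_1 = 0$. So if Lemma~\ref{lem:stability_SF} applies, we get $s \mapsto \Sf(A_t^{(s)}, E, t\in[0,1])$ constant, and evaluating at $s = 0$ and $s = 1$ gives exactly the desired equality.

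To invoke Lemma~\ref{lem:stability_SF}, I need to verify two points. The first is that $(s,t) \mapsto A_t^{(s)}$ is norm-resolvent continuous. I would decompose any small increment as
\[
(\ri - A_t^{(s)})^{-1} - (\ri - A_{t_0}^{(s_0)})^{-1}
= \bigl[(\ri - A_t - s K_t)^{-1} - (\ri - A_t - s_0 K_{t_0})^{-1}\bigr]
+ \bigl[(\ri - A_t - s_0 K_{t_0})^{-1} - (\ri - A_{t_0} - s_0 K_{t_0})^{-1}\bigr].
\]
For the first bracket, the standard resolvent identity combined with $\|(\ri - B)^{-1}\|_{\rm op} \le 1$ for any self-adjoint $B$ yields a bound of the form $\|s K_t - s_0 K_{t_0}\|_{\rm op}$, which is small by norm-continuity of $t \mapsto K_t$. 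For the second bracket, since $K_{t_0}$ is a fixed bounded self-adjoint perturbation, the argument of Lemma~\ref{lem:example_nrc} (or, equivalently, the usual bound $d_z(A+K, B+K) \lesssim d(A,B)$ for a shifted base point $z$ chosen far from the spectrum) shows it is controlled by $d(A_t, A_{t_0})$, which goes to zero by norm-resolvent continuity of $t \mapsto A_t$.

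The second point is that $\Sf(A_t^{(s)}, E, t \in [0,1])$ is well-defined for every $s$, i.e.\ that $E \notin \sigma_{\rm ess}(\{A_t^{(s)}\}_{t \in [0,1]})$. Since $s K_t$ is compact for every $(s,t)$, Weyl's theorem gives $\sigma_{\rm ess}(A_t^{(s)}) = \sigma_{\rm ess}(A_t)$, so the essential spectrum of the family is unchanged in $s$. The hypothesis $E \notin \sigma_{\rm ess}(\{A_t\})$ thus propagates to all intermediate values of $s$, guaranteeing well-definedness of the spectral flow all along the homotopy.

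The main technical obstacle is the careful verification of the joint norm-resolvent continuity, since $A_t$ is typically unbounded and one must cleanly separate the bounded perturbation from the unbounded resolvent. Everything else is direct application of results already established: homotopy invariance of the spectral flow under the condition that the essential gap at $E$ never closes, together with Weyl's theorem ensuring that this condition holds along the whole two-parameter family.
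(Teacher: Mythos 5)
Your proposal is correct and follows essentially the same route as the paper: the paper also defines the homotopy $A_t^{(s)} := A_t + sK_t$, invokes Weyl's theorem to guarantee the essential gap at $E$ stays open for all $s$, and concludes by the homotopy invariance of Lemma~\ref{lem:stability_SF}. Your additional verification of the joint norm-resolvent continuity is a detail the paper leaves implicit, and your argument for it is sound.
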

There is no assumption on the smallness of the family $(K_t)$ in the compact case. Finally, we record the following useful result.

\begin{lemma}
If $f : \R \to \R$ is a strictly increasing function, then
\[
    \Sf( f(A_t), f(E), [0, 1]) = \Sf( A_t, E, [0, 1]).
\]
\end{lemma}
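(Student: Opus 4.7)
The plan is to transfer, via the functional calculus, a valid subdivision computing $\Sf(A_t, E, [0,1])$ to one computing $\Sf(f(A_t), f(E), [0,1])$. The starting point is the identity $P_B(f(A)) = P_{f^{-1}(B)}(A)$, valid for any Borel set $B \subset \R$ and any self-adjoint operator $A$ (obtained from $\chi_B \circ f = \chi_{f^{-1}(B)}$). Since $f$ is strictly increasing (and implicitly continuous, so that $t \mapsto f(A_t)$ inherits norm-resolvent continuity from $t \mapsto A_t$), it is a continuous bijection of $\R$ onto its image with continuous inverse. Consequently one has $\dim \Ker(f(A) - f(E)) = \dim \Ker(A - E)$, $\rank P_{(f(\alpha), f(\beta))}(f(A)) = \rank P_{(\alpha, \beta)}(A)$ for any $\alpha < \beta$, and $\lambda \in \sigma(A)$ (respectively $\sigma_{\ess}(A)$) if and only if $f(\lambda) \in \sigma(f(A))$ (respectively $\sigma_{\ess}(f(A))$). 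In particular $f(E)$ lies outside the essential spectrum of the family $\{f(A_t)\}$, so $\Sf(f(A_t), f(E), [0,1])$ is well-defined.

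Next I would fix a subdivision $(t_i, a_i)_{0 \le i \le M}$ satisfying~\eqref{eq:condition_SF} for $(A_t, E)$, so that $\Sf(A_t, E, [0,1])$ is given by formula~\eqref{eq:def:SF}. Setting $b_i^\pm := f(E \pm a_i) - f(E) > 0$, the identity $P_{(f(E) - b_i^-, f(E) + b_i^+)}(f(A_t)) = P_{(E - a_i, E + a_i)}(A_t)$ transfers the continuity assumption~\eqref{eq:condition_SF} from $A_t$ to $f(A_t)$, yielding an asymmetric subdivision $(t_i, b_i^-, b_i^+)$. Applying the three identities above term by term to~\eqref{eq:def:SF} directly produces
\[
\Sf(A_t, E, [0,1]) = \dim \Ker(f(A_0) - f(E)) - \dim \Ker(f(A_1) - f(E)) + \sum_{i=1}^{M-1} \rank P_{(f(E) + b_i^+, f(E) + b_{i+1}^+)}(f(A_{t_i})).
\]

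The remaining step, and the main technical point, is to recognise the right-hand side as $\Sf(f(A_t), f(E), [0,1])$, whose definition in Section~\ref{sec:appendix:SF} uses \emph{symmetric} widths around $f(E)$. The key observation is that the formula~\eqref{eq:def:SF} only involves the positive-side widths; the full strength of the symmetric continuity~\eqref{eq:condition_SF} serves merely to guarantee good behaviour of a two-sided spectral projector, a role equally well played by the asymmetric condition derived above. One may either extend the definition of $\Sf$ to allow asymmetric widths (the telescoping argument going through without change), or invoke the invariance of the spectral flow under refinement and perturbation of the subdivision (established classically in~\cite{Phi-96} and used throughout this Appendix, e.g.\ in Lemma~\ref{lem:stability_SF}) to replace $(t_i, b_i^-, b_i^+)$ by a genuinely symmetric subdivision whose value agrees by a rank-counting argument on the finitely many continuous eigenvalue branches trapped in $(f(E) - b_i^-, f(E) + b_i^+)$ over $[t_{i-1}, t_i]$. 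Either route closes the proof.
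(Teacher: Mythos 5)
The paper offers no proof of this lemma (it is explicitly skipped as straightforward), so there is nothing to compare against except the intended functional-calculus argument, and your proposal is a correct and complete realization of it: the identity $P_B(f(A)) = P_{f^{-1}(B)}(A)$ transfers kernels, ranks of spectral projectors, and (non-)membership in the essential spectrum, and hence carries a valid subdivision for $(A_t, E)$ to one for $(f(A_t), f(E))$ term by term in~\eqref{eq:def:SF}. The one wrinkle you identify --- that $f$ maps the symmetric windows $(E-a_i, E+a_i)$ to asymmetric ones around $f(E)$ --- is real but harmless, and your resolution is the right one: the lower widths never appear in~\eqref{eq:def:SF}; they only serve in~\eqref{eq:condition_SF} to keep the relevant projectors finite-rank and continuous, a role the asymmetric intervals play equally well, so no actual re-symmetrization is needed. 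Your caveat that $f$ must be continuous for $t \mapsto f(A_t)$ to remain norm-resolvent continuous is also well taken; the paper's statement tacitly assumes this, as its only applications ($f(x) = (C-x)^{-1}$ and an affine shift) are continuous on the relevant range.
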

Here, $f(A_t)$ is defined via the spectral calculus. We skip the proof, as it is straightforward. In the case where $(A_t)$ are uniformly bounded from below by some $C$, one can take $f_1(x) := (C - x)^{-1}$ and $f_2(x) = x - (C - E)^{-1}$, and deduce that
\[
    \Sf( A_t, E, [0, 1]) = \Sf \left(  (C - A_t)^{-1} , (C - E)^{-1}, [0, 1] \right)
    =  \Sf \left(  (C - A_t)^{-1} - (C - E)^{-1}, 0, [0, 1] \right).
\]
The advantage of the right-hand side is that the family $t \mapsto (C - A_t)^{-1} - (C - E)^{-1}$ is a norm continuous family of {\bf bounded} symmetric operators, for which $0$ is not in the essential spectrum. So this family is Fredholm self-adjoint, and we are back to the {\em usual} theory.

\subsection{The translation equivariant case}

In the previous section, the spectral flow depends on the energy $E$ in the gap. This is due to the presence of boundary terms in~\eqref{eq:def:SF}. 

\medskip

For $I \subset \R$ an interval of $\R$, we denote by $\cC_I$ the set of norm-resolvent continuous paths of self-adjoint operator $(A_t)$ satisfying that
\[
    \forall E \in I, \quad \dim \Ker (A_0 - E)  = \dim \Ker (A_1 - E).
\]
Typical examples are:
\begin{itemize}
    \item If $\R \ni t \mapsto (A_t)$ is $1$--periodic in $t$, that is $A_{t+1} = A_t$, then $(A_t)$ is in $\cC_I$ with $I = \R$.
    \item If $A_{1}$ is unitary equivalent to $A_0$, say $A_1 = U A_0 U^*$, then $(A_t)$ is in $\cC_I$ with $I = \R$.
    \item If $A_1 = \Sigma \oplus A_0$ (as in Section~\ref{sec:dislocated}), then $(A_t)$ is in $\cC_I$ with $I = (-\infty, \Sigma)$ or $I = (\Sigma, + \infty)$.
\end{itemize}

In this case, the boundary terms in~\eqref{eq:def:SF} cancels, and we obtain the simpler formula
\begin{equation*}
     \boxed{ \Sf(A_t, E, t \in [0, 1]) =  \sum_{i=1}^{M-1} \rank \, P_{(E+a_i, E + a_{i+1})} (A_{t_i}) } \quad \text{for} \quad A \in \cC_I, \quad \text{and} \quad E \in I.
\end{equation*}

Now, the results of the previous Lemma holds, without the endpoints constraint. Mimicking the proof of Lemma~\ref{lem:stability_SF_ball} gives the following.
\begin{lemma}
Let $A_t$ be a continuous operator family in $\cC_I$, and let $E \in I \setminus \sigma_{\ess}(\{ A_t \})$. There is $\varepsilon > 0$ so that, for any norm-resolvent continuous family $(B_t)$ of self-adjoint operators satisfying $d(A_t, B_t) < \varepsilon$ for all $t \in [0, 1]$ and such that $A_t + B_t \in \cC_I$, we have $\Sf(A_t, E, [0, 1]) = \Sf(B_t, E, [0, 1])$.
\end{lemma}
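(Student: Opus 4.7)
The plan is to mimic the proof of Lemma~\ref{lem:stability_SF_ball} while exploiting the simplified formula
\[
\Sf(A_t, E, [0, 1]) = \sum_{i=1}^{M-1} \rank \, P_{(E+a_i, E + a_{i+1})} (A_{t_i}),
\]
which holds for paths in $\cC_I$ and for $E \in I$. The key point is that, compared to the general formula~\eqref{eq:def:SF}, the boundary terms $\dim \Ker(A_0 - E) - \dim \Ker(A_1 - E)$ have dropped out, so we no longer need the endpoints of $(A_t)$ and $(B_t)$ to coincide.

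First, I would pick a subdivision $0 = t_0 < t_1 < \cdots < t_M = 1$ together with widths $a_1, \ldots, a_M \ge 0$ satisfying condition~\eqref{eq:condition_SF}, so that this formula computes $\Sf(A_t, E, [0,1])$. For each fixed $1 \le i \le M$ and each $t \in [t_{i-1}, t_i]$, Lemma~\ref{lem:continuity_spectral_projection} yields some $\varepsilon(t) > 0$ such that the spectral projector $P_{(E - a_i, E + a_i)}$ is continuous and of constant rank on $\{B : d(A_t, B) < \varepsilon(t)\}$. The family of open sets $U(t) := \{ t' \in [t_{i-1}, t_i] : d(A_t, A_{t'}) < \tfrac{1}{2} \varepsilon(t) \}$ covers the compact interval $[t_{i-1}, t_i]$ by norm-resolvent continuity of $(A_t)$, so I extract a finite subcover and define $\varepsilon_i$ as half the minimum of the corresponding $\varepsilon(t)$'s. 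Setting $\varepsilon := \min_{1 \le i \le M} \varepsilon_i$, the triangle inequality gives, for any family $(B_t)$ with $d(A_t, B_t) < \varepsilon$, that $E \pm a_i$ stays in the resolvent set of $B_t$ throughout $[t_{i-1}, t_i]$, and that
\[
\rank \, P_{(E - a_i, E + a_i)}(B_t) = \rank \, P_{(E - a_i, E + a_i)}(A_t)
\]
for all $t \in [t_{i-1}, t_i]$. In particular, the same subdivision $(t_i)$ with the same widths $(a_i)$ satisfies condition~\eqref{eq:condition_SF} for $(B_t)$, hence is admissible for computing $\Sf(B_t, E, [0,1])$.

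Now I would invoke the hypothesis $B_t \in \cC_I$ (reading the statement with the evident correction) together with $E \in I$, so that the simplified quasi-periodic formula applies to $(B_t)$ as well. Evaluating both spectral flows at the interior points $t_1, \ldots, t_{M-1}$ gives
\[
\Sf(B_t, E, [0,1]) = \sum_{i=1}^{M-1} \rank \, P_{(E+a_i, E + a_{i+1})} (B_{t_i}) = \sum_{i=1}^{M-1} \rank \, P_{(E+a_i, E + a_{i+1})} (A_{t_i}) = \Sf(A_t, E, [0,1]),
\]
where the middle equality uses that $E + a_i, E + a_{i+1} \notin \sigma(A_{t_i}) \cup \sigma(B_{t_i})$ and the fact that the finite-rank spectral projectors $P_{(E+a_i, E + a_{i+1})}(A_{t_i})$ and $P_{(E+a_i, E + a_{i+1})}(B_{t_i})$ have equal rank, which follows from Lemma~\ref{lem:continuity_spectral_projection} applied at each interior point $t_i$.

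The main (and essentially only) subtlety is that in the general formula the endpoint terms $\dim \Ker(A_j - E)$, $j = 0, 1$, are not controlled by a small norm-resolvent perturbation, and this is precisely why Lemma~\ref{lem:stability_SF_ball} required matching endpoints; restricting to $\cC_I$ dispatches this obstruction by removing those boundary contributions altogether. No further compactness or homotopy argument is needed, since the finite subdivision constructed above is uniform over the $\varepsilon$-ball of paths.
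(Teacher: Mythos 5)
Your proposal is correct and follows exactly the route the paper intends: the paper's own justification is simply ``mimicking the proof of Lemma~\ref{lem:stability_SF_ball}'', and you have carried out that mimicry faithfully, correctly identifying that the endpoint-matching hypothesis of Lemma~\ref{lem:stability_SF_ball} is replaced by the cancellation of the boundary terms $\dim\Ker(A_0-E)-\dim\Ker(A_1-E)$ for paths in $\cC_I$ (and rightly reading ``$A_t+B_t\in\cC_I$'' as the evident typo for ``$B_t\in\cC_I$'').
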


From this stability Lemma, we easily deduce the counterpart of Lemma~\ref{lem:stability_SF}.

\begin{lemma} \label{lem:stability_SF_periodic}
    Let $(s,t) \mapsto (A_t^{(s)})$ be a norm-resolvent continuous family of self-adjoint operators, such that, for all $s$, the family $t \mapsto A_t^{(s)}$ is in $\cC_I$. Let $E \in I$, and assume that $\Sf(A_t^{(s)}, E, t \in [0, 1])$ is well-defined for all $s$. Then this spectral flow is independent of $s$.
\end{lemma}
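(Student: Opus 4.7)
The plan is to reduce this global statement to the local stability result stated just above (the preceding stability lemma for families in $\cC_I$), via a standard compactness/connectedness argument in the homotopy parameter $s$. The essential ingredient is that the map $(s,t) \mapsto A_t^{(s)}$ is jointly norm-resolvent continuous on the compact set $[0,1]^2$, hence uniformly continuous for the metric $d$ defined earlier in the appendix.

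First, for each $s_0 \in [0,1]$, the preceding stability lemma applied to the path $t \mapsto A_t^{(s_0)}$ (which lies in $\cC_I$ with $E \in I \setminus \sigma_\ess(\{A_t^{(s_0)}\})$) yields an $\varepsilon(s_0) > 0$ with the following property: any norm-resolvent continuous path $(B_t)_{t\in[0,1]}$ in $\cC_I$ satisfying $\sup_{t \in [0,1]} d(A_t^{(s_0)}, B_t) < \varepsilon(s_0)$ obeys $\Sf(B_t, E, [0,1]) = \Sf(A_t^{(s_0)}, E, [0,1])$.

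Next, by the joint uniform continuity of $(s,t) \mapsto A_t^{(s)}$ on $[0,1]^2$, there is $\delta(s_0) > 0$ such that $|s' - s_0| < \delta(s_0)$ implies $\sup_{t \in [0,1]} d(A_t^{(s_0)}, A_t^{(s')}) < \varepsilon(s_0)$. Since by hypothesis the path $t \mapsto A_t^{(s')}$ still belongs to $\cC_I$ and the spectral flow at $E$ is well-defined, the preceding lemma gives $\Sf(A_t^{(s')}, E, [0,1]) = \Sf(A_t^{(s_0)}, E, [0,1])$ for every such $s'$. In other words, the function $s \mapsto \Sf(A_t^{(s)}, E, [0,1])$ is locally constant on $[0,1]$. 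Since $[0,1]$ is connected and the spectral flow is integer-valued, any locally constant $\Z$-valued map on $[0,1]$ is globally constant, which yields the claim.

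The main (minor) obstacle is simply verifying joint uniform continuity from the stated norm-resolvent continuity of the two-parameter family; this is immediate from compactness of $[0,1]^2$ under the convention that ``norm-resolvent continuous family'' means joint continuity of $(s,t) \mapsto (\ri - A_t^{(s)})^{-1}$. Once this is in hand, the argument is essentially formal, mirroring the proof of Lemma~\ref{lem:stability_SF} but using the sharper endpoint-free stability result available for paths in $\cC_I$.
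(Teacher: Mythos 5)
Your proof is correct and follows essentially the same route as the paper: the authors also deduce this lemma from the preceding endpoint-free stability result by a compactness/subdivision argument in the homotopy parameter $s$ (they phrase it as "taking a suitable subdivision in $s$," which is exactly your locally-constant-plus-connectedness argument). Your version just spells out the uniform continuity step that the paper leaves implicit.
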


In particular, using Lemma~\ref{lem:example_nrc} stating that $\lambda \mapsto A - \lambda$ is continuous, we get the following.

\begin{lemma} \label{lem:SF_indepedent_of_E}
Let $(A_t) \in \cC_I$, and let $g \subset \R \setminus \sigma_\ess( \{ A_t \})$ be an essential gap of the family $(A_t)$, so that $g \subset I$. Then the spectral flow $\Sf(A_t, E, [0, A])$ is independent of $E$ in the gap $g$.
\end{lemma}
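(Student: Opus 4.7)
The plan is to deduce the statement from Lemma~\ref{lem:stability_SF_periodic} via a shift homotopy that effectively moves the energy $E$ along the gap $g$. Fix $E_0, E_1 \in g$ and set $c := E_1 - E_0$. For $s \in [0,1]$ I would introduce the family $B_t^{(s)} := A_t - sc$. By Lemma~\ref{lem:example_nrc} and the assumed continuity of $t \mapsto A_t$, the map $(s,t) \mapsto B_t^{(s)}$ is jointly norm-resolvent continuous. The key identity, which is immediate from the definition~\eqref{eq:def:SF} together with $P_{(E_0+a,E_0+b)}(A_{t_i}-sc) = P_{(E_0+sc+a,E_0+sc+b)}(A_{t_i})$, is
\[
\Sf(B_t^{(s)}, E_0, [0,1]) = \Sf(A_t, E_0 + sc, [0,1]),
\]
so that the endpoints $s=0$ and $s=1$ are precisely the two spectral flows to be compared.

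Next I would verify the hypotheses of Lemma~\ref{lem:stability_SF_periodic} applied to this homotopy, with the common set $J := \bigcap_{s \in [0,1]} (I - sc)$ playing the role of $I$. Three points need checking: (i) that $(B_t^{(s)}) \in \cC_J$ for every $s$, which follows because $(A_t) \in \cC_I$ gives $(B_t^{(s)}) \in \cC_{I - sc}$ and $J \subseteq I - sc$; (ii) that $E_0 \in J$, which unfolds to the statement that the segment $\{E_0 + sc : s \in [0,1]\}$ is contained in $I$; and (iii) that $E_0 \notin \sigma_\ess(\{B_t^{(s)}\})$ for every $s$, which, since $\sigma_\ess(B_t^{(s)}) = \sigma_\ess(A_t) - sc$, reduces to $E_0 + sc \notin \sigma_\ess(\{A_t\})$. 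Both (ii) and (iii) rest on the same geometric observation: the segment from $E_0$ to $E_1$ lies in $g$ by connectedness of the gap, and $g \subset I$ avoids the essential spectrum by hypothesis.

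With these verifications in hand, Lemma~\ref{lem:stability_SF_periodic} yields that $s \mapsto \Sf(B_t^{(s)}, E_0, [0,1])$ is constant on $[0,1]$. Evaluating at $s = 0$ and $s = 1$ and applying the key identity gives
\[
\Sf(A_t, E_0, [0,1]) = \Sf(B_t^{(0)}, E_0, [0,1]) = \Sf(B_t^{(1)}, E_0, [0,1]) = \Sf(A_t, E_1, [0,1]),
\]
which is the desired conclusion as $E_0, E_1 \in g$ are arbitrary.

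The main subtlety is the existence of a single set $J \ni E_0$ on which the $\cC_J$ condition holds uniformly across all $s \in [0,1]$. Without the interval property of $g$, or more precisely without knowing that the entire segment between $E_0$ and $E_1$ remains in $I$, the intersection defining $J$ could fail to contain $E_0$ and Lemma~\ref{lem:stability_SF_periodic} would not apply. This is exactly where the assumption that $g$ is a connected component of $\R \setminus \sigma_\ess(\{A_t\})$ sitting inside $I$ is used.
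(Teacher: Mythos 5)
Your proof is correct and is exactly the argument the paper intends: the lemma is stated in the paper as an immediate consequence of Lemma~\ref{lem:stability_SF_periodic} combined with Lemma~\ref{lem:example_nrc} (continuity of $\lambda \mapsto A - \lambda$), i.e.\ the shift homotopy $A_t - s(E_1 - E_0)$. Your verification that the segment $[E_0, E_1]$ stays in $g \subset I$ and outside $\sigma_\ess(\{A_t\})$, so that a common interval $J$ works for all $s$, correctly fills in the details the paper leaves implicit.
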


Finally, we state the counterpart of Lemma~\ref{lem:stability_SF_compact}.
\begin{lemma} \label{lem:stability_SF_compact_periodic}
Let $(K_t)$ be a continuous family of symmetric compact operators such that, for all $s \in [0, 1]$, we have $(A_t + s K_t) \in \cC_I$. Then, for $E \in I$, we have $\Sf(A_t, E, [0, 1]) = \Sf(A_t + K_t, E, [0, 1])$.
\end{lemma}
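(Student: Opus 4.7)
The plan is to realize $A_t + K_t$ as the endpoint of a homotopy starting at $A_t$ inside the class of operator paths for which Lemma~\ref{lem:stability_SF_periodic} applies, and conclude by invariance of the spectral flow under that homotopy. Concretely, define the two-parameter family
\[
    A_t^{(s)} := A_t + s K_t, \qquad (s,t) \in [0,1] \times [0,1],
\]
so that $A_t^{(0)} = A_t$ and $A_t^{(1)} = A_t + K_t$, and verify the three hypotheses needed to apply Lemma~\ref{lem:stability_SF_periodic}: (i) joint norm-resolvent continuity in $(s,t)$, (ii) $t \mapsto A_t^{(s)} \in \cC_I$ for every fixed $s$, and (iii) $E \notin \sigma_{\rm ess}(\{A_t^{(s)}\}_t)$ for every fixed $s$, so that the spectral flow at $E$ is well-defined for each path.

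Point (ii) is exactly the standing assumption of the lemma. For (i), I would use the second resolvent identity
\[
    \cR(\ri, A_t^{(s)}) - \cR(\ri, A_{t'}^{(s')}) = \cR(\ri, A_t^{(s)}) \bigl[(s-s')K_t + s'(K_t - K_{t'}) + (A_t - A_{t'})\bigr] \cR(\ri, A_{t'}^{(s')})
\]
interpreted appropriately (the last difference being absorbed into a resolvent-difference term via the identity used in Lemma~\ref{lem:example_nrc}). Combined with the uniform bound $\|\cR(\ri,\cdot)\|_{\rm op}\le 1$ for self-adjoint operators, the norm-continuity of $t\mapsto K_t$, and the norm-resolvent continuity of $t \mapsto A_t$, this yields joint continuity of $(s,t) \mapsto A_t^{(s)}$ in the norm-resolvent metric. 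For (iii), Weyl's theorem gives $\sigma_{\rm ess}(A_t + sK_t) = \sigma_{\rm ess}(A_t)$ for each $(s,t)$, since $sK_t$ is compact and symmetric; taking unions and closures,
\[
    \sigma_{\rm ess}\bigl(\{A_t^{(s)}\}_{s,t}\bigr) \;=\; \overline{\bigcup_{s,t}\sigma_{\rm ess}(A_t + sK_t)} \;=\; \overline{\bigcup_{t}\sigma_{\rm ess}(A_t)} \;=\; \sigma_{\rm ess}(\{A_t\}),
\]
which, by assumption, avoids $E$.

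With (i)--(iii) in hand, Lemma~\ref{lem:stability_SF_periodic} applies to the homotopy $s \mapsto (t\mapsto A_t^{(s)})$ and shows that $s \mapsto \Sf(A_t^{(s)}, E, [0,1])$ is constant on $[0,1]$. Evaluating at $s=0$ and $s=1$ produces the claimed equality. There is no real obstacle in this argument: the delicate content — avoiding the closure of essential gaps along a homotopy — has already been encoded into the quasi-periodic stability lemma, and the role of the hypothesis $A_t + sK_t \in \cC_I$ is precisely to guarantee that the class $\cC_I$ is preserved along the straight-line path, a condition which, unlike the corresponding condition for general bounded perturbations, cannot be dispensed with even though $K_t$ is compact.
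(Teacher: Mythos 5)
Your proof is correct and is essentially the paper's intended argument: the paper obtains this lemma exactly by applying the homotopy-invariance result (Lemma~\ref{lem:stability_SF_periodic}) to the linear interpolation $A_t^{(s)} = A_t + sK_t$, with Weyl's theorem guaranteeing that the essential spectrum, and hence the well-definedness of the spectral flow at $E$, is unchanged along the homotopy. Your verification of joint norm-resolvent continuity and of the $\cC_I$ hypothesis at each $s$ fills in the details at the same level the paper leaves implicit, so there is nothing further to add.
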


\section*{Acknowledgements}
The research leading to these results has received funding from CNRS, AAP IRL 2022, and from ANID, Chile, through the CMM Basal grant FB210005 and through Fondecyt Project 11220194. We thank the anonymous referees for the thorough reading of the manuscript and their suggestions that greatly improved the paper.

\section*{Statements}
All authors declare that they have no conflicts of interest. The Python code used to generate the figures is available at \url{https://gitlab.com/davidgontier/softwall_jacobimatrix}.

\bibliographystyle{siam}
\bibliography{biblio}

\end{document}